\providecommand{\algorithmname}{Algorithm}
\DeclareRobustCommand{\lyxadded}[3]{{\color{lyxadded}{}#3}}
\DeclareRobustCommand{\lyxdeleted}[3]{{\color{lyxdeleted}\lyxsout{#3}}}
\DeclareRobustCommand{\lyxsout}[1]{\ifx\\#1\else\sout{#1}\fi}
\theoremstyle{plain}
\newtheorem{lem}{\protect\lemmaname}
\theoremstyle{plain}
\newtheorem{cor}{\protect\corollaryname}
\theoremstyle{plain}
\newtheorem{prop}{\protect\propositionname}
\theoremstyle{plain}
\newtheorem{thm}{\protect\theoremname}
\author{


\IEEEauthorblockN{Wenjie~Liu and Junting~Chen}

\IEEEauthorblockA{School of Science and Engineering (SSE) \\
Future Network of Intelligence Institute (FNii) \\ 
The Chinese University of Hong Kong, Shenzhen, Guangdong 518172, China}

}
\newcommand{\newac}{\newacronym}
\newcommand{\ac}{\gls}
\newcommand{\Ac}{\Gls}
\newcommand{\acpl}{\glspl}
\renewcommand{\lyxdeleted}[3]{{\color{lyxdeleted}{}}}
\renewcommand{\lyxadded}[2]{{\color{black}{}}}
\providecommand{\corollaryname}{Corollary}
\providecommand{\lemmaname}{Lemma}
\providecommand{\propositionname}{Proposition}
\providecommand{\theoremname}{Theorem}
\begin{document}
\title{UAV-aided Radio Map Construction Exploiting Environment Semantics}

\maketitle
%
%


\newcommand*{\SINGLECOLUMN}{}

\ifdefined\SINGLECOLUMN
	\setkeys{Gin}{width=0.5\columnwidth}
	\newcommand{\figfontsize}{\footnotesize} 
\else
	\setkeys{Gin}{width=1.0\columnwidth}
	\newcommand{\figfontsize}{\normalsize} 
\fi

\begin{abstract}
This paper constructs a full dimensional (6D) radio map to predict
the channel gain between any transmitter location and any receiver
location based on \ac{rss} measurements between low-altitude aerial
nodes and ground nodes. The main challenge is how to describe the
signal strength attenuation due to the blockage from the environment.
Conventional interpolation-type approaches fail to exploit the close
relation between the radio map and the geometry of the environment.
This paper proposes to construct radio maps by first estimating and
constructing a multi-class 3D virtual obstacle map that describes
the geometry of the environment with radio semantics. Mathematically,
a least-squares problem is formulated to jointly estimate the virtual
obstacle map and the propagation parameters. This problem is found
to have a partial quasiconvexity that leads to the development of
an efficient parameter estimation and radio map construction algorithm.
Numerical experiments confirm that the proposed method substantially
reduces the amount of measurement required for achieving the same
radio map accuracy. It is also demonstrated that in a \ac{uav}-aided
relay communication scenario, a radio map assisted approach for \ac{uav}
placement can achieve more than 50\% capacity gain.
\end{abstract}

\begin{IEEEkeywords}
Radio map, environment-aware, radio semantics, unmanned aerial vehicle
(UAV)
\end{IEEEkeywords}

\glsresetall

\section{Introduction}

It becomes increasingly important for wireless communication networks
to learn about the communication environment. For example, millimeter-wave,
terahertz, and integrated aerial and terrestrial communications favor
a \ac{los} propagation condition. Thus, knowing the radio environment
and even the geometry of the surrounding may help user selection,
beamforming, and position optimization for communication nodes \cite{LimChoSimKim:J20,MoHuaXu:J19,ZhaZha:J20,EsrGanGes:J20}.
A recent trend is to leverage \emph{radio maps} for optimizing communication
networks, where a radio map is a data model that captures the location-dependent
wireless channel quality between a transmitter and a receiver \cite{ZenXu:M21,XiaWanXuXu:J22,ZenXuJinZha:J21,HuCaiLiuYu:J20}.

Recent studies have exploited radio maps for \ac{uav} position optimization
and trajectory planning for data harvesting, blockage-aware wireless
power transfer, and network localization \cite{MoHuaXu:J19,ZhaZha:J20,EsrGanGes:J20,ZenXu:M21,XiaWanXuXu:J22,ZenXuJinZha:J21,HuCaiLiuYu:J20}.
For example, in a typical scenario of \ac{uav}-to-ground communication
in dense urban environment, there are buildings and trees that probably
appear at arbitrary locations and block the air-to-ground signal.
Most earlier works addressed this issue using a \emph{probabilistic}
model, which describes the probability of the \ac{los} condition
as a function of the elevation angle at the ground node \cite{AlhKanJam:C14,YouZha:J20}.
By contrast, radio map based models can adapt to specific geography
environments and determine the actual location-dependent \ac{los}
condition and channel gain. Some recent study on the \ac{uav} relay
communication further revealed that, when a radio map is partially
available, the throughput performance of a position-optimized \ac{uav}
relay network can be substantially enhanced as compared to the methods
based on probabilistic models \cite{MoHuaXu:J19,ZenXuJinZha:J21,HuCaiLiuYu:J20}.

However, little is known on how to efficiently construct a radio
map. The following challenges need to be addressed. First, it usually
requires a huge amount of measurement data for radio map construction
due to the ample degrees of freedom for a propagation channel. A \emph{full}
dimensional radio map for a narrowband single antenna system may still
need 6 dimensions to describe the channel quality between any transmitter
location and any receiver location in 3D. Second, it is also costly
to store, transfer, and share radio maps among communication nodes.
Third, it is essential, yet challenging, to embed the environment
information to radio maps as wireless channels depend on the geography
environment via a complicated mechanism involving signal reflection,
diffraction, and scattering. As to be discussed below, pure data-driven
environment-blind approaches may result in low efficiency of utilizing
the measurement data; classical channel models, such as the simplified
probabilistic \ac{los} model, may suffer from poor prediction performance;
and ray-tracing methods suffer from the overwhelming requirement for
computing capability and high precision city map information.

This paper attempts to build a full dimensional radio map from \ac{rss}
measurements between scattered low altitude aerial nodes and ground
nodes. The core idea is to reconstruct the geometry of the propagation
environment with \emph{radio semantics} embedded, such that one can
infer whether the propagation is under \ac{los}, slightly obstructed,
or in deep shadow, and exploit this information to predict the channel
gain between any two wireless nodes. Specifically, we build a multi-class
3D environment model, where the environment does \emph{not} necessarily
represent the visual appearance of buildings, but it is a model that
captures the \emph{semantic} meaning of how the signal strength will
be affected by the surrounding. For instance, a pillar made of plastic
may impose less attenuation on radio signals than a concrete pillar
may do. Mathematically, we construct\emph{ }multi-class 3D virtual
obstacles to describe the propagation environment for any pair of
wireless nodes in 3D, and based on this, we construct an environment-aware
radio map. We show that the proposed approach not only reconstructs
the geometry of the propagation environment, but also achieves better
accuracy in radio map construction than conventional interpolation-type
methods. Moreover, as the information is compressed in the proposed
multi-class 3D virtual obstacle model, it becomes easier to convey
and share radio maps in the network.

\subsection{Related Work}

\emph{Data-driven approaches:} Radio maps have been studied a lot
for indoor localization \cite{JiaMaLiuDou:J16,ZhaMa:J21}. In these
scenarios, the radio signatures are sampled through \ac{rss} measurement
over a 2D area, and the focus there was to handle sparsity and interpolate
the measurement data. Some of these methods include \ac{knn} \cite{NiNgu:C08,DenJiaZhoCui:C18},
sparse matrix or tensor processing \cite{ZhaMa:J21}, and Kriging
\cite{BraJemForMou:J16,SatFuj:J17}. Based on the recent advance of
image processing, deep learning for radio map construction was also
investigated in \cite{MasFarImr:C19,MasMarCheLi:J19,TegRom:J20,LevYapKutCai:J21,ShrFuHon:J22}.
Note that these approaches were mainly designed for 2D radio maps,
and they may not be easily extended to our scenario of interest.

\emph{Model-based approaches:} Conventional channel models first label
the local area into fine categories, such as urban and sub-urban,
and then select a parametric model from the fine category \cite{AlhKanJam:C14,YouZha:J20}.
From the view of signal propagation, some works \cite{FanZhoWei:J19,EsrGanGes:C21}
classify signals into \ac{los} and \ac{nlos} and then adopts path
loss model with a city map model. However, the limitation is that
the categorization is usually objective, and there are usually a limited
number of predefined models and sets of parameters to choose from.

\emph{Ray-tracing:} These methods are based on the 3D model of the
environment and compute the physical propagation paths through analyzing
any possible reflection, diffraction, and scattering \cite{EleElsYou:C11,SugSasOsaFur:J20}.
However, they are not only computationally expensive but also very
sensitive to the precision and the accuracy of the information available,
including the fine 3D model of the structure and its material.\textcolor{red}{}

\subsection{Our Contributions}

The paper aims at addressing the following two main issues:
\begin{itemize}
\item \emph{How to model the radio map with the geometry of radio environment
embedded;}
\item \emph{How to develop efficient algorithms to construct both the 3D
environment and the radio map.}
\end{itemize}

We develop a radio map model that consists of a multi-class 3D virtual
obstacle model to capture the geometry of the radio environment. The
key intuition is that if a link is relatively weak considering its
propagation distance, then there should be one or more obstacles that
block the direct path of the link. With the proposed model, the radio
map construction problem is transformed into a joint estimation problem
of the propagation parameters and the location and height of the 3D
virtual obstacles. A 3D city map is \emph{not} required, although
it can help better initialize the algorithm.

Our earlier work \cite{CheYatGes:C17} clusters the measurement into
different propagation conditions, such as \ac{los} and \ac{nlos},
and the follow-up work \cite{CheEsrGesMit:C17,ZhaChe:C20} designs
a preliminary virtual obstacle model based on the estimated \ac{los}
labels. However, the existing approach is an open-loop method, where
if the \ac{los} label obtained from \cite{CheYatGes:C17} were wrong,
the error would propagate and be magnified in the subsequent steps
in \cite{CheEsrGesMit:C17}. In this paper, we circumvent this limitation
by developing a new model and new algorithm to estimate the \emph{virtual
obstacle map} directly from the measurement data.

The novelty and contribution are summarized as follows:
\begin{itemize}
\item We develop a novel radio map model that consists of a parametric sub-model
that captures the geometry of the propagation environment with radio
semantics and a non-parametric sub-model that captures the residual
of the shadowing. With such a framework, the model performs well with
both small or large amount of training data. 
\item We formulate a least-squares estimation problem for the radio map
construction. While the problem is non-convex with degenerated gradient,
we discover and prove the \emph{partial} \emph{quasiconvexity} of
the problem; \lyxdeleted{User}{Thu Sep  1 09:54:13 2022}{ }based on
this theoretical result, we develop an efficient algorithm to construct
the radio map as well as the geometry of the virtual environment.
\item We conduct numerical experiments to verify that the proposed approaches
significantly outperform \ac{knn} and Kriging for radio map construction
using the city data of Shanghai. With the reconstruction of the geometry
of the virtual environment, we also demonstrate the performance advantage
of applying the proposed radio map model to UAV-assisted wireless
communication.
\end{itemize}

The rest of the paper is organized as follows. In Section \ref{sec:radio-map-model},
the multi-degree channel and multi-class virtual obstacle model are
established. Section \ref{sec:Radio-Map-Learning} and Section \ref{sec:Reconstruction-the-Shadowing}
develop the algorithm and establish theoretical results. Two applications
and their numerical results are demonstrated in Section \ref{sec:Applications-and-Numerical}.
Conclusions are drawn in Section \ref{sec:Conclusion}.

\section{Radio Map Model\label{sec:radio-map-model}}

Consider wireless communications between a ground user and a low altitude
aerial node over a dense urban environment. Typically, the aerial
node can be a relay \ac{bs} carried by a \ac{uav} or a sensing device
installed on a high tower or on the rooftop. The focus of this paper
is to construct a radio map to characterize the channel gain between
every terrestrial (ground) user position $\mathbf{p}_{\text{u}}$
and every aerial node (drone) position $\mathbf{p}_{\text{d}}$ pair
based on a limited number of measurement samples, where $\mathbf{p}_{\text{u}},\mathbf{p}_{\text{d}}\in\mathbb{R}^{3}$.

\subsection{Radio Map with Environment Semantics}

Denote the \emph{communication link} $\mathbf{p}=(\mathbf{p}_{\text{u}},\mathbf{p}_{\text{d}})\in\mathbb{R}^{6}$
as the positions of the ground user and aerial node pair. The signal
that propagates between $\mathbf{p}_{\text{u}}$ and $\mathbf{p}_{\text{d}}$
consists of multiple paths which experience penetration, reflection,
diffraction, and scattering according to the specific environment.
We aim at building a radio map $g(\mathbf{p})$ to capture the large-scale
effect of the channel gain, including the path loss and the shadowing
for each link characterized by the $6$-dimensional location $\mathbf{p}$.

Recall a classical channel model $\sum_{k=0}^{1}\big(\beta_{k}+\alpha_{k}\log_{10}\|\mathbf{p}_{\text{u}}-\mathbf{p}_{\text{d}}\|_{2}\big)\mathbb{I}\{\mathbf{p}\in\mathcal{D}_{k}\}+\xi$,
which is based on the $0$-$1$ categorization of links being in
the \ac{los} region $\mathbf{p}\in\mathcal{D}_{0}$ or in \ac{nlos}
region $\mathbf{p}\in\mathcal{D}_{1}$; and $\xi$ is a random variable
for the shadowing. We extend such a classical model to a $(K+1)$-degree
model with soft categorization. Specifically, the proposed radio map
model $g(\mathbf{p})$ consists of a \emph{deterministic radio map}
$\bar{g}(\mathbf{p};\bm{\theta},\mathbf{H})$ and a \emph{residual
shadowing map} $\xi(\mathbf{p})$: 
\begin{equation}
g(\mathbf{p})=\bar{g}(\mathbf{p};\bm{\theta},\mathbf{H})+\xi(\mathbf{p})\label{eq:channel-model}
\end{equation}
where the deterministic radio map
\begin{equation}
\bar{g}(\mathbf{p};\bm{\theta},\mathbf{H})\triangleq\sum_{k=0}^{K}\big(\beta_{k}+\alpha_{k}\log_{10}\|\mathbf{p}_{\text{u}}-\mathbf{p}_{\text{d}}\|_{2}\big)S_{k}(\mathbf{p};\mathbf{H})\label{eq:channel-model-nonoi}
\end{equation}
is parameterized by $\bm{\theta}=\{\alpha_{k},\beta_{k}\}{}_{k=0}^{K}$
for each path loss sub-model $\beta_{k}+\alpha_{k}\log_{10}\|\mathbf{p}_{\text{u}}-\mathbf{p}_{\text{d}}\|_{2}$
under different degrees of signal obstruction; the term $S_{k}(\mathbf{p};\mathbf{H})$
models the likelihood that link $\mathbf{p}$ experiences in the $k$th
degree of signal obstruction with parameter $\mathbf{H}$, to be explained
in the next subsection, to capture the semantic information of the
propagation environment. The component $\xi(\mathbf{p})$ is a random
process assumed with zero mean and bounded variance.

The advantage of the proposed model (\ref{eq:channel-model}) is as
follows: First, as suggested by measurement data that there are rarely
sharp edges between \ac{los} and \ac{nlos}, a probabilistic function
$S_{k}(\mathbf{p};\mathbf{H})$ approximates the reality better than
an indicator function does in a classical model. Second, intuitively,
the more segments $\mathcal{D}_{k}$ to estimate, the more precise
the model $\bar{g}(\mathbf{p};\bm{\theta},\mathbf{H})$ may approximate
the reality, resulting in a lower variance for the random component
$\xi(\mathbf{p})$. Third, the geometry of the radio propagation environment
will be explicitly built into the model $\bar{g}(\mathbf{p};\bm{\theta},\mathbf{H})$
to assist radio map reconstruction, as to be shown later.

\subsection{Virtual Obstacle Model\label{subsec:Multi-class-Virtual-Obstacle}}

We propose to impose a \emph{multi-class virtual obstacle model} for
$S_{k}(\mathbf{p};\mathbf{H})$, the likelihood of $\mathbf{p}$ being
in the $k$th propagation region $\mathcal{D}_{k}$ to describe the
environment semantics. The general idea is to employ an \emph{equivalent}
virtual obstacle at a certain \emph{location}, with appropriate \emph{height}
and \emph{type} to intersect with the direct path $\mathbf{p}$ to
represent the likelihood $S_{k}(\mathbf{p};\mathbf{H})$. Hence, the
virtual obstacle may not be mapped to a building in the reality, but
serves as a geometry representation of the radio propagation environment.

For example, as illustrated in Fig. \ref{fig:sigray2clas}, if a link
$\mathbf{p}$ is in deep shadow, then we place a solid virtual obstacle
to intersect the direct path of $\mathbf{p}$; on the other hand,
if $\mathbf{p}$ is in light shadow, then some light virtual obstacle
is in place to intersect the direct path.
\begin{figure}
\begin{centering}
\includegraphics[width=0.5\columnwidth]{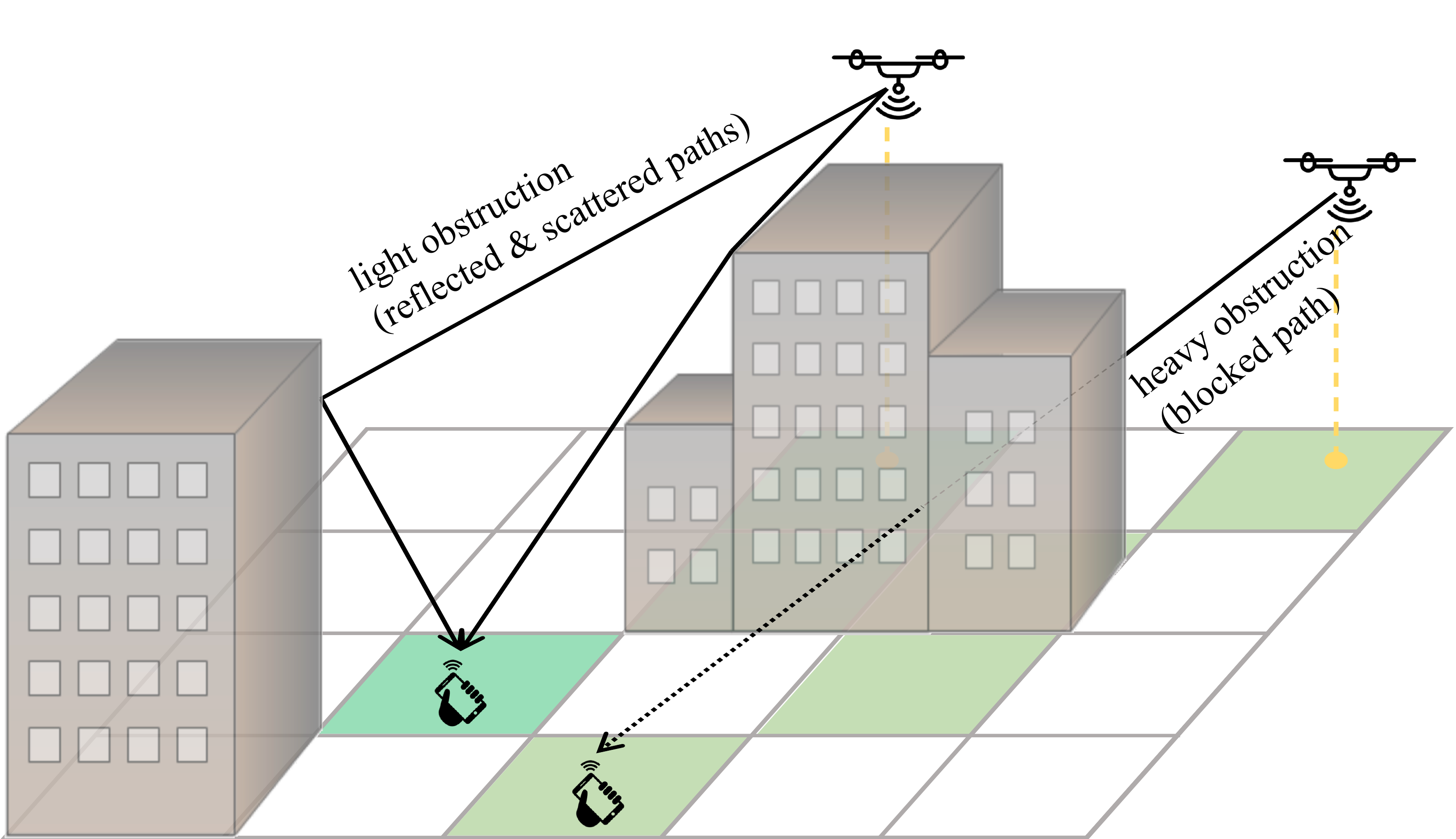}\includegraphics[width=0.5\columnwidth]{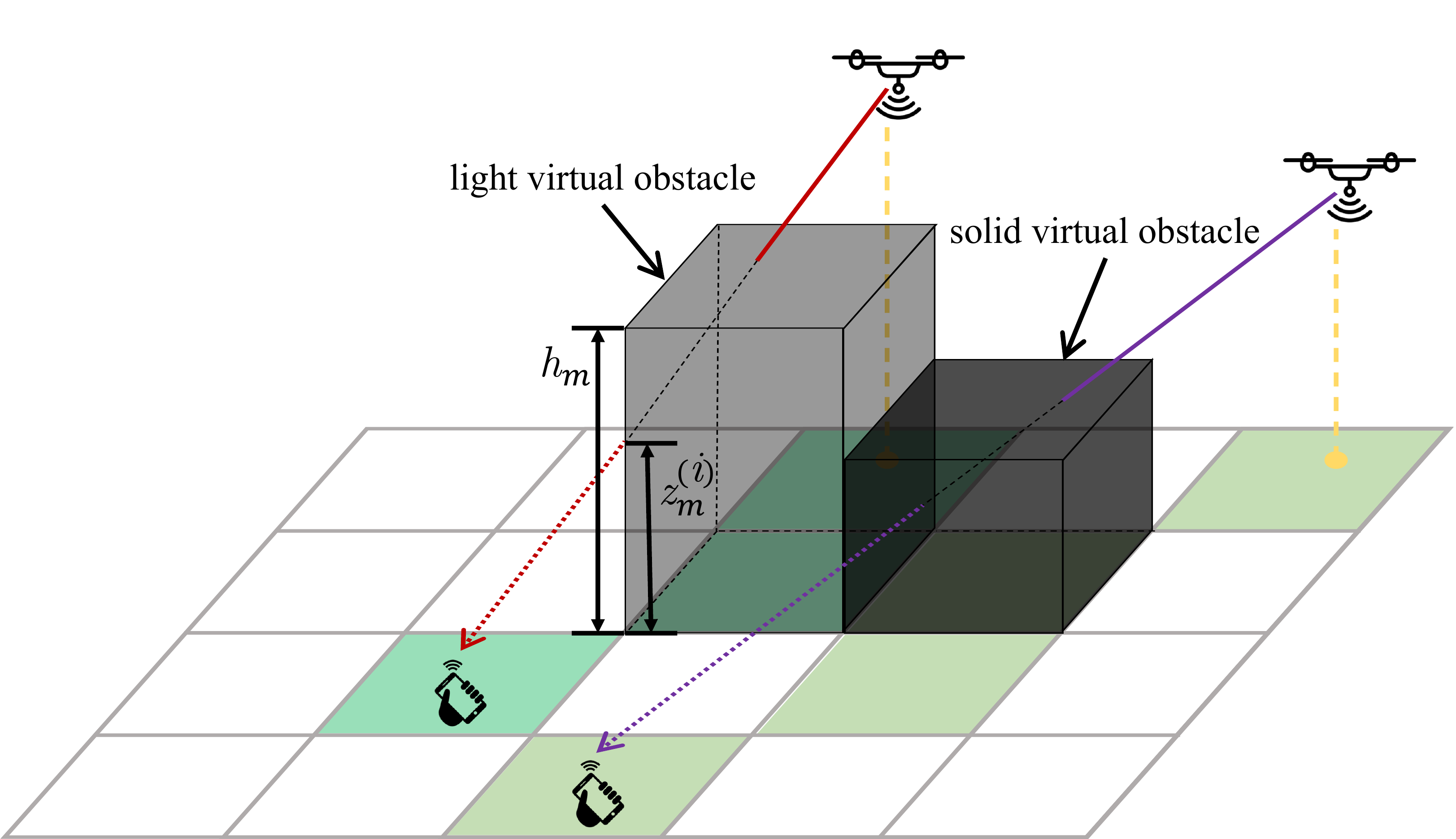}
\par\end{centering}
\caption{Left: the multi-path propagation in reality; right: a virtual obstacle
structure that captures the semantic information of the propagation
environment.}
\label{fig:sigray2clas}
\end{figure}

Specifically, we first partition the target ground area into $M$
grid cells. A common partition approach is to use the uniform square
grid with appropriate spacing according to the resolution requirement
and the amount of measurement data. Then, a virtual obstacle map can
be represented by an $M\times K$ matrix $\mathbf{H}$, where the
$k$th column $\mathbf{h}_{k}$ of $\mathbf{H}$ represents the height
of class-$k$ virtual obstacle, and the $m$th entry of $\mathbf{h}_{k}$
represents the height of the virtual obstacle located at the $m$th
grid cell. With such a notation, the likelihood $S_{k}(\mathbf{p};\mathbf{H})$
of link $\mathbf{p}$ belonging to propagation region $\mathcal{D}_{k}$
can be constructed from the virtual obstacle map $\mathbf{H}$ as
follows.

\subsubsection{Propagation Regions with Hard Boundary}

It holds that $\sum_{k}S_{k}(\mathbf{p};\mathbf{H})=1$ and $S_{k}(\mathbf{p};\mathbf{H})\in\{0,1\}$,
and constructed based on the following rule:
\begin{itemize}
\item $\mathbf{p}\in\mathcal{D}_{0}$, \emph{i.e.}, $S_{0}(\mathbf{p};\mathbf{H})=1$,
if there is no obstacle that intersects with the direct path between
$\mathbf{p}_{\text{u}}$ and $\mathbf{p}_{\text{d}}$.
\item $\mathbf{p}\in\mathcal{D}_{k}$, \emph{i.e.}, $S_{k}(\mathbf{p};\mathbf{H})=1$
for some $k\ge1$, if a class-$k$ obstacle intersects with the direct
path between $\mathbf{p}_{\text{u}}$ and $\mathbf{p}_{\text{d}}$,
while no class-$l$, $l>k$, obstacle intersecting with the direct
path.
\end{itemize}

Mathematically, denote $\mathscr{\mathscr{\mathcal{B}}}^{(i)}$ as
the set of grid cells that are covered by the direct path between
$\mathbf{p}_{\text{u}}^{(i)}$ and $\mathbf{p}_{\text{d}}^{(i)}$.
In other words, if one projects the path $(\mathbf{p}_{\text{u}}^{(i)},\mathbf{p}_{\text{d}}^{(i)})$
onto the ground, then the projected path passes through the grid cells
and only the grid cells in $\mathcal{B}^{(i)}$. For each grid cell
$m\in\mathscr{\mathscr{\mathcal{B}}}^{(i)}$, denote $z_{m}^{(i)}$
as the altitude when the path $(\mathbf{p}_{\text{u}}^{(i)},\mathbf{p}_{\text{d}}^{(i)})$
passes over the grid cell.

It follows from the first rule above that $\mathbf{p}^{(i)}\in\mathcal{D}_{0}$
if $h_{m,k}<z_{m}^{(i)}$ for \emph{all} $m\in\mathcal{B}^{(i)}$
and \emph{all} $0<k\leq K$, \emph{i.e.}, for all the relevant grid
locations $m\in\mathcal{B}^{(i)}$, all classes of obstacles are below
the corresponding critical altitude $z_{m}^{(i)}$; mathematically,
$\mathbb{I}\{\mathbf{p}^{(i)}\in\mathcal{D}_{0}(\mathbf{H})\}=\mathbb{I}\{h_{m,l}<z_{m}^{(i)},\forall m\in\mathscr{\mathscr{\mathcal{B}}}^{(i)},\forall l>0\}$.
From the second rule above, if $\mathbf{p}^{(i)}\in\mathcal{D}_{k}$
for $k>0$, we must have $h_{m,k}\ge z_{m}^{(i)}$ for \emph{some}
class-$k$ obstacle at the $m$th grid cell, mathematically, $\mathbb{I}\{h_{m,k}\ge z_{m}^{(i)},\exists m\in\mathscr{\mathscr{\mathcal{B}}}^{(i)}\}=1$,
and at the same time, we should also have $h_{m,l}<z_{m}^{(i)}$ for
\emph{all} $m\in\mathcal{B}^{(i)}$ and \emph{all} $l\geq k+1$, mathematically,
$\mathbb{I}\{h_{m,l}<z_{m}^{(i)},\forall m\in\mathscr{\mathscr{\mathcal{B}}}^{(i)},\forall l>k\}=1$.
To summarize, it follows that
\begin{equation}
\mathbb{I}\{\mathbf{p}^{(i)}\in\mathcal{D}_{k}(\mathbf{H})\}=\begin{cases}
\underset{m\in\mathscr{\mathscr{\mathcal{B}}}^{(i)}}{\prod}\underset{l>0}{\prod}\mathbb{I}\{h_{m,l}<z_{m}^{(i)}\}, & \textrm{if }k=0,\\
\\
\begin{array}{l}
\big(1-\underset{m\in\mathscr{\mathscr{\mathcal{B}}}^{(i)}}{\prod}(1-\mathbb{I}\{h_{m,k}\ge z_{m}^{(i)}\})\big)\quad\\
\hfill\times\underset{m\in\mathscr{\mathscr{\mathcal{B}}}^{(i)}}{\prod}\underset{l>k}{\prod}\mathbb{I}\{h_{m,l}<z_{m}^{(i)}\},
\end{array} & \textrm{if }k\ge1.
\end{cases}\label{eq:esthidvircon}
\end{equation}
Using the propagation regions defined in (\ref{eq:esthidvircon}),
the likelihood function $S_{k}(\mathbf{p};\mathbf{H})$ in (\ref{eq:channel-model})
can be chosen as $S_{k}(\mathbf{p};\mathbf{H})\triangleq\mathbb{I}\{\mathbf{p}\in\mathcal{D}_{k}(\mathbf{H})\}$.

\subsubsection{Propagation Regions with Soft Boundary}

To allow the likelihood $S_{k}(\mathbf{p};\mathbf{H})$ to take fractional
numbers in $[0,1]$, we extend the hard boundary model by applying
a spatial low-pass filter on the propagation regions $\mathcal{D}_{k}$
defined in (\ref{eq:esthidvircon}). Specifically, given a link $\mathbf{p}$,
we evaluate a set of neighbor positions with offset $\bm{\epsilon}_{j}$
from $\mathbf{p}$. By averaging $\mathbb{I}\{\mathbf{p}+\bm{\epsilon}_{j}\in\mathcal{D}_{k}(\mathbf{H})\}$
with weights $\omega_{j}$, one can obtain the likelihood $S_{k}(\mathbf{p};\mathbf{H})$.
A common choice of filter coefficients $\omega_{j}$ can be obtained
as a function of the distance $\left\Vert \bm{\epsilon}_{j}\right\Vert _{2}$
from $\mathbf{p}$. In this paper, we consider a spatial filter that
consists of a set of $\ensuremath{J}$ uniform grid points $\bm{\epsilon}_{j},j=0,1,\dots,J-1,$
in 6D space centered at the origin with $\bm{\epsilon}_{0}$ chosen
as $\bm{\epsilon}_{0}=\mathbf{0}$, and the weights are chosen as
$\omega_{j}=c\cdot\textrm{exp}(-\left\Vert \bm{\epsilon}_{j}\right\Vert _{2}^{2}/\sigma_{\omega}^{2})$,
where $\sigma_{\omega}$ is a parameter, and $c$ is a normalization
factor such that $\sum_{j=0}^{J-1}\omega_{j}=1$.\footnote{In general, the weights can be designed using a kernel function, where
the smaller $\|\bm{\epsilon}_{j}\|_{2}$, the larger the weight. The
choice of the kernel and its parameters can be determined using a
cross-validation approach.} Thus, the likelihood function $S_{k}(\mathbf{p};\mathbf{H})$ parameterized
by the virtual obstacle map $\mathbf{H}$ is defined as
\begin{equation}
S_{k}(\mathbf{p};\mathbf{H})=\sum_{j=0}^{J-1}\omega_{j}\mathbb{I}\{\mathbf{p}+\bm{\epsilon}_{j}\in\mathcal{D}_{k}(\mathbf{H})\}\label{eq:aveinddis}
\end{equation}
which satisfies $\sum_{k=0}^{K}S_{k}(\mathbf{p};\mathbf{H})=1$.

\section{Radio Map Construction via Environment Mapping\label{sec:Radio-Map-Learning}}

In this section, we jointly estimate the propagation parameter $\bm{\theta}$
and the virtual obstacle map $\mathbf{H}$ for constructing the deterministic
radio map $\bar{g}(\mathbf{p};\bm{\theta},\mathbf{H})$ in (\ref{eq:channel-model-nonoi}).

\subsection{Formulation of the Radio Map Learning Problem\label{subsec:The-Measurement-Model}}

Consider taking measurements at transmit and receive location pairs
$\{\mathbf{p}^{(i)}\}$ and recall $\mathbf{p}=(\mathbf{p}_{\text{u}},\mathbf{p}_{\text{d}})$.
Based on (\ref{eq:channel-model}), the measured \ac{rss} can be
written as 
\begin{equation}
y^{(i)}=\bar{g}(\mathbf{p}^{(i)};\bm{\theta},\mathbf{H})+n^{(i)}\label{eq:recsigmea}
\end{equation}
where $n^{(i)}=\xi(\mathbf{p}^{(i)})+\tilde{n}^{(i)}$ captures both
the random component $\xi(\mathbf{p})$ in (\ref{eq:channel-model})
for the residual shadowing and the measurement noise $\tilde{n}^{(i)}$
which is assumed as independent and identically distributed with zero
mean, variance $\sigma_{\text{n}}^{2}$, and finite fourth-order moment.

The goal of this section is to estimate parameters $\bm{\theta}$
and $\mathbf{H}$ from the set of noisy measurement data $\{(\mathbf{p}^{(i)},y^{(i)})\}_{i=1}^{N}$
obtained from (\ref{eq:recsigmea}). A least-squares problem can be
formulated as follows
\begin{equation}
\mathop{\textrm{minimize}}\limits _{\bm{\theta},\mathbf{H}\succeq\mathbf{0}}\quad f(\bm{\theta},\mathbf{H})\triangleq\frac{1}{N}\sum\limits _{i=1}^{N}\Big[y^{(i)}-\sum_{k=0}^{K}\big(\beta_{k}+\alpha_{k}d(\mathbf{p}^{(i)})\big)S_{k}(\mathbf{p}^{(i)};\mathbf{H})\Big]^{2}\label{eq:formulation}
\end{equation}
where $d(\mathbf{p}^{(i)})\triangleq\log_{10}\|\mathbf{p}_{\text{u}}^{(i)}-\mathbf{p}_{\text{d}}^{(i)}\|_{2}$
is the log-distance between $\mathbf{p}_{\text{u}}^{(i)}$ and $\mathbf{p}_{\text{d}}^{(i)}$
for the $i$th measurement.

Note that the least-squares problem (\ref{eq:formulation}) is difficult
to solve using a standard solver. This is because the problem is \emph{non-convex}
in the joint variable $(\bm{\theta},\mathbf{H})$ and the objective
function $f(\bm{\theta},\mathbf{H})$ is \emph{discontinuous} due
to the indicator functions used in (\ref{eq:channel-model-nonoi})--(\ref{eq:aveinddis}).
To circumvent these difficulties, we will exploit the property discovered
in $f(\bm{\theta},\mathbf{H})$.

\subsection{Asymptotic Consistency of Radio Maps}

Since the objective function $f(\bm{\theta},\mathbf{H})$ in (\ref{eq:formulation})
contains randomness due to the measurement noise, we first find a
\emph{deterministic proxy} for $f(\bm{\theta},\mathbf{H})$ under
large $N$. Denote $\bm{\theta}^{*}$ and $\mathbf{H}^{*}$ as the
true parameters in the measurement model (\ref{eq:recsigmea}) and
consider the following deterministic proxy function
\begin{equation}
\bar{f}(\bm{\theta},\mathbf{H})\triangleq\frac{1}{N}\sum\limits _{i=1}^{N}\Big[\bar{g}(\mathbf{p}^{(i)};\bm{\theta},\mathbf{H})-\bar{g}(\mathbf{p}^{(i)};\bm{\theta}^{*},\mathbf{H}^{*})\Big]^{2}\label{eq:formulnonoi}
\end{equation}
and it is clear that $\bar{f}(\bm{\theta}^{*},\mathbf{H}^{*})=0$.
\begin{lem}[Deterministic Equivalence]
Suppose that the random component $\xi(\mathbf{p}^{(i)})$ in the
measurement model (\ref{eq:recsigmea}) is weakly dependent, i.e.,
the covariance satisfies $\mbox{cov}(\xi(\mathbf{p}^{(i)}),\xi(\mathbf{p}^{(j)}))\to0$,
as $|i-j|\to\infty$ and $N\to\infty$, and moreover, the limit $\lim_{N\to\infty}\frac{1}{N}\sum_{i=1}^{N}\xi(\mathbf{p}^{(i)})^{2}$
exists and is finite.\footnote{If the process $\xi(\mathbf{p})$ is segment-wise second-order stationary
within each propagation segment, then the limit exists if one samples
each propagation segment with a fixed probability, for example, under
uniform sampling over the entire area.} Then, there exists a finite constant $0<C<\infty$, such that
\[
f(\bm{\theta},\mathbf{H})\to\bar{f}(\bm{\theta},\mathbf{H})+C
\]
in probability, for every $(\bm{\theta},\mathbf{H})$, as $N\to\infty$.\label{lem:asyequ}
\end{lem}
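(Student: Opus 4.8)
The plan is to expand the square in the objective of~(\ref{eq:formulation}), substitute the measurement model~(\ref{eq:recsigmea}) with $n^{(i)}=\xi(\mathbf{p}^{(i)})+\tilde{n}^{(i)}$, and separate a ``signal'' part that equals $\bar{f}(\bm{\theta},\mathbf{H})$ from a ``cross'' part and a ``noise'' part. Concretely, with $\Delta^{(i)}\triangleq\bar{g}(\mathbf{p}^{(i)};\bm{\theta}^{*},\mathbf{H}^{*})-\bar{g}(\mathbf{p}^{(i)};\bm{\theta},\mathbf{H})$,
\[
f(\bm{\theta},\mathbf{H})=\underbrace{\frac{1}{N}\sum_{i=1}^{N}(\Delta^{(i)})^{2}}_{=\,\bar{f}(\bm{\theta},\mathbf{H})}+\underbrace{\frac{2}{N}\sum_{i=1}^{N}\Delta^{(i)}n^{(i)}}_{\triangleq\,A_{N}}+\underbrace{\frac{1}{N}\sum_{i=1}^{N}(n^{(i)})^{2}}_{\triangleq\,B_{N}},
\]
so it suffices to show, for each fixed $(\bm{\theta},\mathbf{H})$, that $B_{N}\to C$ and $A_{N}\to 0$ in probability, for a suitable finite constant $C>0$.

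For $B_{N}$, write $n^{(i)}=\xi(\mathbf{p}^{(i)})+\tilde{n}^{(i)}$ to get $B_{N}=\frac{1}{N}\sum_{i}\xi(\mathbf{p}^{(i)})^{2}+\frac{2}{N}\sum_{i}\xi(\mathbf{p}^{(i)})\tilde{n}^{(i)}+\frac{1}{N}\sum_{i}(\tilde{n}^{(i)})^{2}$. The first term converges to a finite constant $C_{\xi}$ by the hypothesis of the lemma; the third converges to $\sigma_{\text{n}}^{2}$ by the strong law of large numbers, since $(\tilde{n}^{(i)})^{2}$ is \ac{iid} with finite mean (it even has finite second moment by the fourth-moment assumption); and the middle term has zero mean and variance $\frac{4}{N^{2}}\sum_{i}\sigma_{\text{n}}^{2}\,\mathbb{E}[\xi(\mathbf{p}^{(i)})^{2}]\le \frac{4\sigma_{\text{n}}^{2}\bar{v}}{N}\to 0$, where $\bar{v}\triangleq\sup_{i}\mathbb{E}[\xi(\mathbf{p}^{(i)})^{2}]<\infty$ by the bounded-variance assumption on the residual shadowing and using the independence of $\tilde{n}^{(i)}$ from $\xi$ and its zero mean; hence this term vanishes in probability by Chebyshev's inequality. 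Thus $B_{N}\to C\triangleq C_{\xi}+\sigma_{\text{n}}^{2}$, which satisfies $0<\sigma_{\text{n}}^{2}\le C<\infty$.

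For $A_{N}$, first observe that for the fixed $(\bm{\theta},\mathbf{H})$ the deterministic quantities $\Delta^{(i)}$ are uniformly bounded, $|\Delta^{(i)}|\le B$: in a bounded deployment region the link distances $\|\mathbf{p}_{\text{u}}^{(i)}-\mathbf{p}_{\text{d}}^{(i)}\|_{2}$ stay in a compact subset of $(0,\infty)$, so $d(\mathbf{p}^{(i)})$ is bounded, and since $S_{k}(\cdot;\mathbf{H})\in[0,1]$ with $\sum_{k}S_{k}=1$, both $\bar{g}(\mathbf{p}^{(i)};\bm{\theta},\mathbf{H})$ and $\bar{g}(\mathbf{p}^{(i)};\bm{\theta}^{*},\mathbf{H}^{*})$ are bounded. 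Split $A_{N}=\frac{2}{N}\sum_{i}\Delta^{(i)}\tilde{n}^{(i)}+\frac{2}{N}\sum_{i}\Delta^{(i)}\xi(\mathbf{p}^{(i)})$. The first sum is zero-mean with variance $\frac{4}{N^{2}}\sum_{i}(\Delta^{(i)})^{2}\sigma_{\text{n}}^{2}\le\frac{4B^{2}\sigma_{\text{n}}^{2}}{N}\to 0$. The second sum is zero-mean (as $\xi$ has zero mean) with variance at most $\frac{4B^{2}}{N^{2}}\sum_{i,j}|\mbox{cov}(\xi(\mathbf{p}^{(i)}),\xi(\mathbf{p}^{(j)}))|$; writing $|\mbox{cov}(\xi(\mathbf{p}^{(i)}),\xi(\mathbf{p}^{(j)}))|\le\rho(|i-j|)$ with $\rho(d)\to 0$ (the weak-dependence assumption), this is bounded by $\frac{4B^{2}}{N^{2}}\cdot N\big(\rho(0)+2\sum_{d=1}^{N-1}\rho(d)\big)=\frac{4B^{2}}{N}\rho(0)+\frac{8B^{2}(N-1)}{N}\cdot\frac{1}{N-1}\sum_{d=1}^{N-1}\rho(d)\to 0$, since the Cesàro average of a null sequence is null. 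Hence both pieces tend to $0$ in probability by Chebyshev, so $A_{N}\to 0$.

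Combining the two, $f(\bm{\theta},\mathbf{H})=\bar{f}(\bm{\theta},\mathbf{H})+A_{N}+B_{N}\to\bar{f}(\bm{\theta},\mathbf{H})+C$ in probability, pointwise in $(\bm{\theta},\mathbf{H})$. The main technical obstacle is the second half of $A_{N}$: upgrading the purely qualitative hypothesis ``$\mbox{cov}\to 0$ as $|i-j|\to\infty$'' to a quantitative $o(1)$ control of the normalized double sum of covariances, which is precisely where the Cesàro-averaging bound above is needed. A minor subtlety is that the decay is phrased in the sample index $|i-j|$, so one implicitly fixes an enumeration of the samples for which $\rho(\cdot)$ is defined; and when the $\mathbf{p}^{(i)}$ are themselves random (as under the uniform sampling mentioned in the footnote), all the estimates above should be read conditionally on $\{\mathbf{p}^{(i)}\}$, after which the conditional limits -- being the \emph{same} constant $C$ regardless of the realization -- also hold unconditionally.
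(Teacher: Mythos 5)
Your proposal is correct and follows essentially the same route as the paper's proof: the identical decomposition $f=\bar{f}+\tfrac{2}{N}\sum_i\Delta^{(i)}n^{(i)}+\tfrac{1}{N}\sum_i(n^{(i)})^2$, with the cross term killed by a Chebyshev variance bound that handles the decaying covariances (your Ces\`aro-average phrasing is equivalent to the paper's split of the covariance sum at a lag $N_1$), and the quadratic noise term converging to $C=\sigma_{\text{n}}^{2}+\lim_N\frac{1}{N}\sum_i\xi(\mathbf{p}^{(i)})^{2}$. The only differences are cosmetic: you justify the boundedness of $\Delta^{(i)}$ explicitly and split the cross term into its $\tilde{n}$ and $\xi$ parts before bounding, whereas the paper bounds the combined covariance of $X_i=2\Delta^{(i)}n^{(i)}$ directly.
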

\begin{proof}
See Appendix \ref{app:prooflemnoi}.
\end{proof}
It follows from the above lemma that the parameters $(\bm{\theta},\mathbf{H})$
that minimize the deterministic proxy $\bar{f}(\bm{\theta},\mathbf{H})$
also minimize the least-squares cost $f(\bm{\theta},\mathbf{H})$
asymptotically. As a result, analyzing the property of the deterministic
proxy $\bar{f}(\bm{\theta},\mathbf{H})$ may inspire efficient algorithms
to solve (\ref{eq:formulation}) under large N.

It is also observed that there could be multiple local minima for
both $f(\bm{\theta},\mathbf{H})$ and $\bar{f}(\bm{\theta},\mathbf{H})$.
Specifically, the globally optimal solution $(\hat{\bm{\theta}},\hat{\mathbf{H}})$
to the least-squares problem (\ref{eq:formulation}) may differ from
the true parameter $(\bm{\theta}^{*},\mathbf{H}^{*})$ even at the
asymptotic regime. Yet, we are not interested in the estimated parameters
$(\hat{\bm{\theta}},\hat{\mathbf{H}})$, but the radio map $\bar{g}(\mathbf{p};\hat{\bm{\theta}},\hat{\mathbf{H}})$
constructed from these parameters. In other words, the focus is not
on reconstructing the actual buildings or identifying the true parameters
$(\bm{\theta}^{*},\mathbf{H}^{*})$, but to extract consistent environment
semantics for constructing radio maps.

The following corollary shows that one may obtain \emph{asymptotically
consistent} radio maps even the globally optimal solution $(\hat{\bm{\theta}},\hat{\mathbf{H}})$
differs from the true parameter $(\bm{\theta}^{*},\mathbf{H}^{*})$;
here, consistency means that radio maps are identical $\bar{g}(\mathbf{p}^{(i)};\hat{\bm{\theta}},\hat{\mathbf{H}})=\bar{g}(\mathbf{p}^{(i)};\bm{\theta}^{*},\mathbf{H}^{*})$
at the measurement locations $\{\mathbf{p}^{(i)}\}$.
\begin{cor}[Asymptotic Consistency of Radio Maps]
It holds that, as $N\to\infty$,
\[
\bar{f}(\hat{\bm{\theta}},\hat{\mathbf{H}})=\frac{1}{N}\sum_{i=1}^{N}\left(\bar{g}(\mathbf{p}^{(i)};\hat{\bm{\theta}},\hat{\mathbf{H}})-\bar{g}(\mathbf{p}^{(i)};\bm{\theta}^{*},\mathbf{H}^{*})\right)^{2}\to0
\]
where $(\hat{\bm{\theta}},\hat{\mathbf{H}})$ is the globally optimal
solution to (\ref{eq:formulation}).\label{cor:asyequrad}
\end{cor}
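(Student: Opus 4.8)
The plan is to combine the global optimality of $(\hat{\bm\theta},\hat{\mathbf H})$ with an exact algebraic decomposition of the least-squares cost, which reduces the claim to controlling a single data-dependent cross term. First I would write $g_i^{*}\triangleq\bar g(\mathbf p^{(i)};\bm\theta^{*},\mathbf H^{*})$ and $\hat g_i\triangleq\bar g(\mathbf p^{(i)};\hat{\bm\theta},\hat{\mathbf H})$, so that $y^{(i)}=g_i^{*}+n^{(i)}$ by (\ref{eq:recsigmea}) and
\[
f(\hat{\bm\theta},\hat{\mathbf H})=\frac1N\sum_{i=1}^{N}\big(g_i^{*}-\hat g_i+n^{(i)}\big)^{2}=\bar f(\hat{\bm\theta},\hat{\mathbf H})+\frac2N\sum_{i=1}^{N}\big(g_i^{*}-\hat g_i\big)n^{(i)}+\frac1N\sum_{i=1}^{N}(n^{(i)})^{2},
\]
whereas $f(\bm\theta^{*},\mathbf H^{*})=\frac1N\sum_{i=1}^{N}(n^{(i)})^{2}$ exactly. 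Since $(\hat{\bm\theta},\hat{\mathbf H})$ is globally optimal, $f(\hat{\bm\theta},\hat{\mathbf H})\le f(\bm\theta^{*},\mathbf H^{*})$; cancelling the common term $\frac1N\sum_i(n^{(i)})^{2}$ gives
\[
0\le\bar f(\hat{\bm\theta},\hat{\mathbf H})\le-\frac2N\sum_{i=1}^{N}\big(g_i^{*}-\hat g_i\big)n^{(i)}\le 2\sup_{(\bm\theta,\mathbf H)\in\Theta}\Big|\frac1N\sum_{i=1}^{N}\big(g_i^{*}-\bar g(\mathbf p^{(i)};\bm\theta,\mathbf H)\big)n^{(i)}\Big|,
\]
for any set $\Theta\ni(\hat{\bm\theta},\hat{\mathbf H})$. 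It then suffices to show the right-hand side vanishes in probability.

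Two ingredients remain. (i) \emph{Localization}: a coercivity argument lets me take $\Theta$ to be a fixed compact set --- if $\|\bm\theta\|\to\infty$ then $\frac1N\sum_i(y^{(i)}-\bar g(\mathbf p^{(i)};\bm\theta,\mathbf H))^{2}$ diverges (using that the log-distances $d(\mathbf p^{(i)})$ are not all equal), while $f(\bm\theta^{*},\mathbf H^{*})=\frac1N\sum_i(n^{(i)})^{2}=O_{p}(1)$ under the moment and weak-dependence assumptions of Lemma~\ref{lem:asyequ}, and the entries of $\mathbf H$ can be truncated at $\max_{i,m}z_m^{(i)}$ without affecting any $S_k$; hence $(\hat{\bm\theta},\hat{\mathbf H})\in\Theta$ with probability tending to one. (ii) \emph{Uniform law of large numbers}: for each fixed $(\bm\theta,\mathbf H)$ the inner average has mean zero (both $\xi(\cdot)$ and $\tilde n^{(i)}$ are zero-mean) and, by the same second-moment estimate as in the proof of Lemma~\ref{lem:asyequ} ($|\bar g|$ bounded on $\Theta$, $\mathrm{cov}(\xi(\mathbf p^{(i)}),\xi(\mathbf p^{(j)}))\to0$, and $\tilde n^{(i)}$ i.i.d.\ with finite variance), has variance $o(1)$, so converges to $0$ in probability pointwise. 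To make the supremum over $\Theta$ converge I would exploit that, restricted to the $N$ design links, the map $\mathbf H\mapsto(S_k(\mathbf p^{(i)};\mathbf H))_{i,k}$ is piecewise constant: it depends on $\mathbf H$ only through the finitely many threshold comparisons $h_{m,l}\gtrless z$ entering (\ref{eq:esthidvircon})--(\ref{eq:aveinddis}) for these links, which partition $\{\mathbf H\succeq\mathbf 0\}$ into at most polynomially many (in $N$) polyhedral cells, and on each cell $\bar g(\mathbf p^{(i)};\bm\theta,\mathbf H)$ is affine in the fixed-dimensional parameter $\bm\theta$. Hence the relevant function family has controlled (polynomial-in-$N$) complexity, and a standard maximal inequality combined with the pointwise bound yields $\sup_{\Theta}|\frac1N\sum_i(g_i^{*}-\bar g(\mathbf p^{(i)};\bm\theta,\mathbf H))n^{(i)}|\to0$ in probability. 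Feeding this into the displayed inequality gives $\bar f(\hat{\bm\theta},\hat{\mathbf H})\to0$.

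The main obstacle is exactly ingredient (ii). Morally the corollary ``follows from Lemma~\ref{lem:asyequ}'': at $(\bm\theta^{*},\mathbf H^{*})$ the lemma gives $f(\bm\theta^{*},\mathbf H^{*})\to C$, and if one could apply it at $(\hat{\bm\theta},\hat{\mathbf H})$ as well one would obtain $\bar f(\hat{\bm\theta},\hat{\mathbf H})+C\le C$, hence $\bar f(\hat{\bm\theta},\hat{\mathbf H})=0$. But Lemma~\ref{lem:asyequ} is only a \emph{pointwise} deterministic-equivalence statement, whereas $(\hat{\bm\theta},\hat{\mathbf H})$ depends on $N$; since $\bar g(\mathbf p;\bm\theta,\mathbf H)$ is discontinuous in $\mathbf H$ through the indicators, upgrading the convergence to hold uniformly over the parameter set is where the real work lies --- the decomposition, cancellation, and localization steps are routine.
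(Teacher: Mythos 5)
Your proposal is correct in outline but takes a genuinely different --- and more careful --- route than the paper. The paper's own proof is two lines: since $\bar{f}(\bm{\theta}^{*},\mathbf{H}^{*})=0$, the proxy $\bar{f}+C$ is globally minimized to $C$; then, invoking Lemma \ref{lem:asyequ}, it asserts $f(\hat{\bm{\theta}},\hat{\mathbf{H}})\to C$ and concludes $\bar{f}(\hat{\bm{\theta}},\hat{\mathbf{H}})\to 0$. That is precisely the ``morally follows from Lemma \ref{lem:asyequ}'' argument you describe and then reject: the lemma is stated pointwise in $(\bm{\theta},\mathbf{H})$, while $(\hat{\bm{\theta}},\hat{\mathbf{H}})$ is a random, $N$-dependent point, so the paper implicitly uses a uniform (or argmin-tracking) version of the deterministic equivalence that it never establishes. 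Your argument supplies exactly the missing ingredient: the exact decomposition of $f$ together with the basic inequality $f(\hat{\bm{\theta}},\hat{\mathbf{H}})\le f(\bm{\theta}^{*},\mathbf{H}^{*})$ cancels the noise-energy term identically (note $f(\bm{\theta}^{*},\mathbf{H}^{*})=\frac{1}{N}\sum_{i}(n^{(i)})^{2}$ exactly, with no asymptotics needed at the true parameter), reducing the claim to a uniform bound on the cross term; and your observation that the indicator structure of (\ref{eq:esthidvircon})--(\ref{eq:aveinddis}) makes the function class piecewise affine over polynomially many cells is the right device for obtaining that uniformity. The one soft spot in your sketch is the maximal inequality itself: the Chebyshev-type variance bound used in the proof of Lemma \ref{lem:asyequ} decays too slowly to survive a union bound over $O(N^{MK})$ cells, so one needs symmetrization/VC-type concentration for the i.i.d.\ component $\tilde{n}^{(i)}$ (the stated finite-fourth-moment assumption helps here) and a uniform law of large numbers for the weakly dependent component $\xi$, which requires somewhat more than the covariance-decay hypothesis the paper assumes. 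In short, the paper's proof is a clean heuristic with a genuine uniformity gap; yours is the standard M-estimation consistency argument that closes it, modulo those concentration details.
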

\begin{proof}
By definition, the minimum value of $\bar{f}(\bm{\theta},\mathbf{H})$
is obtained as $\bar{f}(\bm{\theta}^{*},\mathbf{H}^{*})=0$, and therefore,
$\bar{f}(\bm{\theta},\mathbf{H})+C$ can be globally minimized to
$C$. Since $f(\bm{\theta},\mathbf{H})\to\bar{f}(\bm{\theta},\mathbf{H})+C$
from Lemma \ref{lem:asyequ}, $f(\bm{\theta},\mathbf{H})$ is also
asymptotically and globally minimized to $f(\hat{\bm{\theta}},\hat{\mathbf{H}})\to C$,
which implies that $\bar{f}(\hat{\bm{\theta}},\hat{\mathbf{H}})\to0$
as $N\to\infty$.
\end{proof}

\subsection{Solution to the Propagation Parameter $\bm{\theta}$\label{subsec:Reformulation-for-an}}

It can be easily verified that given the variable $\mathbf{H}$, the
problem (\ref{eq:formulation}) is convex in $\bm{\theta}$. To see
this, denote $\mathbf{X}\in\mathbb{R}^{N\times(2K+2)}$ as log-distance
data matrix, where the even elements in the $i$th row of $\mathbf{X}$
equal to $d(\mathbf{p}^{(i)})$ and the odd elements in the $i$th
row of $\mathbf{X}$ equal to $1$. Arrange the elements in the variable
$\bm{\theta}\in\mathbb{R}^{2K+2}$ as $\bm{\theta}=[\alpha_{0}\;\beta_{0}\;\alpha_{1}\;\beta_{1}\;\cdots\;\alpha_{K}\;\beta_{K}]^{\textrm{T}}$
as the path loss parameter vector for the sub-models. Stack the measurement
value $y^{(i)}$ into a vector $\mathbf{y}=[y^{(1)}\;y^{(2)}\;\cdots\;y^{(N)}]^{\textrm{T}}\in\mathbb{R}^{N}$.
Finally, denote $\mathbf{S}\in\mathbb{R}^{N\times(2K+2)}$ as the
likelihood matrix, where $[\mathbf{S}]_{i,2k}=[\mathbf{S}]_{i,2k+1}=S_{k}(\mathbf{p}^{(i)};\mathbf{H})$.
Then, for a fixed $\mathbf{H}$, problem (\ref{eq:formulation}) can
be written as

\begin{equation}
\underset{\bm{\theta}}{\mathop{\textrm{minimize}}}\quad\left\Vert \mathbf{(S\circ X)\bm{\theta}-y}\right\Vert _{2}^{2}\label{eq:wlsall}
\end{equation}
where $\circ$ is the Hadamard product, \emph{i.e}., $[\mathbf{S\circ X}]_{ij}=[\mathbf{S}]_{ij}[\mathbf{X}]_{ij}$.
Problem (\ref{eq:wlsall}) is unconstrained quadratic programming,
and it is convex \ac{wrt} $\bm{\theta}$. It can be solved by setting
the derivative to zero, and the solution is given by
\begin{equation}
\hat{\bm{\theta}}=\big((\mathbf{S\circ X})^{\textrm{T}}(\mathbf{S\circ X})\big)^{-1}(\mathbf{S\circ X})^{\textrm{T}}\mathbf{y}.\label{eq:cloforsolthe}
\end{equation}

\begin{prop}
Under $\mathbf{H}^{*}$, the solution in (\ref{eq:cloforsolthe})
is an unbiased estimator of $\bm{\theta}^{*}$, \emph{i.e}., $\mathbb{E}\{\hat{\bm{\theta}}\}=\bm{\theta}^{*}$.
\end{prop}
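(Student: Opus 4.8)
The plan is to recognize (\ref{eq:cloforsolthe}) as an ordinary least-squares estimator and invoke the standard unbiasedness argument, the only care being the treatment of the likelihood matrix $\mathbf{S}$ as deterministic. First I would note that, since the proposition fixes $\mathbf{H}=\mathbf{H}^{*}$, the likelihood matrix entering (\ref{eq:wlsall})--(\ref{eq:cloforsolthe}) is the one evaluated at $\mathbf{H}^{*}$, call it $\mathbf{S}^{*}$, with $[\mathbf{S}^{*}]_{i,2k}=[\mathbf{S}^{*}]_{i,2k+1}=S_{k}(\mathbf{p}^{(i)};\mathbf{H}^{*})$; this, together with $\mathbf{X}$, is a function of the (fixed) measurement geometry only, hence deterministic.

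Next I would rewrite the measurement model (\ref{eq:recsigmea}) in matrix form. Using the definitions of $\mathbf{X}$, $\mathbf{S}^{*}$ and the ordering $\bm{\theta}=[\alpha_{0}\,\beta_{0}\,\cdots\,\alpha_{K}\,\beta_{K}]^{\mathrm{T}}$, one checks that the $i$th entry of $(\mathbf{S}^{*}\circ\mathbf{X})\bm{\theta}^{*}$ equals $\sum_{k=0}^{K}\big(\beta_{k}^{*}+\alpha_{k}^{*}d(\mathbf{p}^{(i)})\big)S_{k}(\mathbf{p}^{(i)};\mathbf{H}^{*})=\bar{g}(\mathbf{p}^{(i)};\bm{\theta}^{*},\mathbf{H}^{*})$, so that $\mathbf{y}=(\mathbf{S}^{*}\circ\mathbf{X})\bm{\theta}^{*}+\mathbf{n}$, where $\mathbf{n}=[n^{(1)}\,\cdots\,n^{(N)}]^{\mathrm{T}}$ and $n^{(i)}=\xi(\mathbf{p}^{(i)})+\tilde{n}^{(i)}$. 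Substituting this expression for $\mathbf{y}$ into (\ref{eq:cloforsolthe}) and using $\big((\mathbf{S}^{*}\circ\mathbf{X})^{\mathrm{T}}(\mathbf{S}^{*}\circ\mathbf{X})\big)^{-1}(\mathbf{S}^{*}\circ\mathbf{X})^{\mathrm{T}}(\mathbf{S}^{*}\circ\mathbf{X})=\mathbf{I}$ gives $\hat{\bm{\theta}}=\bm{\theta}^{*}+\big((\mathbf{S}^{*}\circ\mathbf{X})^{\mathrm{T}}(\mathbf{S}^{*}\circ\mathbf{X})\big)^{-1}(\mathbf{S}^{*}\circ\mathbf{X})^{\mathrm{T}}\mathbf{n}$. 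Taking expectations, pulling the deterministic matrix out, and using $\mathbb{E}\{n^{(i)}\}=\mathbb{E}\{\xi(\mathbf{p}^{(i)})\}+\mathbb{E}\{\tilde{n}^{(i)}\}=0$ (the residual shadowing is zero-mean by the model in Section \ref{sec:radio-map-model} and the measurement noise is zero-mean by assumption in Section \ref{subsec:The-Measurement-Model}) then yields $\mathbb{E}\{\hat{\bm{\theta}}\}=\bm{\theta}^{*}$.

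There is no substantive obstacle here: the result is the textbook unbiasedness of the least-squares estimator once the model is written linearly. The only points I would state explicitly are (i) that $(\mathbf{S}^{*}\circ\mathbf{X})^{\mathrm{T}}(\mathbf{S}^{*}\circ\mathbf{X})$ is assumed invertible, which is already implicit in writing (\ref{eq:cloforsolthe}); and (ii) that the measurement locations $\{\mathbf{p}^{(i)}\}$ are treated as fixed design points (equivalently, all expectations are conditional on the geometry), so that $\mathbf{S}^{*}\circ\mathbf{X}$ is deterministic and independent of $\mathbf{n}$.
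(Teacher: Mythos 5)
Your proposal is correct and follows essentially the same route as the paper: the paper's proof also rewrites the model as $\mathbf{y}=(\mathbf{S}^{*}\circ\mathbf{X})\bm{\theta}^{*}+\mathbf{n}$ with zero-mean $\mathbf{n}$ and then simply cites the textbook unbiasedness of the least-squares estimator, whereas you spell out that standard computation explicitly. The extra care you take about $\mathbf{S}^{*}\circ\mathbf{X}$ being deterministic and the Gram matrix being invertible is consistent with (and implicit in) the paper's argument.
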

\begin{proof}
With $\mathbf{H}^{*}$, the observation model can be written as $\mathbf{y}=(\mathbf{S}^{*}\circ\mathbf{X})\bm{\theta}+\mathbf{n}$,
where $\mathbf{n}$ is a vector stacking $n$. Therefore, it is a
standard least-squares estimation problem for a linear observation
model $\mathbf{y}$ with zero mean noise $\mathbf{n}$. It is well-known
that the least-squares estimator (\ref{eq:cloforsolthe}) in this
case is unbiased \cite{kay1993fundamentals}.
\end{proof}

\subsection{Quasiconvexity in the Environment Parameter $\mathbf{H}$}

While problem (\ref{eq:formulation}) is convex in $\bm{\theta}$
by fixing $\mathbf{H}$, it is still non-convex in $\mathbf{H}$ by
fixing $\bm{\theta}$. However, we discover that $\mathbf{H}$ is
partially quasiconvex, which can be later exploited for efficient
algorithm design.

First, consider the $K=1$ case, where there are two propagation regions,
\ac{los} and \ac{nlos}, and the matrix $\mathbf{H}$ degenerates
to a column vector $\mathbf{h}$.
\begin{thm}[Quasiconvexity for $K=1$ under Soft Boundary]
Suppose that the filter coefficient $\omega_{0}$ in (\ref{eq:aveinddis})
satisfies $\omega_{0}\ge\frac{2}{3}$. Given a vector $\mathbf{h}'\succeq\mathbf{h}^{*}$
and an index $m$, consider the interval $\mathcal{I}_{m}\triangleq\{\mathbf{h}\in\mathbb{R}^{M}:0\leq h_{m}\leq H_{\textrm{max}},h_{j}=h'_{j},\forall j\neq m\}$.
Then, $\bar{f}(\bm{\theta}^{*},\mathbf{h})$ in (\ref{eq:formulnonoi})
is quasiconvex over the interval $\mathcal{I}_{m}$.\label{thm:quatwosof}
\end{thm}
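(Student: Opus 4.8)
The plan is to reduce the statement to a one‑dimensional problem and then exploit the rigid structure that $\omega_0\ge\tfrac23$ imposes on the soft‑boundary likelihood.

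\textbf{Step 1 (reduction to a scalar problem).} Parametrize the segment $\mathcal I_m$ by $t=h_m\in[0,H_{\max}]$ and write $\mathbf h(t)$ for the vector whose $m$th coordinate is $t$ and whose other coordinates are $h'_j$. Since $K=1$, $S_0(\mathbf p;\mathbf h)+S_1(\mathbf p;\mathbf h)=1$, so $\bar g(\mathbf p^{(i)};\bm\theta^*,\mathbf h)=(\beta_1^*+\alpha_1^*d_i)+\kappa_i S_0(\mathbf p^{(i)};\mathbf h)$ with $\kappa_i=(\beta_0^*-\beta_1^*)+(\alpha_0^*-\alpha_1^*)d_i$ and $d_i=d(\mathbf p^{(i)})$. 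Hence, by (\ref{eq:formulnonoi}), $\bar f(\bm\theta^*,\mathbf h(t))=\tfrac1N\sum_i w_i\big(\phi_i(t)-\phi_i^*\big)^2$, where $w_i=\kappa_i^2\ge0$, $\phi_i(t)=S_0(\mathbf p^{(i)};\mathbf h(t))$ and $\phi_i^*=S_0(\mathbf p^{(i)};\mathbf h^*)$. It suffices to show that this function of $t$ is quasiconvex on $[0,H_{\max}]$.

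\textbf{Step 2 (structure of $\phi_i$; the role of $\omega_0\ge\tfrac23$; the decomposition from $\mathbf h'\succeq\mathbf h^*$).} By (\ref{eq:esthidvircon})--(\ref{eq:aveinddis}) one can write $\phi_i(t)=\sum_j\omega_j c_{ij}\,\mathbb I\{t<\zeta_{ij}\}$, where $c_{ij}\in\{0,1\}$ records whether all cells $m'\neq m$ on the $j$th shifted path of $\mathbf p^{(i)}$ lie below their critical altitudes under $\mathbf h'$, and $\zeta_{ij}\in(0,\infty]$ is the critical altitude over cell $m$; thus $\phi_i$ is a non‑increasing step function. Because $\bm\epsilon_0=\mathbf 0$ and $\sum_{j\ge1}\omega_j=1-\omega_0$, the hypothesis $\omega_0\ge\tfrac23$ — equivalently $2(1-\omega_0)\le\omega_0$ — forces $\phi_i(t)$ into the gapped set $[0,1-\omega_0]\cup[\omega_0,1]$, being $\ge\omega_0$ exactly when $\mathbf p^{(i)}$ is unobstructed under $\mathbf h(t)$ and $\le1-\omega_0$ otherwise; this gap and the inequality $2(1-\omega_0)\le\omega_0$ are the \emph{only} uses of the constant $\tfrac23$. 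Comparing $\phi_i(t)$ with $\phi_i^*$ termwise and using $h'_j\ge h^*_j$ for $j\neq m$ gives $\phi_i(t)-\phi_i^*=P_i(t)-N_i(t)-Q_i$, where $Q_i\ge0$ is a constant (paths that are unobstructed under $\mathbf h^*$ but permanently blocked by a cell $m'\neq m$ that $\mathbf h'$ raised above $\mathbf h^*$), $P_i$ is nonnegative, non‑increasing, and vanishes for $t\ge h_m^*$, and $N_i$ is nonnegative, non‑decreasing, and vanishes for $t\le h_m^*$; moreover the weight $\omega_0$ enters $P_i$ or $Q_i$ but never both, and $Q_i\le1-\omega_0$ in the first case while $Q_i\ge\omega_0$ in the second.

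\textbf{Step 3 (the easy half).} On $[h_m^*,H_{\max}]$ we have $\phi_i(t)-\phi_i^*=-(N_i(t)+Q_i)\le0$ with $N_i+Q_i$ non‑decreasing, so every $\psi_i(t):=(\phi_i(t)-\phi_i^*)^2$ — hence $\bar f(\bm\theta^*,\mathbf h(t))$ — is non‑decreasing on $[h_m^*,H_{\max}]$. It therefore remains to prove that $t\mapsto\bar f(\bm\theta^*,\mathbf h(t))$ is quasiconvex on $[0,h_m^*]$ with its minimum over that interval attained at $h_m^*$; gluing the two pieces then yields quasiconvexity on all of $\mathcal I_m$.

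\textbf{Step 4 (the crux) and the main obstacle.} On $[0,h_m^*]$, $\psi_i(t)=(P_i(t)-Q_i)^2$, and by Step 2 the measurements split into: (i) those where $\omega_0$ appears in $P_i$, so $Q_i\le1-\omega_0\le\tfrac12\omega_0$ and $P_i(t)\ge\omega_0\ge2Q_i$ until the direct‑path indicator of $\mathbf p^{(i)}$ flips — which makes $\psi_i$ non‑increasing through that flip and leaves only a residual bounded by $(1-\omega_0)^2$ afterwards; (ii) those where $\omega_0$ appears in $Q_i$, so $Q_i\ge\omega_0>1-\omega_0\ge P_i(t)$ throughout and $\psi_i$ is non‑decreasing on all of $[0,H_{\max}]$; (iii) the remaining measurements, for which $\psi_i\le(1-\omega_0)^2$ on $[0,h_m^*]$. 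The plan is then to argue that the heavy non‑increasing contributions from (i), the non‑decreasing contributions from (ii), and the identity $P_i(h_m^*)=0$ together force $\bar f$ to be unimodal on $[0,h_m^*]$ with minimizer $h_m^*$, the residuals from (i) and (iii) being too small — each is at most $(1-\omega_0)^2$, whereas a single direct‑path flip changes a $\psi_i$ by at least $(2\omega_0-1)^2\ge(1-\omega_0)^2$ — to create a spurious interior local minimum. The genuinely hard part is exactly this aggregation: an individual $\psi_i$ need not be monotone on $[0,h_m^*]$ (its value can jump \emph{up} at a point where $P_i$ crosses $Q_i$), and a sum of unimodal functions with different minimizers is not unimodal in general, so one must carefully track the crossing locations and bound the sum of the small residuals against the non‑increasing ``backbone'' — and it is precisely here that the threshold $\tfrac23$ must be used to its full strength, while Steps 1--3 are essentially bookkeeping.
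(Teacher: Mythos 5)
Your Steps 1--3 reproduce the paper's setup almost exactly: the paper likewise isolates the direct-path term with weight $\omega_{0}$ from the shifted-path terms with total weight $1-\omega_{0}$, and uses $\omega_{0}\ge\frac{2}{3}$ only to obtain the inequality $\big[\pm\omega_{0}+\sum_{j\ge1}\omega_{j}\big]^{2}\ge\big[\sum_{j\ge1}\omega_{j}\big]^{2}$, i.e.\ your ``gap'' observation that a summand with a mismatched direct-path indicator dominates the same summand with a matched one, whatever the shifted-path residual happens to be. Your easy half (monotone non-decreasing on $[h_{m}^{*},H_{\max}]$) is also fine. The problem is Step 4: it is explicitly a \emph{plan}, not a proof, and the theorem's content lives precisely there. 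The aggregation argument you sketch --- bounding the per-sample residual jumps of size at most $(1-\omega_{0})^{2}$ against a ``backbone'' drop of size at least $(2\omega_{0}-1)^{2}$ --- does not close by itself, because the residual of $\psi_{i}$ can jump upward at a point of $(0,h_{m}^{*})$ where $P_{i}$ crosses $Q_{i}$ without any other sample's direct-path indicator flipping at that same location; there is then nothing in the sum to absorb the increase, and the claimed unimodality of $\bar{f}$ on $[0,h_{m}^{*}]$ does not follow from the magnitude comparison alone.

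For reference, the paper closes the hard half differently: rather than summing unimodal pieces with scattered minimizers, it restricts attention to the samples in $\mathcal{L}_{m}(\mathbf{h})$ whose obstruction status depends on $h_{m}$ at all, observes that for each such sample the direct-path indicator matches the truth on exactly one side of its critical altitude $z_{m}^{(i)}$ (below $z_{m}^{(i)}$ for LOS-true samples, with $z_{m}^{(i)}>h_{m}^{*}$; above $z_{m}^{(i)}$ for NLOS-true samples, with $z_{m}^{(i)}\le h_{m}^{*}$), and then argues via the nesting $\mathcal{I}_{A}(h_{m}-\epsilon;\mathbf{h}_{-m})\subseteq\mathcal{I}_{A}(h_{m};\mathbf{h}_{-m})$ for $h_{m}<h_{m}^{*}$ (and symmetrically above $h_{m}^{*}$), combined with the per-sample domination $A^{(i)}\le B^{(i)}$ guaranteed by $\omega_{0}\ge\frac{2}{3}$, that the entire sum is non-increasing on $[0,h_{m}^{*}]$ and non-decreasing on $[h_{m}^{*},H_{\max}]$. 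In other words, the paper's mechanism is a monotone set-inclusion argument on the index set of ``correct'' direct-path indicators, with the $\frac{2}{3}$ threshold ensuring each indicator flip toward the truth cannot raise the cost; it is not a residual-versus-backbone magnitude count. If you want to complete your write-up along your own lines, you would still need to handle the interior up-jumps of the $\psi_{i}$ caused by the shifted-path indicators flipping inside $(0,h_{m}^{*})$, which is exactly the step your proposal defers.
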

\begin{proof}
See Appendix \ref{app:proofthebas}.
\end{proof}
The above result implies that given a variable $\mathbf{h}\succeq\mathbf{h}^{*}$,
the function $\bar{f}(\bm{\theta}^{*},\mathbf{h})$ is partially quasiconvex
\ac{wrt} to each entry $h_{m}$ with all the other entries $h_{j}$,
$j\neq m$, fixed. As a result of the partial quasiconvexity, there
exists $\hat{h}_{m}$, such that, for $h_{m}<\hat{h}_{m}$, $\bar{f}$
is non-increasing in $h_{m}$, and for $h_{m}>\hat{h}_{m}$, $\bar{f}$
is non-decreasing.

Next, consider the case of a general $K$ and the propagation regions
being modeled with hard boundaries, \emph{i.e}., the likelihood function
is chosen as $S_{k}(\mathbf{p};\mathbf{H})=\mathbb{I}\{\mathbf{p}\in\mathcal{D}_{k}(\mathbf{H})\}$
as in (\ref{eq:esthidvircon}). The following result shows that the
partial quasiconvexity in Theorem \ref{thm:quatwosof} also holds.
\begin{thm}[Quasiconvexity under Hard Boundary]
Given a matrix $\mathbf{H}'\succeq\mathbf{H}^{*}$ and an index $(m,k)$,
define the interval $\mathcal{I}_{m,k}\triangleq\{\mathbf{H}\in\mathbb{R}^{M\times K}:0\leq h_{m,k}\leq H_{\textrm{max}},h_{j,l}=h'_{j,l},\forall(j,l)\neq(m,k)\}$.
Then, $\bar{f}(\bm{\theta}^{*},\mathbf{H})$ is quasiconvex over the
interval $\mathcal{I}_{m,k}$.\label{thm:quaharbou}
\end{thm}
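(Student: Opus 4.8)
The plan is to freeze every entry of $\mathbf{H}$ except $h_{m,k}$ at the value prescribed by $\mathbf{H}'$, view $\bar{f}(\bm{\theta}^{*},\mathbf{H})$ from (\ref{eq:formulnonoi}) as a function $\phi$ of the single scalar $h\triangleq h_{m,k}\in[0,H_{\textrm{max}}]$, and show $\phi$ is a piecewise-constant (step) function that is non-increasing and then non-decreasing, hence quasiconvex. The statement concerns the deterministic proxy (\ref{eq:formulnonoi}), so no randomness enters. Under the hard-boundary model the regions $\{\mathcal{D}_{k}(\mathbf{H})\}$ partition the links, so by (\ref{eq:esthidvircon}) we have $\bar{g}(\mathbf{p}^{(i)};\bm{\theta}^{*},\mathbf{H})=g_{k(i;\mathbf{H})}(d(\mathbf{p}^{(i)}))$, where $g_{l}(d)\triangleq\beta_{l}^{*}+\alpha_{l}^{*}d$ and $k(i;\mathbf{H})$ is the largest class $l\ge 1$ such that some cell $j\in\mathcal{B}^{(i)}$ on link $i$ carries an obstacle with $h_{j,l}\ge z_{j}^{(i)}$ (and $0$ if no such $l$ exists). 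Consequently $\phi(h)=\tfrac{1}{N}\sum_{i}\big(g_{k(i;\mathbf{H})}(d(\mathbf{p}^{(i)}))-g_{k^{*}(i)}(d(\mathbf{p}^{(i)}))\big)^{2}$ with $k^{*}(i)\triangleq k(i;\mathbf{H}^{*})$.

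First I would isolate which measurements depend on $h$. Let $\tilde{k}(i)$ be the class link $i$ attains from the frozen entries alone, i.e.\ with $h_{m,k}$ set to $0$. The $i$-th term of $\phi$ is constant in $h$ unless $m\in\mathcal{B}^{(i)}$, $\tilde{k}(i)\le k-1$, and $z_{m}^{(i)}\le H_{\textrm{max}}$; for such \emph{active} $i$ the class of link $i$ equals $\tilde{k}(i)$ while $h<z_{m}^{(i)}$ and jumps to $k$ once $h\ge z_{m}^{(i)}$, so that term is the two-level step
\[
e_{i}(h)=u_{i}+(v_{i}-u_{i})\,\mathbb{I}\{h\ge t_{i}\},\qquad t_{i}\triangleq z_{m}^{(i)},
\]
with $u_{i}=\big(g_{\tilde{k}(i)}(d(\mathbf{p}^{(i)}))-g_{k^{*}(i)}(d(\mathbf{p}^{(i)}))\big)^{2}$ and $v_{i}=\big(g_{k}(d(\mathbf{p}^{(i)}))-g_{k^{*}(i)}(d(\mathbf{p}^{(i)}))\big)^{2}$. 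Hence $\phi$ is piecewise constant, and its jump as $h$ crosses a value $t$ is $\sum_{i:\,t_{i}=t}(v_{i}-u_{i})$.

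Next I would pin down the sign of those jumps using $\mathbf{H}'\succeq\mathbf{H}^{*}$. I claim the active measurements with $t_{i}\le h^{*}_{m,k}$ are exactly those with $k^{*}(i)=k$, and those with $t_{i}>h^{*}_{m,k}$ are exactly those with $k^{*}(i)<k$. Indeed, if $t_{i}\le h^{*}_{m,k}$ then class $k$ already blocks link $i$ under $\mathbf{H}^{*}$, so $k^{*}(i)\ge k$; and $k^{*}(i)>k$ would force some frozen entry of class $>k$ to block link $i$ (since $\mathbf{H}'\succeq\mathbf{H}^{*}$), contradicting $\tilde{k}(i)\le k-1$; hence $k^{*}(i)=k$, giving $v_{i}=0\le u_{i}$, a non-positive jump. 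If instead $t_{i}>h^{*}_{m,k}$, the $(m,k)$ entry of $\mathbf{H}^{*}$ does not block link $i$, so every $\mathbf{H}^{*}$-blocker of link $i$ is also a blocker among the frozen entries, whence $k^{*}(i)\le\tilde{k}(i)\le k-1$; invoking the monotonicity of the path-loss sub-models in the obstruction degree — deeper obstruction yields no larger gain over the operating distances, i.e.\ $g_{k^{*}(i)}(d)\ge g_{\tilde{k}(i)}(d)\ge g_{k}(d)$ — gives $0\le g_{k^{*}(i)}-g_{\tilde{k}(i)}\le g_{k^{*}(i)}-g_{k}$ (all at $d(\mathbf{p}^{(i)})$), hence $u_{i}\le v_{i}$, a non-negative jump. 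Since no breakpoint can be simultaneously $\le h^{*}_{m,k}$ and $>h^{*}_{m,k}$, $\phi$ is non-increasing on $[0,h^{*}_{m,k}]$ and non-decreasing on $[h^{*}_{m,k},H_{\textrm{max}}]$; a function on an interval with this shape has interval sublevel sets, i.e.\ is quasiconvex. As $\phi$ is $\bar{f}(\bm{\theta}^{*},\cdot)$ restricted to $\mathcal{I}_{m,k}$, the theorem follows.

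The main obstacle is the second half of this sign analysis, i.e.\ controlling $v_{i}-u_{i}$ for the measurements with $t_{i}>h^{*}_{m,k}$. In the $K=1$ case of Theorem \ref{thm:quatwosof} one automatically has $\tilde{k}(i)=k^{*}(i)=0$, so $u_{i}=0$ and every such jump is trivially non-negative; for general $K$ a frozen entry of an intermediate class $k^{*}(i)<\tilde{k}(i)<k$ can make $u_{i}>0$, and one genuinely needs the ordering of the sub-models by obstruction degree (or some equivalent structural fact) to conclude $u_{i}\le v_{i}$. The remaining work — the bookkeeping that isolates the active measurements and, crucially, the use of $\mathbf{H}'\succeq\mathbf{H}^{*}$ to force every breakpoint below $h^{*}_{m,k}$ to carry only non-positive jumps and every breakpoint above it only non-negative ones — is routine but must be carried out carefully so that no single breakpoint mixes jumps of both signs.
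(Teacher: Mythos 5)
Your proposal is correct, and while it shares the paper's high-level strategy (restrict $\bar{f}(\bm{\theta}^{*},\cdot)$ to the scalar $h_{m,k}$, show it is a staircase that steps down and then up around $h_{m,k}^{*}$), it is organized quite differently and is in one respect more complete. The paper's Appendix C handles a general $K$ by recursing over classes from the most obscured class $K$ downward, and its core step is a match/mismatch dichotomy: each term equals $A^{(i)}=0$ when all indicators agree with those of $\mathbf{H}^{*}$ and some $B^{(i)}\neq0$ otherwise, with quasiconvexity deduced from the nesting of the match sets $\mathcal{I}_{A}$ as $h_{m,k}\to h_{m,k}^{*}$. You instead treat an arbitrary $(m,k)$ directly, identify the active measurements, and write each active term as an explicit two-level step with levels $u_{i}$ and $v_{i}$. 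This finer bookkeeping exposes exactly the case the paper's dichotomy glosses over: when $\mathbf{H}'$ strictly dominates $\mathbf{H}^{*}$ on the frozen entries, an active link with breakpoint above $h_{m,k}^{*}$ can have $\tilde{k}(i)\neq k^{*}(i)$, so its term transitions from one \emph{nonzero} mismatch value to another, and "the match set shrinks" alone does not force the jump to be non-negative. Your resolution --- invoking the ordering $\gamma_{0}^{(i)}\ge\gamma_{1}^{(i)}\ge\cdots\ge\gamma_{K}^{(i)}$ of the path-loss sub-models so that $0\le\gamma_{k^{*}(i)}^{(i)}-\gamma_{\tilde{k}(i)}^{(i)}\le\gamma_{k^{*}(i)}^{(i)}-\gamma_{k}^{(i)}$ and hence $u_{i}\le v_{i}$ --- is exactly the structural fact needed; the paper states this ordering only for $K=1$ (in Appendix B) and never uses it explicitly in the $K>1$ argument. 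The price of your route is that this ordering becomes an explicit hypothesis of the theorem (it is physically natural and implicit in the model, but worth stating); what it buys is a self-contained, non-recursive proof whose jump-sign analysis is airtight at every breakpoint.
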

\begin{proof}
See Appendix \ref{app:proofthequa}.
\end{proof}
Theorems \ref{thm:quatwosof} and \ref{thm:quaharbou} imply that
if we focus on each individual entry $h_{m,k}$ in the variable $\mathbf{H}$,
then $\bar{f}(\bm{\theta},\mathbf{H})$ first decreases and then increases.
More specifically, according to the fact that $\bar{f}(\bm{\theta},\mathbf{H})$
in (\ref{eq:formulnonoi}) is constructed from a number of indicator
functions in (\ref{eq:esthidvircon}) and (\ref{eq:aveinddis}), $\bar{f}(\bm{\theta},\mathbf{H})$
appears like a \emph{staircase} function that first steps down along
the interval $\mathcal{I}_{m,k}$, reaching the bottom around $h_{m,k}^{*}$,
and then steps up, where the bottom appears as a flat \emph{basin}
as shown in Fig. \ref{fig:fmkftilde}.

To characterize the basin of $\bar{f}$ over $\mathcal{I}_{m,k}$,
define a function 
\begin{equation}
\bar{f}_{m,k}(h_{m,k};\bm{\theta},\mathbf{H}_{m,k}^{-})=\bar{f}(\bm{\theta},\mathbf{H})\label{eq:bar-f_mk}
\end{equation}
of the scalar variable $h_{m,k}$ with the other variables $\bm{\theta}$
and $\mathbf{H}_{m,k}^{-}$ held fixed, where $\mathbf{H}_{m,k}^{-}=\{h_{j,l}:\forall(j,l)\neq(m,k)\}$
is a collection of entries from the matrix $\mathbf{H}$ except the
$(m,k)$th one. The basin is defined as the interval $\mathcal{\underline{I}}_{m,k}(\bm{\theta},\mathbf{H})\triangleq\big\{ z:\bar{f}_{m,k}(z;\bm{\theta},\mathbf{H}_{m,k}^{-})\leq\bar{f}_{m,k}(h;\bm{\theta},\mathbf{H}_{m,k}^{-}),\forall0\leq h\leq H_{\text{max}}\big\}$.
We are interested in the largest value in the basin 
\begin{equation}
\hat{h}_{m,k}(\bm{\theta},\mathbf{H})\triangleq\sup\{\mathcal{\underline{I}}_{m,k}(\bm{\theta},\mathbf{H})\}\label{eq:hat_h_mk}
\end{equation}
where the algorithm to solve (\ref{eq:hat_h_mk}) will be developed
in Section \ref{subsec:Optimizing--via}. With the notation of $\hat{h}_{m,k}(\bm{\theta},\mathbf{H})$,
the following property can be established.
\begin{thm}[Consistency]
Suppose $\mathbf{H}\succeq\mathbf{H}^{*}$ and consider the interval
$\mathcal{I}_{m,k}$ as defined in Theorem \ref{thm:quaharbou}. Then,
$\hat{h}_{m,k}(\bm{\theta}^{*},\mathbf{H})\geq h_{m,k}^{*}$. Moreover,
given $\mathbf{H}''\succeq\mathbf{H}'\succeq\mathbf{H}^{*}$, it holds
that $\hat{h}_{m,k}(\bm{\theta}^{*},\mathbf{H}'')\geq\hat{h}_{m,k}(\bm{\theta}^{*},\mathbf{H}')\ge h_{m,k}^{*}$.\label{thm:dompro}
\end{thm}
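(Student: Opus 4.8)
The plan is to reduce both statements to a one–dimensional question about $\bar f_{m,k}(\,\cdot\,;\bm\theta^{*},\mathbf H_{m,k}^{-})$ and to feed it into the quasiconvexity of Theorem~\ref{thm:quaharbou} via one elementary observation: \emph{if $\phi:[0,H_{\mathrm{max}}]\to\mathbb R$ attains its minimum and $\phi(h)\ge\phi(h_{0})$ for all $h\in[0,h_{0}]$, then $\sup\{h:\phi(h)=\min\phi\}\ge h_{0}$} (either a global minimizer lies in $[0,h_{0}]$, forcing $h_{0}$ itself to be one, or every global minimizer exceeds $h_{0}$). Here $\bar f_{m,k}$ is piecewise constant in $h_{m,k}$ with breakpoints among the critical altitudes $\{z_{m}^{(i)}:m\in\mathcal B^{(i)}\}$; when $h_{m,k}$ crosses a breakpoint $z_{m}^{(i)}$ upward, the only measurements whose class changes are the \emph{affected} ones, i.e.\ those with $m\in\mathcal B^{(i)}$ and no obstacle of class $\ge k$ among the fixed entries $\mathbf H_{m,k}^{-}$, and each such link moves from its ``other class'' $k_{i}^{-}$ (the largest class of an $\mathbf H_{m,k}^{-}$-obstacle on its path) to class $k$. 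Writing $e_{i}(\ell)\triangleq\big(\beta_{\ell}^{*}+\alpha_{\ell}^{*}d(\mathbf p^{(i)})-\bar g(\mathbf p^{(i)};\bm\theta^{*},\mathbf H^{*})\big)^{2}\ge0$, so that $e_{i}(k^{*}(i))=0$, the increment of $\bar f_{m,k}$ across a range of breakpoints is $\tfrac1N\sum(e_{i}(k)-e_{i}(k_{i}^{-}))$ over the affected links with breakpoint in that range. Finally record: if $z_{m}^{(i)}\le h_{m,k}^{*}$, the class-$k$ virtual obstacle at cell $m$ is already active under $\mathbf H^{*}$, forcing $k^{*}(i)\ge k$; and $\mathbf H_{m,k}^{-}\succeq(\mathbf H^{*})_{m,k}^{-}$ with the ``affected'' condition force $k^{*}(i)\le k$; hence $k^{*}(i)=k$ and $e_{i}(k)=0$.

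For the first claim, apply the observation with $\phi=\bar f_{m,k}(\,\cdot\,;\bm\theta^{*},\mathbf H_{m,k}^{-})$ and $h_{0}=h_{m,k}^{*}$. For $h<h_{m,k}^{*}$, every affected link with breakpoint in $(h,h_{m,k}^{*}]$ has $z_{m}^{(i)}\le h_{m,k}^{*}$, hence $e_{i}(k)=0$, so
\[
\bar f_{m,k}(h_{m,k}^{*})-\bar f_{m,k}(h)=\tfrac1N\!\!\sum_{\substack{i\ \text{affected}\\ z_{m}^{(i)}\in(h,h_{m,k}^{*}]}}\!\!\big(0-e_{i}(k_{i}^{-})\big)\le0 ,
\]
i.e.\ $\bar f_{m,k}(h)\ge\bar f_{m,k}(h_{m,k}^{*})$. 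The observation yields $\hat h_{m,k}(\bm\theta^{*},\mathbf H)\ge h_{m,k}^{*}$; applying this to $\mathbf H'$ gives the rightmost inequality of the chain.

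For the monotonicity, write $g'=\bar f_{m,k}(\,\cdot\,;\bm\theta^{*},(\mathbf H')_{m,k}^{-})$, $g''=\bar f_{m,k}(\,\cdot\,;\bm\theta^{*},(\mathbf H'')_{m,k}^{-})$, and $\hat h'=\hat h_{m,k}(\bm\theta^{*},\mathbf H')$; both are quasiconvex by Theorem~\ref{thm:quaharbou} (applied to $\mathbf H'$ and to $\mathbf H''$). Since $g'$ is quasiconvex with largest minimizer $\hat h'$, it is non-increasing on $[0,\hat h']$, so for every $h\in[0,\hat h']$,
\[
\sum_{\substack{i\ \text{aff.\ under }\mathbf H'\\ z_{m}^{(i)}\in(h,\hat h']}}\!\!\big(e_{i}(k)-e_{i}(k_{i}^{-}(\mathbf H'))\big)=N\big(g'(\hat h')-g'(h)\big)\le0 .
\]
Now take $h\ge h_{m,k}^{*}$: every term then has $z_{m}^{(i)}>h_{m,k}^{*}$, so (since the class-$k$ obstacle is \emph{not} active under $\mathbf H^{*}$ in that case) $k^{*}(i)\le k_{i}^{-}(\mathbf H')<k$; invoking the monotone ordering of the path-loss sub-models across obstruction classes — a higher obstruction class never predicts a larger gain — $e_{i}(\cdot)$ is non-decreasing on $\{\ell\ge k^{*}(i)\}$, whence each term $e_{i}(k)-e_{i}(k_{i}^{-}(\mathbf H'))\ge0$. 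A sum of nonnegative terms that is $\le0$ vanishes termwise, and as $e_{i}$ is non-decreasing on $[k^{*}(i),k]\supseteq[k_{i}^{-}(\mathbf H'),k]$ with equal endpoints there, it is constant on that block: $e_{i}(q)=e_{i}(k)$ for all $q\in[k_{i}^{-}(\mathbf H'),k]$ and all such $i$.

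To finish, note that every link affected under $\mathbf H''$ is affected under $\mathbf H'$ as well, with $k_{i}^{-}(\mathbf H')\le k_{i}^{-}(\mathbf H'')<k$, and compute, for $h\in[0,\hat h']$,
\[
N\big(g''(\hat h')-g''(h)\big)=\!\!\sum_{\substack{i\ \text{aff.\ under }\mathbf H''\\ z_{m}^{(i)}\in(h,\hat h']}}\!\!\big(e_{i}(k)-e_{i}(k_{i}^{-}(\mathbf H''))\big) .
\]
Each term with $z_{m}^{(i)}\le h_{m,k}^{*}$ equals $-e_{i}(k_{i}^{-}(\mathbf H''))\le0$ (then $e_{i}(k)=0$), and each term with $z_{m}^{(i)}\in(h_{m,k}^{*},\hat h']$ equals $0$ by the previous paragraph (since $k_{i}^{-}(\mathbf H'')\in[k_{i}^{-}(\mathbf H'),k]$). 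Hence $g''(\hat h')\le g''(h)$ for all $h\le\hat h'$, and the elementary observation gives $\hat h_{m,k}(\bm\theta^{*},\mathbf H'')\ge\hat h'$, completing the chain. The step I expect to carry the weight is the one in the preceding paragraph: the \emph{new} affected links that appear as the environment estimate grows (those with $z_{m}^{(i)}\in(h_{m,k}^{*},\hat h']$) cannot be controlled by quasiconvexity alone, and one must combine it with the monotone ordering of the obstruction-class path-loss models to force the relevant squared errors to be flat across the classes between $k_{i}^{-}$ and $k$; without that ordering the monotonicity claim can in fact fail.
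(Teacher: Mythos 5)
Your proof is correct in substance but follows a genuinely different route from the paper's. The paper identifies the basin endpoint explicitly: it introduces the nested sets $\mathcal{Q}_{m,k}(\mathbf{H})$ of measurements passing over cell $m$ whose paths are not blocked by any class-$\ge k$ obstacle elsewhere, asserts $\hat h_{m,k}(\bm\theta^{*},\mathbf{H})=\min\{z_{m}^{(i)}:i\in\mathcal{Q}_{m,k}(\mathbf{H})\}$, and then both inequalities of the theorem follow at once from the inclusions $\mathcal{Q}_{m,k}(\mathbf{H}'')\subseteq\mathcal{Q}_{m,k}(\mathbf{H}')\subseteq\mathcal{Q}_{m,k}(\mathbf{H}^{*})$ (a smaller set has a larger minimum). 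You never identify $\hat h_{m,k}$ in closed form; instead you control the increments of the staircase directly, and for the monotonicity part you bring in the extra ingredient that the sub-model gains $\gamma_{k}^{(i)}$ are monotone in the obstruction class, so that the quasiconvexity inequality for $g'$ forces the squared errors $e_{i}(\cdot)$ to be flat on $[k_{i}^{-}(\mathbf{H}'),k]$ for the links flipping in $(h_{m,k}^{*},\hat h']$. This is more work, but it supplies exactly the justification that the paper's closed-form identification of $\hat h_{m,k}$ quietly presupposes (namely that the staircase does not step down again before the first $z_{m}^{(i)}$ in $\mathcal{Q}_{m,k}$), and your closing remark that the ordering of the $\gamma_{k}$'s is essential is a correct point the paper leaves implicit.

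One imprecision to flag: because the indicators are of the form $\mathbb{I}\{h_{m,k}\ge z_{m}^{(i)}\}$, the basin is generically a half-open interval and its supremum $\hat h'$ is a breakpoint at which $g'$ has already jumped up. Hence ``$g'$ is non-increasing on $[0,\hat h']$'' and the displayed identity $N\big(g'(\hat h')-g'(h)\big)\le 0$ can fail for $h$ inside the basin, and likewise your final claim $g''(\hat h')\le g''(h)$ need not hold at the endpoint because the links with $z_{m}^{(i)}=\hat h'$ were never shown to have flat $e_{i}$. The repair is routine: run the same bookkeeping over $(h,\hat h')$ (excluding the links whose breakpoint equals $\hat h'$), conclude $g''(h)\ge g''(\hat h'-\delta)$ for all $h<\hat h'$ and all small $\delta>0$, and apply your elementary observation with $h_{0}=\hat h'-\delta$ before letting $\delta\to 0$. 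The paper's own proof is equally loose at this endpoint, so this does not change the overall verdict that your argument is sound.
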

\begin{proof}
See Appendix \ref{app:proofthedom}.
\end{proof}
As inspired from Theorems \ref{thm:quatwosof}\textendash \ref{thm:dompro},
when $\bm{\theta}$ is sufficiently close to $\bm{\theta}^{*}$, an
efficient algorithm to optimize $\mathbf{H}$ can proceed as follows.
First, set the initial value of $\mathbf{H}$ as $h_{m,k}=H_{\textrm{max}}$
for all $m,k$. Then, for each element $h_{m,k}$, find $\hat{h}_{m,k}$
that minimizes $\bar{f}(\bm{\theta},\mathbf{H})$, and repeat this
step until convergence. This approach is summarized in Algorithm \ref{alg:asycoodes}
and its convergence can be analyzed as follows.

Let $\mathbf{H}(t)$ denotes the variable $\mathbf{H}$ at the $t$th
iteration. Theorem \ref{thm:dompro} implies the convergence of $h_{m,k}(t)$
for $\bm{\theta}=\bm{\theta}^{*}$. Suppose at $t>1$, given that
$\mathbf{H}(t-1)\succeq\mathbf{H}(t)\succeq\mathbf{H}^{*}$, Theorem
\ref{thm:dompro} implies that $\hat{h}_{m,k}(t-1)\ge\hat{h}_{m,k}(t)\ge h_{m,k}^{*}$,
$\forall m,k$, under $\bm{\theta}^{*}$, and consequently, $\mathbf{H}(t)\succeq\mathbf{H}(t+1)\succeq\mathbf{H}^{*}$.
As $t=1$ can be checked to satisfy $\hat{h}_{m,k}(1)\ge\hat{h}_{m,k}(2)\ge h_{m,k}^{*}$
due to the initialization, by induction, $\hat{h}_{m,k}(t)\ge\hat{h}_{m,k}(t+1)\ge h_{m,k}^{*}$
is satisfied for all $t$, which means that Algorithm \ref{alg:asycoodes}
constructs monotonically decreasing and lower bounded sequences $h_{m,k}(t)$,
which implies the convergence of $\mathbf{H}(t)$.

As a side note, finding $\bm{\theta}^{*}$ is relatively easier as
suggested by a lot of numerical experiments. The intuition is that
$\bm{\theta}^{*}$ has just $2K+2$ variables, and thus, depends on
global statistics, not very sensitive to $\mathbf{H}$ under large
$N$.

\begin{algorithm}
\begin{enumerate}
\item Initialize $\mathbf{H}(1)=\mathbf{1}H_{\textrm{max}}$, $\bm{\theta}(1)$
using \ac{em} in \cite{CheYatGes:C17}, and iteration $t=1$.
\item Optimize $\mathbf{H}$: For each $(m,k)$, update $h_{m,k}(t+1)=\hat{h}_{m,k}$
as defined in (\ref{eq:hat_h_mk}) based on $\bm{\theta}(t)$ and
$\mathbf{H}_{m,k}^{-}(t)$. Specifically, the following bisection
search is used:\label{enu:forasyupd}
\begin{enumerate}
\item Initialize $h_{\min}=0$ and $h_{\max}=H_{max}$.
\item Set $h_{m,k}=\frac{1}{2}(h_{\min}+h_{\max})$ and find the minimizer
$a_{1}^{*}$ of (\ref{eq:locpolgra}). \label{enu:forasybis}
\item If $a_{1}^{*}<0$, then $h_{\min}\leftarrow h_{m,k}$; if $a_{1}^{*}>0$,
then $h_{\max}\leftarrow h_{m,k}$.
\item Repeat from Step \ref{enu:forasybis} until $|h_{\max}-h_{\min}|<\epsilon$,
for which output $\hat{h}_{m,k}=h_{\max}$.
\end{enumerate}
\item Optimize $\bm{\theta}$: Update $\bm{\theta}(t+1)=\hat{\bm{\theta}}$
based on $\mathbf{H}(t+1)$ according to (\ref{eq:cloforsolthe}).
Set $t\leftarrow t+1$ and repeat from Step \ref{enu:forasyupd} until
$\frac{1}{MK}\|\mathbf{H}(t+1)-\mathbf{H}(t)\|_{\text{F}}<\epsilon_{\text{0}}$.
\end{enumerate}
\caption{Learning the deterministic component $\bar{g}(\mathbf{p})$ \label{alg:asycoodes}}
\end{algorithm}

\subsection{Optimizing $\mathbf{H}$ via Local Polynomial Approximation\label{subsec:Optimizing--via}}

The remaining challenge is to compute $\hat{h}_{m,k}(\bm{\theta},\mathbf{H})$
in (\ref{eq:hat_h_mk}). A common approach finding the minimizer of
a quasiconvex function over a bounded interval is to perform a bisection
search for the critical point. However, as discussed after Theorem
\ref{thm:quaharbou}, the function $\bar{f}(\bm{\theta},\mathbf{H})$
appears as a staircase in each variable $h_{m,k}$, where the derivative
is zero almost everywhere, as illustrated in Fig. \ref{fig:fmkftilde}.
Moreover, the staircase function $\bar{f}(\bm{\theta},\mathbf{H})$
is not available, but only its noisy counterpart $f(\bm{\theta},\mathbf{H})$
in (\ref{eq:formulation}) is accessible to the algorithm.

We propose to smooth $f(\bm{\theta},\mathbf{H})$ without losing the
partial quasiconvex property of $\bar{f}(\bm{\theta},\mathbf{H})$.
One possibility is to employ local polynomial approximation to estimate
$\bar{f}(\bm{\theta},\mathbf{H})$ from $f(\bm{\theta},\mathbf{H})$.
Specifically, we use a polynomial to approximate $\bar{f}_{m,k}(h;\bm{\theta},\mathbf{H}_{m,k}^{-})$
in (\ref{eq:bar-f_mk}) at the neighborhood of $h_{m,k}$:
\[
\tilde{f}_{m,k}(h;\bm{a},h_{m,k})=a_{0}+a_{1}(h-h_{m,k})+a_{2}(h-h_{m,k})^{2}+\cdots
\]
where the coefficients $\bm{a}=(a_{0},a_{1},\cdots)$ are computed
by sampling $f_{m,k}(h;\bm{\theta},\mathbf{H}_{m,k}^{-})$ over a
set $\mathcal{Z}$ of scattered points $z$ in the interval $[0,H_{\max}]$
and minimizing the weighted squared error: 
\begin{equation}
\sum_{z\in\mathcal{Z}}\left(f_{m,k}(z;\bm{\theta},\mathbf{H}_{m,k}^{-})-\tilde{f}_{m,k}(z;\bm{a},h_{m,k})\right)^{2}K_{b}(z-h_{m,k}).\label{eq:locpolgra}
\end{equation}
Here, $f_{m,k}(h;\bm{\theta},\mathbf{H}_{m,k}^{-})$ is a notation
defined according to $f(\bm{\theta},\mathbf{H})$ in a way similar
to the definition of $\bar{f}_{m,k}(h;\bm{\theta},\mathbf{H}_{m,k}^{-})$
in (\ref{eq:bar-f_mk}), and is computed via (\ref{eq:formulation}).
The term $K_{b}(u)$ is a kernel function that assigns a high weight
if the distance $u=|z-h_{m,k}|$ is small, and a low weight if the
distance $u$ is large. It was found that the Epanechnikov kernel
$K_{b}(u)=\frac{3}{4b}\big(1-(\frac{u}{b})^{2}\big)_{+}$ minimizes
the asymptotic approximation error of the polynomial $\tilde{f}_{m,k}(h;\bm{a},h_{m,k})$
for a given window size $b$ \cite{Fan:b96}. Here, we can adapt $b$
according to the volume of the measurement data such that $\tilde{f}_{m,k}$
is smoothed and the gradient is non-degenerated.

It is clear that $a_{1}^{*}(h_{m,k})$, from the minimizer of (\ref{eq:locpolgra}),
is the approximated (but non-degenerated) gradient of $\bar{f}$ at
$h=h_{m,k}$. Note that it suffices to determine the sign of $a_{1}^{*}(h_{m,k})$
due to the quasiconvexity in Theorems \ref{thm:quatwosof} and \ref{thm:quaharbou}.
Specifically, we can perform a bisection search to seek the minimizer
$\hat{h}_{m,k}$ in (\ref{eq:hat_h_mk}) as follows: (i) Initialize
$h_{\min}=0$ and $h_{\max}=H_{\max}$. (ii) Set $h_{m,k}=\frac{1}{2}(h_{\min}+h_{\max})$
and find the minimizer $a_{1}^{*}$ from minimizing (\ref{eq:locpolgra}).
(iii) If $a_{1}^{*}<0$, then $h_{\min}\leftarrow h_{m,k}$; if $a_{1}^{*}>0$,
then $h_{\max}\leftarrow h_{m,k}$. (iv) Repeat from Step (ii) until
$|h_{\max}-h_{\min}|<\epsilon$, and output $\hat{h}_{m,k}=h_{\max}$.
\begin{figure}
\begin{centering}
\includegraphics[width=0.5\columnwidth]{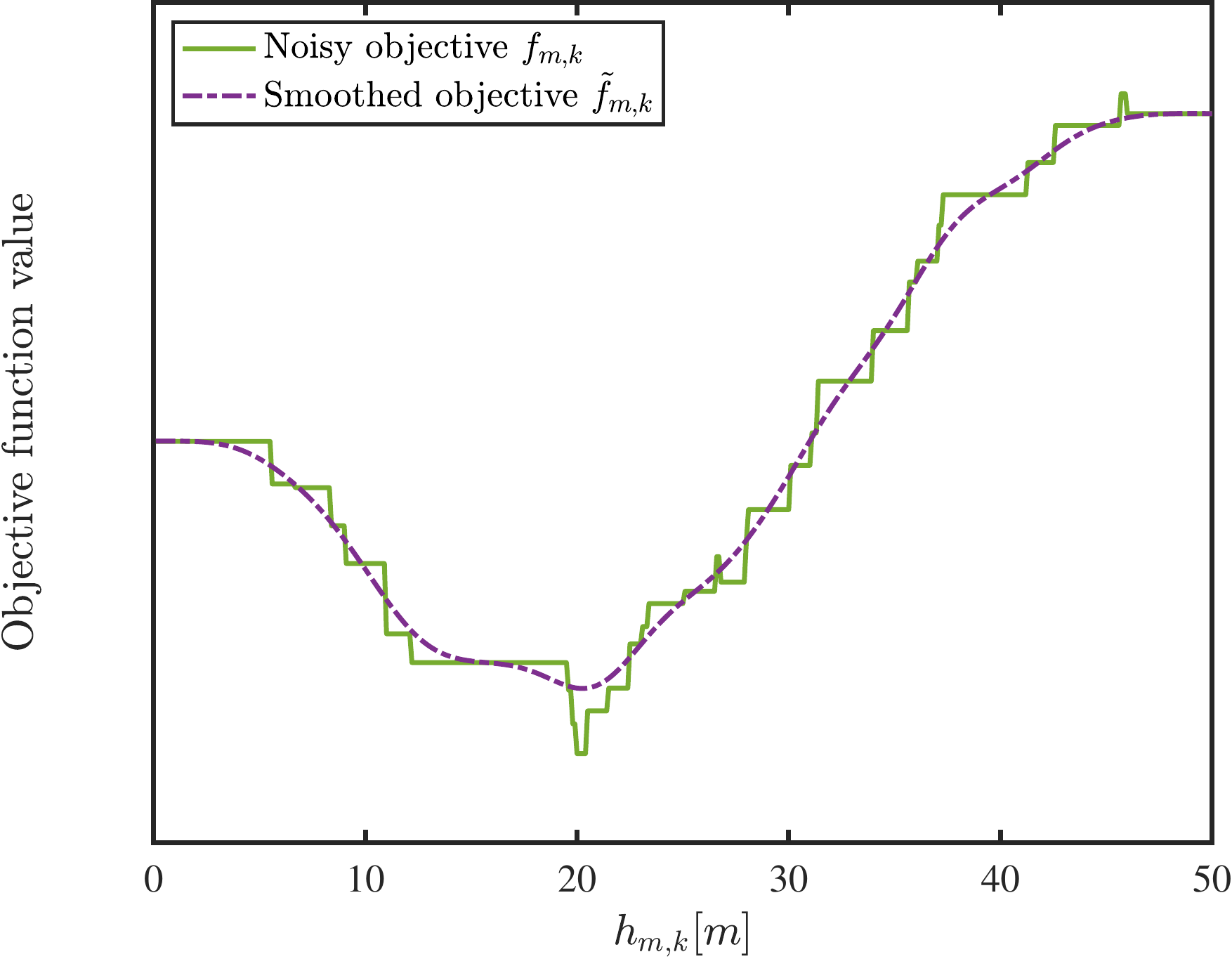}
\par\end{centering}
\caption{Illustration of $f_{m,k}$ (solid line) and the smoothed one using
local polynomial approximation (dashed line).}
\label{fig:fmkftilde}
\end{figure}

\subsection{The Overall Algorithm}

One can employ alternative optimization to repeatedly optimize $\bm{\theta}$
according to (\ref{eq:cloforsolthe}) and $\mathbf{H}$ according
to (\ref{eq:hat_h_mk}) to search for the jointly optimal solution
$(\hat{\bm{\theta}},\hat{\mathbf{H}})$ as summarized in Algorithm
\ref{alg:asycoodes}. The deterministic radio map is constructed as
\[
\hat{g}(\mathbf{p})\triangleq\bar{g}(\mathbf{p};\hat{\bm{\theta}},\hat{\mathbf{H}}).
\]

\section{Reconstructing the Shadowing using Kriging\label{sec:Reconstruction-the-Shadowing}}

After reconstructing the deterministic radio map $\hat{g}(\mathbf{p})$
from Section \ref{sec:Radio-Map-Learning}, this section focuses on
constructing the residual shadowing map $\xi(\mathbf{p})$ for the
radio map model (\ref{eq:channel-model}).\lyxdeleted{User}{Thu Sep  1 09:54:13 2022}{ }

Denote $\hat{\xi}^{(i)}=y^{(i)}-\hat{g}(\mathbf{p}^{(i)})$ as the
estimated shadowing at $\mathbf{p}^{(i)}$. The goal is to interpolate
the random process $\xi(\mathbf{p})$ based on $\hat{\xi}^{(i)}$
constructed at various locations $\mathbf{p}^{(i)}$ using the a data-driven
approach.

Recall the measurement model $y^{(i)}$ in (\ref{eq:recsigmea}) and
the noise model $n=\xi(\mathbf{p}^{(i)})+\tilde{n}$, we obtain 
\begin{align}
\hat{\xi}^{(i)} & =\bar{g}(\mathbf{p}^{(i)};\bm{\theta}^{*},\mathbf{H}^{*})+\xi(\mathbf{p}^{(i)})+\tilde{n}-\hat{g}(\mathbf{p}^{(i)})\nonumber \\
 & =\xi(\mathbf{p}^{(i)})+\tilde{n}+\big(\bar{g}(\mathbf{p}^{(i)};\bm{\theta}^{*},\mathbf{H}^{*})-\hat{g}(\mathbf{p}^{(i)})\big)\nonumber \\
 & \approx\xi(\mathbf{p}^{(i)})+\tilde{n}\label{eq:hat_xi_i}
\end{align}
where the approximation is asymptotically accurate because the term
$\bar{g}(\mathbf{p}^{(i)};\bm{\theta}^{*},\mathbf{H}^{*})-\hat{g}(\mathbf{p}^{(i)})$
tends to 0 as $N\to\infty$ according to Corollary \ref{cor:asyequrad}.

Consider constructing $\hat{\xi}(\mathbf{p})$ at $\mathbf{p}\notin\{\mathbf{p}^{(i)}\}$
as a linear combination of the measurements $\xi(\mathbf{p}^{(i)})$
\begin{equation}
\hat{\xi}(\mathbf{p})=\sum_{i=1}^{N}\lambda_{i}(\mathbf{p})\xi(\mathbf{p}^{(i)})\label{eq:xihatest}
\end{equation}
where the set of coefficients $\{\lambda_{i}(\mathbf{p})\}$ depends
on location $\mathbf{p}$. 

For a given location $\mathbf{p}$, a widely used Kriging approach
\cite{BraJemForMou:J16,SatFuj:J17} determines the coefficients $\lambda_{i}\triangleq\lambda_{i}(\mathbf{p})$
by minimizing the variance of the estimation error
\begin{equation}
\underset{\mathbf{1}^{\textrm{T}}\boldsymbol{\lambda}=1}{\mathop{\textrm{minimize}}}\quad\mathbb{V}\left\{ \hat{\xi}(\mathbf{p})-\xi(\mathbf{p})\right\} .\label{eq:xihatpro}
\end{equation}

Substituting (\ref{eq:hat_xi_i}) and (\ref{eq:xihatest}) into (\ref{eq:xihatpro}),
the objective function becomes 
\begin{align*}
\mathbb{V}\left\{ \hat{\xi}(\mathbf{p})-\xi(\mathbf{p})\right\}  & =\mathbb{V}\Big\{\sum_{i=1}^{N}\lambda_{i}(\mathbf{p})\xi(\mathbf{p}^{(i)})-\xi(\mathbf{p})\Big\}\\
 & =\mathbb{V}\Big\{\sum_{i=1}^{N}\lambda_{i}\hat{\xi}^{(i)}-\xi(\mathbf{p})\Big\}-\sum_{i=1}^{N}\lambda_{i}^{2}\mathbb{V}\left\{ \tilde{n}\right\} 
\end{align*}
where the derivation is due to the fact that the measurement noise
$\tilde{n}$ is independent of the residual shadowing process $\xi(\mathbf{p})$.

To compute the variance above, we need to build a \emph{semivariogram}
as follows.

\subsection{Semivariogram \label{subsec:Semivariogram-Function}}

Under the stationary assumption on the process $\xi(\mathbf{p})$,
the \emph{semivariogram} for $\xi(\mathbf{p})$ is defined as a function
$v(\|\mathbf{u}\|_{2})=\frac{1}{2}\mathbb{E}\{(\xi(\mathbf{p}+\mathbf{u})-\xi(\mathbf{p}))^{2}\}$.
However, the function $v(u)$ is unavailable, one needs to learn the
semivariogram model from the data. A commonly used one is the exponential
semivariogram model:
\begin{equation}
\bar{v}(u;\boldsymbol{\alpha})=\alpha_{s}^{2}\Big(1-\textrm{exp}\big(-\frac{u}{\alpha_{r}}\big)\Big)\label{eq:expsemmod}
\end{equation}
where the parameter $\boldsymbol{\alpha}=(\alpha_{s},\alpha_{r})$
can be obtained through a least-squares fitting from the data $\{\mathbf{p}^{(i)},\hat{\xi}^{(i)}\}$.
Specifically, the best model parameters can be obtained as the solution
to the following least-squares problem:
\[
\underset{\boldsymbol{\alpha}}{\mathop{\textrm{minimize}}}\quad\sum_{i,j}\Big(\bar{v}(\Vert\mathbf{p}^{(i)}-\mathbf{p}^{(j)}\Vert_{2};\boldsymbol{\alpha})-\big[\hat{\xi}^{(i)}-\hat{\xi}^{(j)}\big]^{2}\Big)^{2}.
\]

\subsection{Constructing the Residual Shadowing using Kriging}

To solve (\ref{eq:xihatpro}), we can use the Lagrange multiplier
and the Lagrange function is
\begin{multline*}
L(\boldsymbol{\lambda},\mu)=\mathbb{V}\left\{ \hat{\xi}(\mathbf{p})-\xi(\mathbf{p})\right\} +\mu(\sum_{i=1}^{N}\lambda_{i}-1)\\
=\mathbb{E}\left\{ \big(\sum_{i=1}^{N}\lambda_{i}\hat{\xi}^{(i)}-\xi(\mathbf{p})\big)^{2}\right\} -\sum_{i=1}^{N}\lambda_{i}^{2}\sigma_{\text{n}}^{2}+\mu(\sum_{i=1}^{N}\lambda_{i}-1)\\
=\sum_{i=1}^{N}\lambda_{i}\mathbb{E}\left\{ \big(\hat{\xi}^{(i)}-\xi(\mathbf{p})\big)^{2}\right\} -\frac{1}{2}\sum_{i,j}\lambda_{i}\lambda_{j}\mathbb{E}\left\{ \big(\hat{\xi}^{(i)}-\hat{\xi}^{(j)}\big)^{2}\right\} -\sum_{i=1}^{N}\lambda_{i}^{2}\sigma_{\text{n}}^{2}+\mu(\sum_{i=1}^{N}\lambda_{i}-1)
\end{multline*}
where $\mathbb{E}\Big\{\big(\hat{\xi}^{(i)}-\xi(\mathbf{p})\big)^{2}\Big\}$
and $\mathbb{E}\Big\{\big(\hat{\xi}^{(i)}-\hat{\xi}^{(j)}\big)^{2}\Big\}$
can be calculated by (\ref{eq:expsemmod}). Then, take the partial
derivatives of $L(\boldsymbol{\lambda},\mu)$ and set them to zero
\cite{SatFuj:J17}. We will get $\boldsymbol{\lambda}$ and use them
in the estimator $\hat{\xi}(\mathbf{p})=\sum_{i=1}^{N}\lambda_{i}\hat{\xi}^{(i)}$
for the shadowing component.

\section{Numerical Results\label{sec:Applications-and-Numerical}}

We study a 310 meters by 340 meters area in central Shanghai, as illustrated
in Fig. \ref{fig:virobs}. There are dozens of buildings and other
objects with heights ranging from 10 to 130 meters. Their shapes include
cubes, columns, and some irregular shapes.\footnote{The 3D city map is available at \url{https://www.openstreetmap.org}.}
We chose 100 user locations at random on the ground level, and 50,000
\ac{uav} locations uniformly at random from various altitudes. Based
on the 3D city map and the deployment of the users and \acpl{uav},
two radio map datasets are generated:\footnote{The code and dataset are available at \url{https://github.com/6wj/radiomap-uav}.}

Dataset A: The radio map is simulated according to the radio map model
$g(\mathbf{p})$ in (\ref{eq:channel-model}) with $K=1$ and path
loss parameters $(\alpha_{0},\beta_{0})=(-22,-28)$ and $(\alpha_{1},\beta_{1})=(-36,-22)$.
\Ac{iid} Gaussian measurement noise with zero mean and standard deviation
$\sigma_{\text{n}}\in\{3,7\}$ dB is added to model the shadowing
$\xi(\mathbf{p})$.

Dataset B: The radio map $g(\mathbf{p})$ is generated using Remcom
Wireless Insite, a commercial 3D ray-tracing software. Up to 6 reflections
and 1 diffraction are simulated, and other parameters are set as default.
The material of all structures is considered to be concrete. The waveforms
are chosen as narrowband sinusoidal signals at frequencies bands 2.5
GHz and 28 GHz, respectively, in different experiments.

The proposed method reconstructs a radio map with a virtual obstacle
map of roughly $M=1,200$ grid cells with 9 meter spacing between
grid points. The choice of $M$ is discussed later.

\subsection{Radio Map Reconstruction\label{subsec:Radio-Map-Reconstruction}}

We first evaluate the performance of radio map reconstruction from
Dataset A. The performance is evaluated in \ac{mae} $e=\mathbb{E}\{|\hat{g}(\mathbf{p})-g(\mathbf{p})|\}$
for the reconstructed radio map.

To make a fair comparison, the following baseline schemes are evaluated:
\begin{enumerate}
\item \Ac{knn} \cite{NiNgu:C08,DenJiaZhoCui:C18}: To construct the channel
quality at each 6D location $\mathbf{p}$, the algorithm first selects
5 measurement samples that are closest to $\mathbf{p}$ from the training
set $\{\mathbf{p}^{(i)}\}$ and form the neighbor set as $\mathcal{N}(\mathbf{p})$;
then, the channel quality at $\mathbf{p}$ is computed as $\hat{g}(\mathbf{p})=\mu^{-1}\sum_{i\in\mathcal{N}(\mathbf{p})}w(\mathbf{p},\mathbf{p}^{(i)})y^{(i)}$,
where $w(\mathbf{p},\mathbf{p}^{(i)})=\exp[-\|\mathbf{p}-\mathbf{p}^{(i)}\|_{2}^{2}/(2s^{2})]$
with a properly chosen parameter $s=55$ meters and $\mu=\sum_{i\in\mathcal{N}(\mathbf{p})}w(\mathbf{p},\mathbf{p}^{(i)})$
is a normalizing factor.
\item Kriging \cite{SatFuj:J17}: The radio map $\hat{g}(\mathbf{p})$ is
constructed based on all the measurement samples $\{(\mathbf{p}^{(i)},y^{(i)})\}$
using a similar model as in (\ref{eq:xihatest}) and the model parameters
are computed using the method in Section \ref{sec:Reconstruction-the-Shadowing}.
\end{enumerate}

Several 2D slices of reconstructed radio maps are demonstrated in
Fig. \ref{fig:radmap}. It is observed that the \ac{los}/\ac{nlos}
structure in the radio maps can be roughly reconstructed.

The left one in Fig. \ref{fig:maeall} shows the \ac{mae} for the
radio map reconstruction in terms of the number of training samples
for Dataset A, where dashed lines for $\sigma_{\text{n}}=3$ dB and
solid lines for $\sigma_{\text{n}}=7$ dB in standard deviation of
the measurement noise. It is observed that the proposed radio map
reconstruction method can reduce the \ac{mae} by 2\textendash 4 dB,
corresponding to an order of reduction in the sampling complexity,
\emph{e.g.}, the proposed method requires only 500 samples to achieve
a similar or lower \ac{mae} that is achieved by Kriging or \ac{knn}
using 5,000 samples. In addition, the proposed method is shown to
be robust to measurement noise, where a similar \ac{mae} is achieved
under either 3 dB or 7 dB noise in standard deviation.
\begin{figure}
\begin{centering}
\includegraphics[width=0.55\columnwidth]{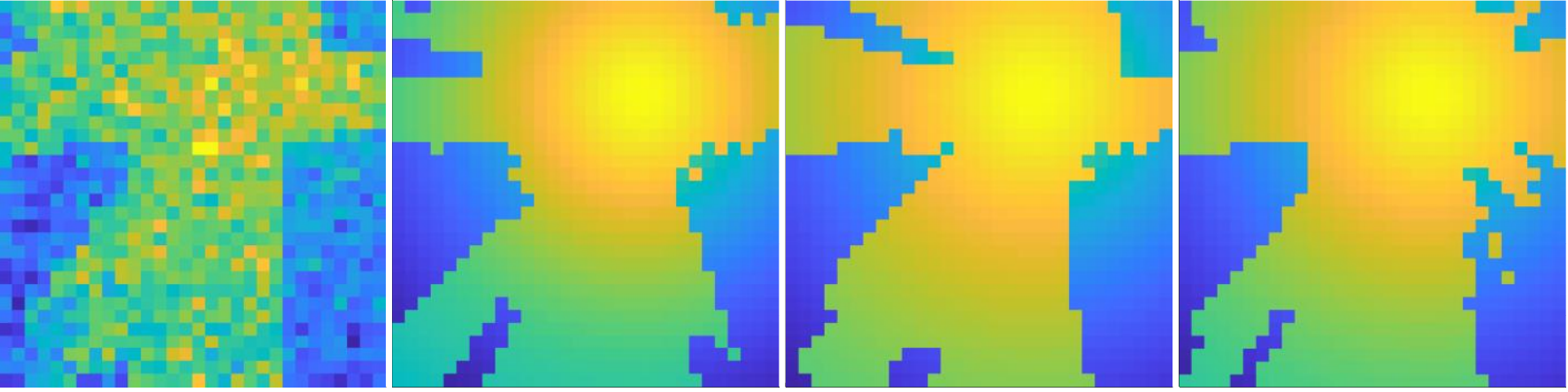}
\par\end{centering}
\caption{2D slices of radio maps from a fixed user position and fixed \ac{uav}
height, where each pixel represents the link quality between the \ac{uav}
at the corresponding $(x,y)$ position and the ground user at a fixed
position. From left to right: the true radio map, and the deterministic
radio maps $\bar{g}(\mathbf{p};\hat{\bm{\theta}},\hat{\mathbf{H}})$
reconstructed from 400, 900, 2500 measurement samples, respectively.}

\label{fig:radmap}
\end{figure}

The right one in Fig. \ref{fig:maeall} shows the \ac{mae} versus
the number of training samples from Dataset B. Two observations are
made:
\begin{figure}
\begin{centering}
\includegraphics[width=0.5\columnwidth]{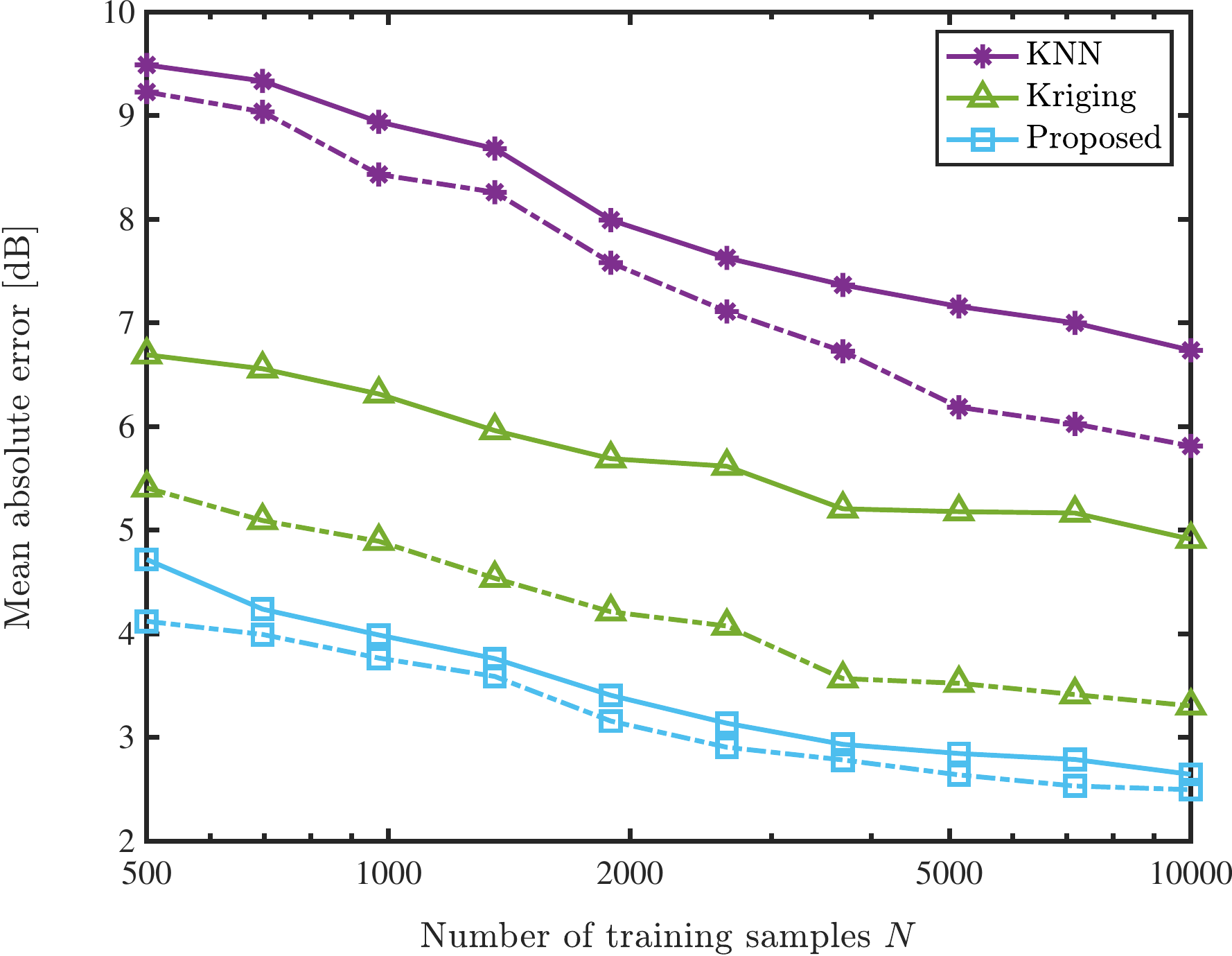}\includegraphics[width=0.5\columnwidth]{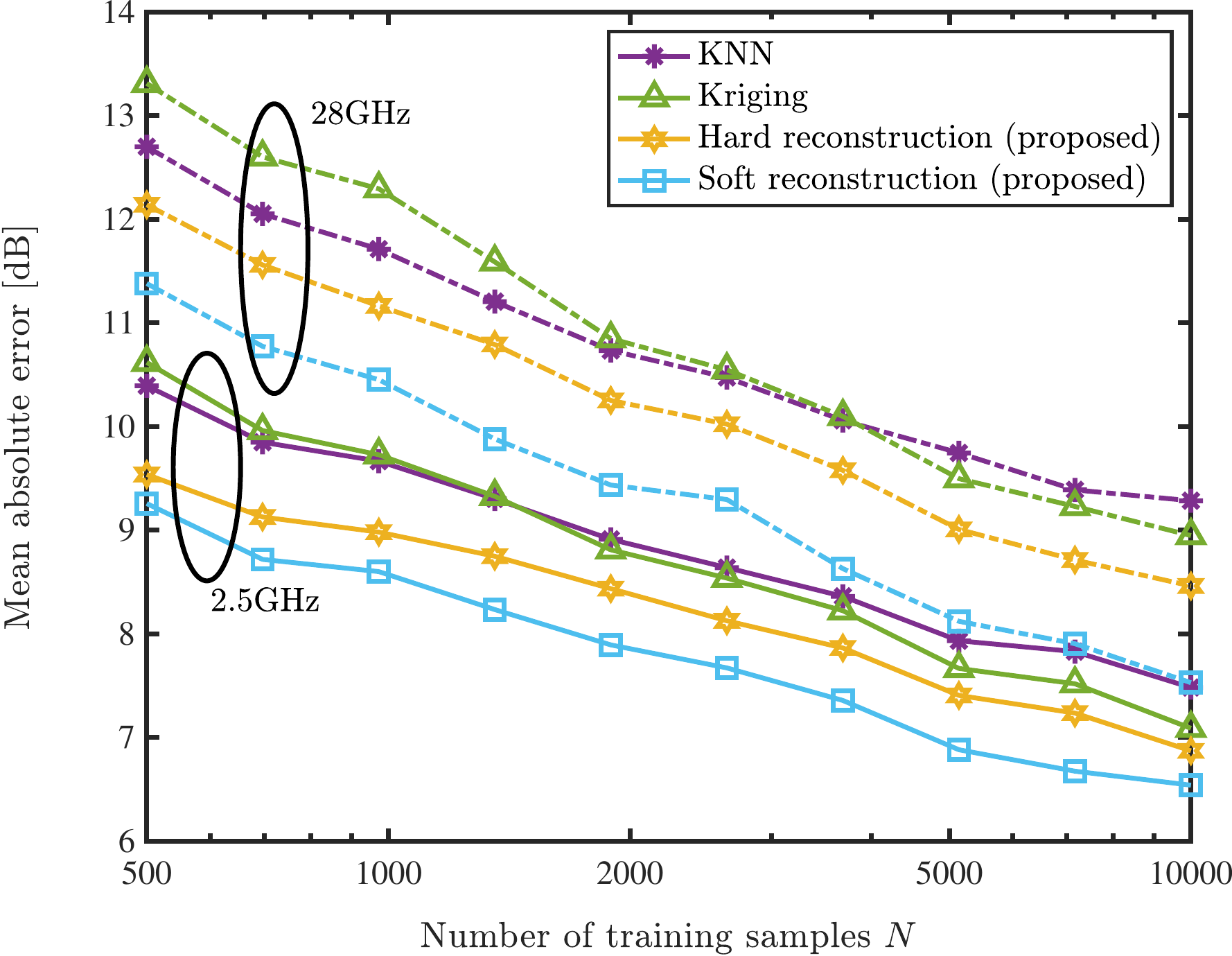}
\par\end{centering}
\caption{Left: reconstruction error versus the number of training samples $N$,
where solid lines are from 7 dB data and dashed lines are from 3 dB
data; right: reconstruction error versus the number of training samples
$N$, where solid lines are from data of the frequency at 2.5 GHz
and dashed lines are from 28 GHz data.}

\label{fig:maeall}
\end{figure}

\begin{itemize}
\item \textbf{Geometry-Awareness:} The proposed method, which estimates
both the environment-aware deterministic radio map $\bar{g}(\mathbf{p})$
and the residual shadowing $\xi(\mathbf{p})$, can bring down the
MAE by 1\textendash 2 dB, corresponding to more than 50\% reduction
in the required measurement samples. This confirms that recovering
the virtual geometry of propagation environment with radio semantics
does help radio map reconstruction.
\item \textbf{Spatial Correlation:} The proposed hard reconstruction scheme
uses the indicator function in (\ref{eq:esthidvircon}) to make a
hard decision on the propagation condition, whereas, the proposed
soft reconstruction scheme uses the likelihood function (\ref{eq:aveinddis})
for the propagation condition. Although the two methods have the same
number of parameters to estimate in the learning phase, the soft reconstruction
model (\ref{eq:aveinddis}) has a higher model complexity. Specifically,
the link status depends on a lot more virtual map parameters $h_{m,k}$
in the soft reconstruction model (\ref{eq:aveinddis}), whereas, in
the hard boundary model (\ref{eq:esthidvircon}), the status of the
same link depends on only a subset of the parameters from that of
the soft model (\ref{eq:aveinddis}); the hard model (\ref{eq:esthidvircon})
is a special case of the soft one. As a result, given enough training
data, the soft reconstruction model (\ref{eq:aveinddis}) requires
a higher computational complexity in the reconstruction phase but
achieves substantially better performance than hard reconstruction.
\end{itemize}

\subsection{Reconstructing the Geometry of the Radio Environment \label{subsec:Radio-environment-reconstruction}}

We demonstrate the recovered geometry of the surroundings that represent
the propagation environment with radio semantics.

Fig. \ref{fig:virobs} shows the environment reconstruction under
Dataset A and 7 dB measurement noise. The left figure in Fig.~\ref{fig:virobs}
shows the city map of a local area, and the right figure shows the
reconstructed virtual obstacle map from the \ac{rss} measurements.
It is observed that the geometry of the reconstructed radio environment
is roughly consistent with the city map.

Note that the two maps have different physical meanings. The city
map represents objects seen by visual light via reflection and scattering,
whereas, the virtual obstacle map represents objects ``seen'' by
radio signals via penetration, diffraction, reflection, and scattering,
etc. The virtual obstacle map servers as a low dimensional (2D) geometry
interpretation of the radio environment.

Furthermore, we obtain some insights from our experiments for the
choice of the parameters $K$ and $M$ in constructing the virtual
obstacle map.
\begin{itemize}
\item \textbf{Performance-Complexity Tradeoff}: The number of virtual obstacle
types $K$ and the number of grid cells $M$ affect the model complexity
as seen in Section \ref{subsec:Multi-class-Virtual-Obstacle}. Specifically,
the number of parameters to be estimated (such as $h_{m,k}$ in (\ref{eq:esthidvircon}))
scales as $\mathcal{O}(MK)$. In an ideal case, a larger model may
provide a better approximation to reality. For example, at $K=2$,
the proposed model not only differentiates \ac{nlos} links from \ac{los}
ones, but also differentiates whether the \ac{nlos} links suffer
from strong attenuation or light attenuation. However, a model with
a larger $K$ or $M$ also requires more training data and costs a
higher computational burden to estimate the model parameters.
\item \textbf{Spatial Resolution and Measurement Data Requirement Tradeoff}:
The parameters $K$ and $M$ also affect the spatial resolution of
the reconstructed radio map. A larger $M$ corresponds to a finer
spacing of grid cells for estimating the virtual obstacles, and a
larger $K$ corresponds to more types of virtual obstacles per grid
cell. Therefore, it is expected that a model with larger $K$ and
$M$ may represent finer details in a radio map, and likewise, require
more measurement data and more computational resources.
\end{itemize}
Our experiments suggest that the best practice for choosing $M$ and
$K$ is such that there are on average 5\textendash 20 measurement
links passing over a grid cell for each type of virtual obstacle.
Moreover, our earlier work also studied a method to dynamically adjust
the resolution $M$ locally according to the amount of measurement
data \cite{ZhaChe:C20}. 
\begin{figure}
\begin{centering}
\includegraphics[width=0.24\columnwidth]{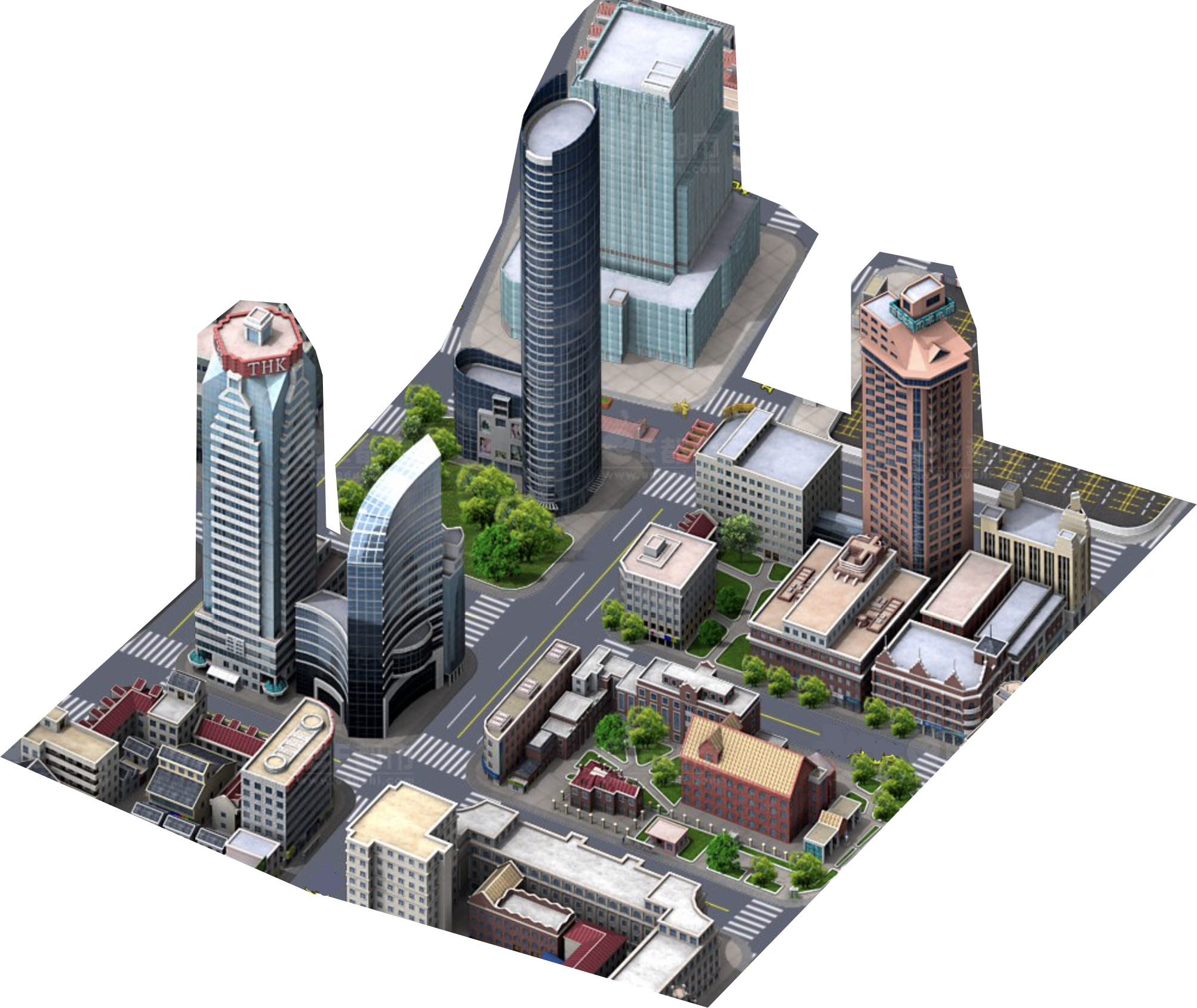}\includegraphics[width=0.25\columnwidth]{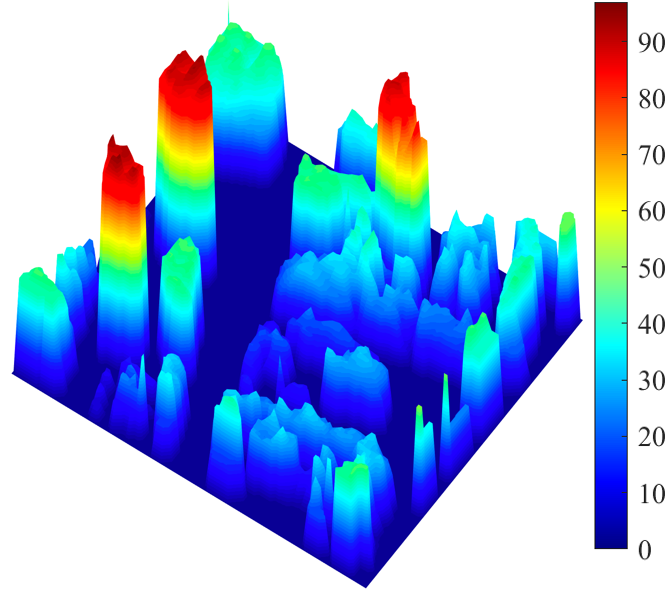}
\par\end{centering}
\caption{The left figure is the ground truth of $K=1$ virtual obstacle model.
The right figure demonstrates $K=1$ virtual obstacle height estimation
$\hat{\mathbf{H}}$. The color represents the heights in meters.}

\label{fig:virobs}
\end{figure}

\subsection{Application in UAV-aided Relay Communication}

Consider a scenario of placing a \ac{uav} relay in 3D to establish
a relay link for two ground users probably in deep shadow in a dense
urban environment. For demonstration purpose, suppose that a half-duplex
decode-and-forward relay strategy is used for a narrowband single
antenna system. We evaluate the end-to-end capacity from one user
to the other via the \ac{uav} relay. The capacity is clearly a function
of the \ac{uav} position $\mathbf{p}_{\text{d}}$ and the radio environment:
$C_{\textrm{DF}}(\mathbf{p}_{\text{d}})=\frac{1}{2}W\min\left\{ \log_{2}(1+\kappa P_{\textrm{b}}g_{\textrm{d,b}}(\mathbf{p}_{\text{d}})),\log_{2}(1+\kappa P_{\textrm{r}}g_{\textrm{u,d}}(\mathbf{p}_{\text{d}}))\right\} $,
where $W=100$ MHz is the bandwidth, $\kappa=0.5$ is a discount factor
to capture the modulation and coding loss, $P_{\textrm{b}},P_{\textrm{r}}=104$
dB is the ratio of the transmission power $20$ dBm over the received
noise power $N_{0}W$ with $N_{0}=-164$ dBm/Hz, and $g_{\textrm{d,b}}(\mathbf{p}_{\text{d}}),g_{\textrm{u,d}}(\mathbf{p}_{\text{d}})$
are channel gain depending on the \ac{uav} position. 

We propose to maximize the relay channel capacity by optimizing the
\ac{uav} position $\mathbf{p}_{\text{d}}$ using the radio map $\widetilde{g}(\mathbf{p})=\hat{g}(\mathbf{p})+\hat{\xi}(\mathbf{p})$
constructed in this paper. To benchmark the performance, we also evaluate
baselines that optimize the \ac{uav} position based on the radio
map, or propagation model, constructed by the following schemes found
in the recent literature:
\begin{enumerate}
\item Statistical Map \cite{AlhKanJam:C14,YouZha:J20}: Define $p(\phi(\mathbf{p}))$
as the \ac{los} probability of the user\textendash \ac{uav} position
pair $\mathbf{p}=(\mathbf{p}_{\text{u}},\mathbf{p}_{\text{d}})$ given
the elevation angle $\phi(\mathbf{p})$ from the user to the \ac{uav}.
Then, the channel gain $g_{\textrm{d,b}}(\mathbf{p}_{\text{d}})$
or $g_{\textrm{u,d}}(\mathbf{p}_{\text{d}})$ is estimated as $p(\phi(\mathbf{p}))G_{0}(\mathbf{p})+(1-p(\phi(\mathbf{p})))G_{1}(\mathbf{p})$,
where $G_{k}(\mathbf{p})=a_{k}+b_{k}\log_{10}\tilde{d}(\mathbf{p})$
with $\tilde{d}(\mathbf{p})$ being the propagation distance. The
empirical distribution $p(\phi)$ is obtained via offline training
based on data $\{(\mathbf{p}^{(i)},y^{(i)})\}_{i=1}^{N}$, and the
parameters $a_{k},b_{k}$ are empirically fitted from data under \ac{los}
(for $k=0$) or under \ac{nlos} (for $k=1$).\lyxdeleted{User}{Thu Sep  1 09:54:13 2022}{ }
\item \Ac{knn}-based Radio Map: The radio map is constructed using the
\ac{knn} approach described in Section \ref{subsec:Radio-Map-Reconstruction}.\lyxdeleted{User}{Thu Sep  1 09:54:13 2022}{ }
\end{enumerate}
\begin{figure}
\begin{centering}
\includegraphics[width=0.5\columnwidth]{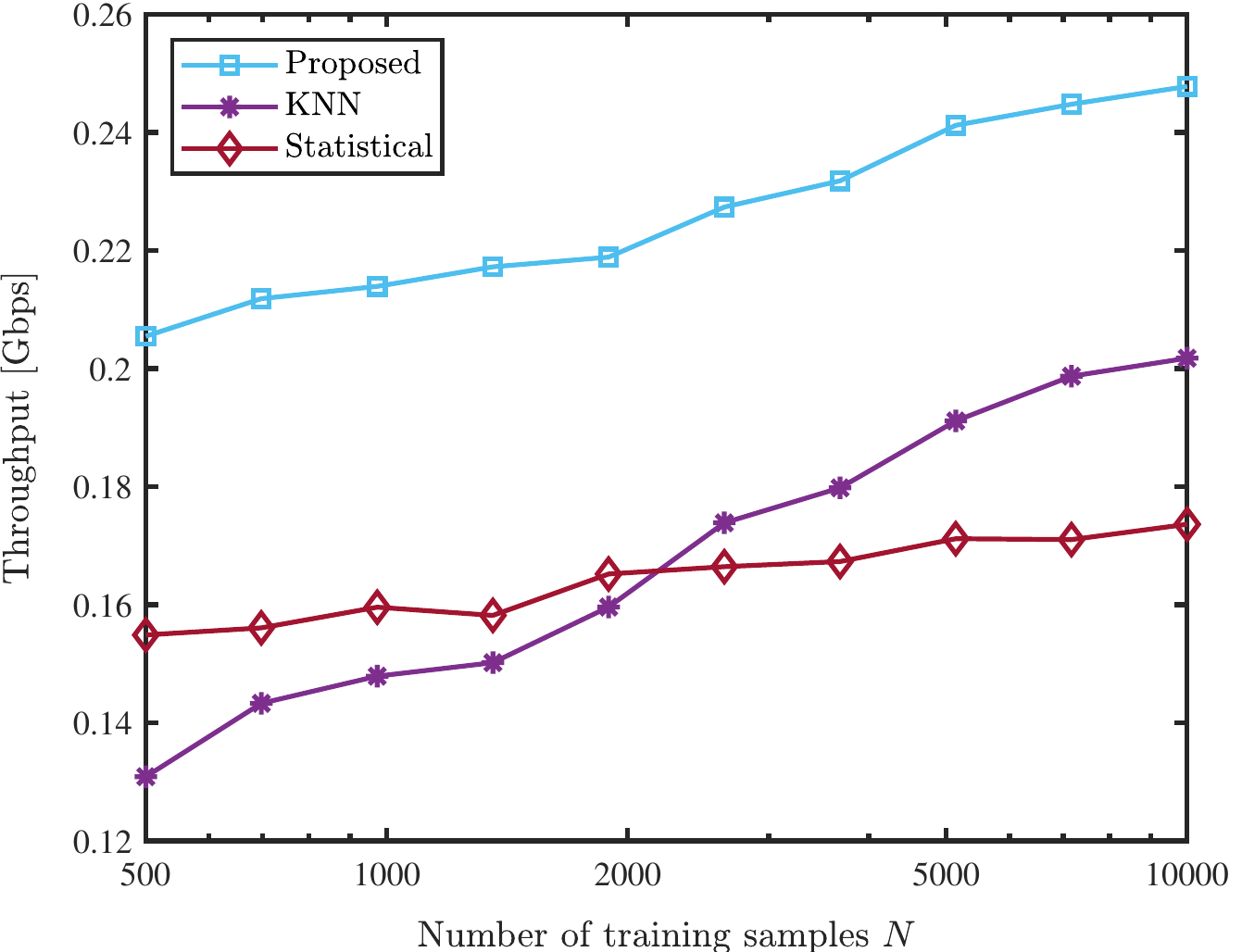}
\par\end{centering}
\caption{Average \ac{uav}-aided relay channel capacity evaluated over 1,225
user positions from Dataset B in a dense urban environment.\lyxdeleted{User}{Thu Sep  1 09:54:13 2022}{ }}

\label{fig:maxthr}
\end{figure}

Fig. \ref{fig:maxthr} demonstrates the average capacity versus the
number of training samples $N$ from Dataset B for the 2.5 GHz case.
Statistical method requires the least number of training samples before
its performance saturates, because the model has the least parameters
and a poor capability to describe the radio environment. The capacity
of both the \ac{knn}-based approach and the proposed approach increases
as the number of training samples $N$ gets larger. The significant
gain of the proposed scheme over the \ac{knn}-based approach can
be understood from the fact that the proposed method reconstructs
a radio map much more accurate than the \ac{knn} approach does. Specifically,
the proposed scheme achieves more than $50\%$ capacity gain over
the \ac{knn}-based approach in the small sample regime (around 500
training samples); in addition, it also outperforms the statistical
scheme by more than $30\%$ capacity gain and the performance gain
further increases when more training samples are used. Thus, we may
conclude that an accurate radio map may substantially enhance the
performance in a wireless communication system.

\section{Conclusion\label{sec:Conclusion}}

This paper developed a radio map model and estimation algorithms for
constructing radio maps from \ac{rss} measurements between aerial
nodes and ground nodes. The core idea is to construct a multi-class
3D virtual obstacle model from the \ac{rss} measurement data to capture
the geometry of the environment with radio semantics. A joint estimation
problem of the propagation parameters and the 3D virtual environment
map is formulated. While the estimation problem is non-convex with
degenerated gradient, we discover the \lyxdeleted{User}{Thu Sep  1 09:54:13 2022}{ }partially
quasiconvex property, which leads to the development of an efficient
parameter estimation radio map construction algorithm. Numerical results
demonstrated that by constructing the 3D virtual environment, the
required measurement data for achieving a same radio map construction
accuracy can be reduced by more than a half. It is also found that
when the proposed radio map is applied to \ac{uav} placement for
\ac{uav}-aided relay communication, more than $50\%$ relay capacity
gain can be achieved.
\begin{appendices}

\section{Proof of Lemma 1}\label{app:prooflemnoi}

Under $y^{(i)}=\bar{g}(\mathbf{p}^{(i)};\bm{\theta}^{*},\mathbf{H}^{*})+n^{(i)}$,
we obtain
\begin{align}
f(\bm{\theta},\mathbf{H}) & =\frac{1}{N}\sum\limits _{i=1}^{N}\Big[y^{(i)}-\sum_{k=0}^{K}\big(\beta_{k}+\alpha_{k}d(\mathbf{p}^{(i)})\big)S_{k}(\mathbf{p}^{(i)};\mathbf{H})\Big]^{2}\nonumber \\
 & =\frac{1}{N}\sum\limits _{i=1}^{N}\Big[\bar{g}(\mathbf{p}^{(i)};\bm{\theta}^{*},\mathbf{H}^{*})+n^{(i)}-\sum_{k=0}^{K}\big(\beta_{k}+\alpha_{k}d(\mathbf{p}^{(i)})\big)S_{k}(\mathbf{p}^{(i)};\mathbf{H})\Big]^{2}\nonumber \\
 & =\frac{1}{N}\sum\limits _{i=1}^{N}\left[\big(\Delta^{(i)}\big)^{2}+2\Delta^{(i)}n^{(i)}+\big(n^{(i)}\big)^{2}\right]\label{eq:app-lemma-eq1}\\
 & =\bar{f}(\bm{\theta},\mathbf{H})+\frac{1}{N}\sum\limits _{i=1}^{N}\big[2\Delta^{(i)}n^{(i)}+\big(n^{(i)}\big)^{2}\big]\label{eq:app-lemma-eq2}
\end{align}
where, in (\ref{eq:app-lemma-eq1}), we denote 
\[
\Delta^{(i)}=\bar{g}(\mathbf{p}^{(i)};\bm{\theta}^{*},\mathbf{H}^{*})-\sum_{k=0}^{K}\big(\beta_{k}+\alpha_{k}d(\mathbf{p}^{(i)})\big)S_{k}(\mathbf{p}^{(i)};\mathbf{H})
\]
and thus, $\bar{f}(\bm{\theta},\mathbf{H})=\frac{1}{N}\sum_{i=1}^{N}\Delta^{(i)}$
in (\ref{eq:app-lemma-eq2}) as from (\ref{eq:formulnonoi}).

It remains to show that the second term in (\ref{eq:app-lemma-eq2})
converges to $C$ in probability.

We first show the convergence of $\frac{1}{N}\sum_{i=1}^{N}2\Delta^{(i)}n^{(i)}$.
For brevity, we denote $X_{i}=2\Delta^{(i)}n^{(i)}=2\Delta^{(i)}(\xi^{(i)}+\tilde{n}^{(i)})$
from the observation model, where we have further denoted $\xi^{(i)}\triangleq\xi(\mathbf{p}^{(i)})$.
It follows that $\mathbb{E}\{X_{i}\}=0$ since both $\xi^{(i)}$ and
$\tilde{n}^{(i)}$ have zero mean. As a result, $\mbox{cov}(X_{i},X_{j})=\mathbb{E}\{X_{i}X_{j}\}=4\Delta^{(i)}\Delta^{(j)}\mbox{cov}(\xi^{(i)},\xi^{(j)})$
for $i\neq j$, since $\xi^{(i)}$ and $\tilde{n}^{(i)}$ are independent
with zero mean.

Note that $\Delta^{(i)}$ is a deterministic function with bounded
values due to its definition and the fact that the regions of $\bm{\theta}$,
$\mathbf{H}$, and $\mathbf{p}^{(i)}$ are bounded, \emph{i.e.}, there
exists $C_{0}$, such that $4|\Delta^{(i)}\Delta^{(j)}|<C_{0}$ for
all $i,j$. Therefore, we must have $|\mbox{cov}(X_{i},X_{j})|\leq C_{0}|\mbox{cov}(\xi^{(i)},\xi^{(j)})|$
for $i\neq j$. Since $|\mbox{cov}(\xi^{(i)},\xi^{(j)})|\to0$ as
$|i-j|\to\infty$ and $N\to\infty$, it follows that $|\mbox{cov}(X_{i},X_{j})|\to0$.

We are interested in the variance $\mathbb{V}\{\frac{1}{N}\sum_{i=1}^{N}X_{i}\}$,
which equals to 
\begin{equation}
\frac{1}{N^{2}}\sum_{i=1}^{N}\mathbb{V}\{X_{i}\}+\frac{2}{N^{2}}\sum_{i=1}^{N}\sum_{j=i+1}^{N}\mbox{cov}(X_{i},X_{j})\label{eq:app-lemma-var}
\end{equation}
where, in the first term, $\mathbb{V}\{X_{i}\}\leq C_{1}$ since $\xi^{(i)}$
and $\tilde{n}^{(i)}$ have bounded variance and $\Delta^{(i)}$ is
bounded. In the second term, the condition $|\mbox{cov}(X_{i},X_{j})|\to0$
implies that for any $\epsilon_{1}>0$, there exists a finite $N_{1}$
such that $\mbox{cov}(X_{i},X_{j})<\epsilon_{1}$ for all $j-i>N_{1}$.
So, it follows that 
\begin{align*}
\sum_{j=i+1}^{N}\mbox{cov}(X_{i},X_{j}) & =\sum_{j=i+1}^{i+N_{1}}\mbox{cov}(X_{i},X_{j})+\sum_{j=i+N_{1}+1}^{N}\mbox{cov}(X_{i},X_{j})\\
 & \leq\sum_{j=i+1}^{i+N_{1}}\sqrt{\mathbb{V}\{X_{i}\}\mathbb{V}\{X_{j}\}}+\sum_{j=i+N_{1}+1}^{N}\epsilon_{1}\\
 & \leq N_{1}C_{1}+N\epsilon_{1}
\end{align*}
where the first term in the second line is due to the Cauchy-Schwarz
inequality.

As a result, $\mathbb{V}\{\frac{1}{N}\sum_{i=1}^{N}X_{i}\}$ from
(\ref{eq:app-lemma-var}) can be upper bounded as 
\begin{align*}
\mathbb{V}\Big\{\frac{1}{N}\sum_{i=1}^{N}X_{i}\Big\} & \leq\frac{1}{N^{2}}\sum_{i=1}^{N}C_{1}+\frac{2}{N^{2}}\sum_{i=1}^{N}\big(N_{1}C_{1}+N\epsilon_{1}\big)\\
 & =\frac{(1+2N_{1})C_{1}}{N}+2\epsilon_{1}.
\end{align*}

Therefore, for any $\epsilon>0$, one can choose $\epsilon_{1}=\epsilon/3$
and $N_{2}=3(1+2N_{1})C_{1}/\epsilon$, such that $\mathbb{V}\{\frac{1}{N}\sum_{i=1}^{N}X_{i}\}<\epsilon$
for all $N>N_{2}$. This establishes that $\mathbb{V}\{\frac{1}{N}\sum_{i=1}^{N}X_{i}\}\to0$
as $N\to\infty$.

As a result, by Chebyshev's inequality, we have 
\[
\mathbb{P}\Big\{\Big|\frac{1}{N}\sum_{i=1}^{N}X_{i}-\mathbb{E}\{X_{1}\}\Big|>\epsilon_{0}\}\leq\frac{\mathbb{V}\big\{\frac{1}{N}\sum_{i=1}^{N}X_{i}\big\}}{\epsilon_{0}^{2}}\to0
\]
as $N\to\infty$. This shows that $\frac{1}{N}\sum_{i=1}^{N}X_{i}\to0$
in probability.

Finally, to show the convergence of $\frac{1}{N}\sum_{i=1}^{N}\big(n^{(i)}\big)^{2}$
in (\ref{eq:app-lemma-eq2}), we have 
\[
\frac{1}{N}\sum\limits _{i=1}^{N}\big(n^{(i)}\big)^{2}=\frac{1}{N}\sum\limits _{i=1}^{N}(\tilde{n}^{(i)})^{2}+2\frac{1}{N}\sum\limits _{i=1}^{N}\tilde{n}^{(i)}\xi^{(i)}+\frac{1}{N}\sum\limits _{i=1}^{N}(\xi^{(i)})^{2}
\]
where the first term and the second term respectively converges to
$\mathbb{E}\{(\tilde{n}^{(i)})^{2}\}=\sigma_{\text{n}}^{2}$ and $\mathbb{E}\{\tilde{n}^{(i)}\xi^{(i)}\}=0$
in probability, as $N\to\infty$, due to the weak law of large number
and the fact that $\tilde{n}^{(i)}$ and $\xi^{(i)}$ are independent;\footnote{More rigorously, the convergence of the second term can be proven
by Chebyshev's inequality following a similar procedure as proving
the convergence of $\frac{1}{N}\sum_{i=1}^{N}X_{i}$.} the third term converges to a finite value $C_{2}$ from the assumption
of the lemma.

Therefore, we have shown that $f(\bm{\theta},\mathbf{H})\to\bar{f}(\bm{\theta},\mathbf{H})+C$
for $C=\sigma_{\text{n}}^{2}+C_{2}$ in probability as $N\to\infty$
for every $(\bm{\theta},\mathbf{H})$.

\section{Proof of Theorem 1}\label{app:proofthebas}

To simplify the notations of the proof, we denote $\gamma_{k}^{(i)}=\beta_{k}^{*}+\alpha_{k}^{*}d(\mathbf{p}^{(i)})$
as the path loss for $d(\mathbf{p}^{(i)})$ and the $k$th degree
of signal obstruction, $\mathbb{I}_{k}^{(i,j)}(\mathbf{h})=\mathbb{I}\{\mathbf{p}^{(i)}+\bm{\epsilon}_{j}\in\mathcal{D}_{k}(\mathbf{h})\},\forall j=1,\ldots,J-1$,
$\mathbb{I}_{k}^{(i)}(\mathbf{h})=\mathbb{I}\{\mathbf{p}^{(i)}\in\mathcal{D}_{k}(\mathbf{h})\}$.
Accordingly, $\bar{g}^{(i)}\triangleq\bar{g}(\mathbf{p}^{(i)};\bm{\theta}^{*},\mathbf{h}^{*})=\sum\nolimits _{k=0}^{1}\gamma_{k}^{(i)}S_{k}(\mathbf{p}^{(i)};\mathbf{h}^{*})$
from (\ref{eq:channel-model-nonoi}), and $S_{k}(\mathbf{p}^{(i)};\mathbf{h}^{*})=\sum_{j=0}^{J-1}\omega_{j}\mathbb{I}_{k}^{(i,j)}(\mathbf{h}^{*})$
from (\ref{eq:aveinddis}).

Define variable $\mathbf{h}_{-m}$ as a vector from $\mathbf{h}$
but with the $m$th element $h_{m}$ removed. Thus, by restricting
$\bar{f}(\bm{\theta},\mathbf{h})$ in (\ref{eq:formulnonoi}) to take
value in the interval $\mathcal{I}_{m}$, we have
\begin{align}
\bar{f}_{m}(h_{m};\bm{\theta}^{*},\mathbf{h}_{-m}) & =\frac{1}{N}\sum\limits _{i=1}^{N}\big[\bar{g}^{(i)}-\sum\limits _{k=0}^{1}\gamma_{k}^{(i)}\sum_{j=0}^{J-1}\omega_{j}\mathbb{I}_{k}^{(i,j)}(\mathbf{h})\big]^{2}\nonumber \\
 & =\frac{1}{N}\sum\limits _{i=1}^{N}\big[\bar{g}^{(i)}-\sum\limits _{k=0}^{1}\gamma_{k}^{(i)}\omega_{0}\mathbb{I}_{k}^{(i)}(\mathbf{h})-\sum\limits _{k=0}^{1}\gamma_{k}^{(i)}\sum_{j=1}^{J-1}\omega_{j}\mathbb{I}_{k}^{(i,j)}(\mathbf{h})\big]^{2}.\label{eq:formulsim}
\end{align}
Substituting $\bar{g}^{(i)}$ into (\ref{eq:formulsim}), we obtain
\begin{multline}
\bar{f}_{m}(h_{m};\bm{\theta}^{*},\mathbf{h}_{-m})=\frac{1}{N}\sum\limits _{i=1}^{N}\bigg[\sum\limits _{k=0}^{1}\gamma_{k}^{(i)}\omega_{0}\mathbb{I}_{k}^{(i)}(\mathbf{h}^{*})-\sum\limits _{k=0}^{1}\gamma_{k}^{(i)}\omega_{0}\mathbb{I}_{k}^{(i)}(\mathbf{h})\\
\qquad+\sum\limits _{k=0}^{1}\gamma_{k}^{(i)}\sum_{j=1}^{J-1}\omega_{j}\mathbb{I}_{k}^{(i,j)}(\mathbf{h}^{*})-\sum\limits _{k=0}^{1}\gamma_{k}^{(i)}\sum_{j=1}^{J-1}\omega_{j}\mathbb{I}_{k}^{(i,j)}(\mathbf{h})\bigg]^{2}\\
=\frac{1}{N}\sum\limits _{i=1}^{N}\bigg[\underbrace{\omega_{0}\sum\limits _{k=0}^{1}\gamma_{k}^{(i)}(\mathbb{I}_{k}^{(i)}(\mathbf{h}^{*})-\mathbb{I}_{k}^{(i)}(\mathbf{h}))}_{(a)}+\underbrace{\sum_{j=1}^{J-1}\omega_{j}\sum\limits _{k=0}^{1}\gamma_{k}^{(i)}(\mathbb{I}_{k}^{(i,j)}(\mathbf{h}^{*})-\mathbb{I}_{k}^{(i,j)}(\mathbf{h}))}_{(b)}\bigg]^{2}.\label{eq:formulsimab}
\end{multline}

We have three cases for the value of $\sum\nolimits _{k=0}^{1}\gamma_{k}^{(i)}(\mathbb{I}_{k}^{(i)}(\mathbf{h}^{*})-\mathbb{I}_{k}^{(i)}(\mathbf{h}))$
or $\sum\nolimits _{k=0}^{1}\gamma_{k}^{(i)}(\mathbb{I}_{k}^{(i,j)}(\mathbf{h}^{*})-\mathbb{I}_{k}^{(i,j)}(\mathbf{h}))$,
\emph{i.e.}, $0$, $\gamma_{0}^{(i)}-\gamma_{1}^{(i)}$, and $\gamma_{1}^{(i)}-\gamma_{0}^{(i)}$.
Notice that, $\gamma_{0}^{(i)}>\gamma_{1}^{(i)}$ from our model (\ref{eq:channel-model})
and smaller $k$ means less signal obstruction. $\mathbb{I}_{k}^{(i,j)}(\mathbf{h}^{*})-\mathbb{I}_{k}^{(i,j)}(\mathbf{h}))\in\{-1,0,1\}$.
\begin{itemize}
\item For term (a): We have, for both $k=0$ and $1$, term (a) $\in\{0,\omega_{0}(\gamma_{0}^{(i)}-\gamma_{1}^{(i)}),\omega_{0}(\gamma_{1}^{(i)}-\gamma_{0}^{(i)})\}$.
\item For term (b): We obtain a lower bound and an upper bound as
\begin{equation}
\sum_{j=1}^{J-1}\omega_{j}(\gamma_{1}^{(i)}-\gamma_{0}^{(i)})\le(b)\le\sum_{j=1}^{J-1}\omega_{j}(\gamma_{0}^{(i)}-\gamma_{1}^{(i)}).\label{eq:deltays}
\end{equation}
\end{itemize}
Set $\omega_{0}\ge\frac{2}{3}$, it follows that $0\le\sum_{j=1}^{J-1}\omega_{j}\le\frac{1}{3}$
since $\sum_{j}\omega_{j}=1$, and hence,
\begin{equation}
[\pm\omega_{0}+\sum_{j=1}^{J-1}\omega_{j}]^{2}\ge[\sum_{j=1}^{J-1}\omega_{j}]^{2}.\label{eq:winequraw}
\end{equation}
By (\ref{eq:deltays}) and (\ref{eq:winequraw}), for any $i$, \ac{wrt}
the value of term (a),
\begin{itemize}
\item Substituting the upper bound in (\ref{eq:deltays}) into $[(a)+(b)]^{2}$
in (\ref{eq:formulsimab}), we obtain
\[
[\pm\omega_{0}(\gamma_{0}^{(i)}-\gamma_{1}^{(i)})+\sum_{j=1}^{J-1}\omega_{j}(\gamma_{0}^{(i)}-\gamma_{1}^{(i)})]^{2}\ge[0+\sum_{j=1}^{J-1}\omega_{j}(\gamma_{0}^{(i)}-\gamma_{1}^{(i)})]^{2}.
\]
\item Substituting the lower bound in (\ref{eq:deltays}) into $[(a)+(b)]^{2}$
in (\ref{eq:formulsimab}), we obtain
\[
[\pm\omega_{0}(\gamma_{0}^{(i)}-\gamma_{1}^{(i)})+\sum_{j=1}^{J-1}\omega_{j}(\gamma_{1}^{(i)}-\gamma_{0}^{(i)})]^{2}\ge[0+\sum_{j=1}^{J-1}\omega_{j}(\gamma_{1}^{(i)}-\gamma_{0}^{(i)})]^{2}.
\]
\end{itemize}
Hence,
\begin{multline}
[\pm\omega_{0}(\gamma_{0}^{(i)}-\gamma_{1}^{(i)})+\sum_{j=1}^{J-1}\omega_{j}\sum\limits _{k=0}^{1}\gamma_{k}^{(i)}(\mathbb{I}_{k}^{(i,j)}(\mathbf{h}^{*})-\mathbb{I}_{k}^{(i,j)}(\mathbf{h}))]^{2}\\
\ge[0+\sum_{j=1}^{J-1}\omega_{j}\sum\limits _{k=0}^{1}\gamma_{k}^{(i)}(\mathbb{I}_{k}^{(i,j)}(\mathbf{h}^{*})-\mathbb{I}_{k}^{(i,j)}(\mathbf{h}))]^{2}\label{eq:winequab}
\end{multline}
where $\pm\omega_{0}(\gamma_{0}^{(i)}-\gamma_{1}^{(i)})$ is for $\mathbb{I}_{k}^{(i)}(\mathbf{h}^{*})\neq\mathbb{I}_{k}^{(i)}(\mathbf{h})$
and $0$ is for $\mathbb{I}_{k}^{(i)}(\mathbf{h}^{*})=\mathbb{I}_{k}^{(i)}(\mathbf{h})$
in term (a).

Denote the right hand side in (\ref{eq:winequab}) as $A^{(i)}$,
the left hand side as $B^{(i)}$, and two sets $\mathcal{I}_{A}(\mathbf{h})=\{i\mid\mathbb{I}_{k}^{(i)}(\mathbf{h})=\mathbb{I}_{k}^{(i)}(\mathbf{h}^{*})\}$,
$\mathcal{I}_{B}(\mathbf{h})=\{i\mid\mathbb{I}_{k}^{(i)}(\mathbf{h})\neq\mathbb{I}_{k}^{(i)}(\mathbf{h}^{*})\}=\{1,2,\ldots,N\}\setminus\mathcal{I}_{A}(\mathbf{h})$.

Recall the obstacle indicator function in (\ref{eq:esthidvircon})
\[
\mathbb{I}\{\mathbf{p}^{(i)}\in\mathcal{D}_{k}(\mathbf{H})\}=(1-\underset{j\in\mathscr{\mathscr{\mathcal{B}}}^{(i)}}{\prod}(1-\mathbb{I}\{h_{j,k}\ge z_{j}^{(i)}\}))\underset{j\in\mathscr{\mathscr{\mathcal{B}}}^{(i)}}{\prod}\underset{l>k}{\prod}\mathbb{I}\{h_{j,l}<z_{j}^{(i)}\}
\]
and we denote a set
\[
\mathcal{L}_{m}(\mathbf{h})=\left\{ i\mid h_{j}<z_{j}^{(i)},\forall j\in\mathcal{B}^{(i)}\setminus\{m\}\right\} 
\]
as measurement samples that can only be blocked by $m$th grid. Then,
\ac{wrt} $h_{m}$, values of $\mathbb{I}_{k}^{(i)}(\mathbf{h})$
have two types: (i) dependent of $h_{m}$, \emph{i.e}., index $i\in\mathcal{L}_{m}(\mathbf{h})$,
and by the condition $h_{j}^{*}<h_{j}$ and $h_{j}<z_{j}^{(i)},\forall j\in\mathcal{B}^{(i)}\setminus\{m\}$,
\[
\mathbb{I}_{k}^{(i)}(h_{m};\mathbf{h}_{-m})=(1-(1-\mathbb{I}\{h_{m,k}\ge z_{m,k}^{(i)}\}))\underset{l>k}{\prod}\mathbb{I}\{h_{m,l}<z_{m}^{(i)}\}
\]
so $\mathbb{I}_{0}^{(i)}(h_{m};\mathbf{h}_{-m})=\mathbb{I}\{h_{m}<z_{m}^{(i)}\}=\mathbb{I}_{0}^{(i)}(h_{m};\mathbf{h}_{-m}^{*})$
and $\mathbb{I}_{1}^{(i)}(h_{m};\mathbf{h}_{-m})=\mathbb{I}\{h_{m}\ge z_{m}^{(i)}\}=\mathbb{I}_{1}^{(i)}(h_{m};\mathbf{h}_{-m}^{*})$;
(ii) independent of $h_{m}$, \emph{i.e}., index $i\in\{1,2,\ldots,N\}\setminus\mathcal{L}_{m}(\mathbf{h})$,
and then $\mathbb{I}_{k}^{(i)}(h_{m};\mathbf{h}_{-m})$ will be constant.
Hence, we only need to consider the type (i) samples. 
\begin{figure}
\begin{centering}
\includegraphics[width=0.5\columnwidth]{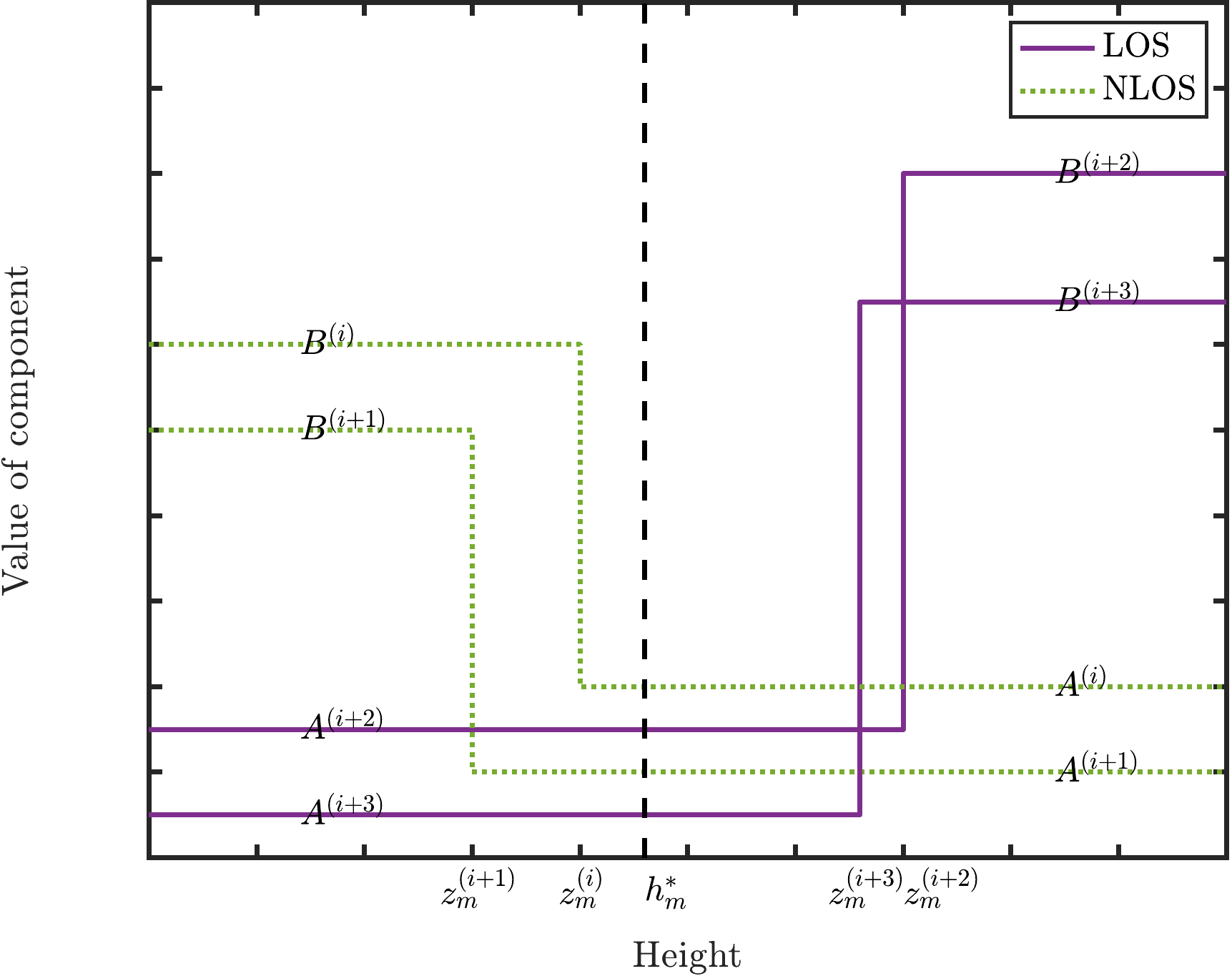}
\par\end{centering}
\caption{An example for the value of component $[\bar{g}^{(i)}-\sum\nolimits _{k=0}^{1}\gamma_{k}^{(i)}\sum\nolimits _{j=0}^{J-1}\omega_{j}\mathbb{I}_{k}^{(i,j)}(\mathbf{h})]^{2}$
vs. $h_{m}$.}

\label{fig:indobj}
\end{figure}
For index $i\in\mathcal{L}_{m}(\mathbf{h})$,
\begin{itemize}
\item \ac{los} measurement samples satisfy $h_{m}^{*}<z_{m}^{(i)}$. As
shown in Fig. \ref{fig:indobj}, when $h_{m}<z_{m}^{(i)}$, we obtain
$\mathbb{I}_{k}^{(i)}(\mathbf{h})=\mathbb{I}_{k}^{(i)}(\mathbf{h}^{*})$
and $[(a)+(b)]^{2}$ in (\ref{eq:formulsimab}) taking value $A^{(i)}$.
When $h_{m}\ge z_{m}^{(i)}$, we obtain $\mathbb{I}_{k}^{(i)}(\mathbf{h})\neq\mathbb{I}_{k}^{(i)}(\mathbf{h}^{*})$
and $[(a)+(b)]^{2}$ taking value $B^{(i)}$.
\item \ac{nlos} measurement samples satisfy $h_{m}^{*}\ge z_{m}^{(i)}$.
As shown in Fig. \ref{fig:indobj}, when $h_{m}\ge z_{m}^{(i)}$,
we obtain $\mathbb{I}_{k}^{(i)}(\mathbf{h})=\mathbb{I}_{k}^{(i)}(\mathbf{h}^{*})$
and $[(a)+(b)]^{2}$ in (\ref{eq:formulsimab}) taking value $A^{(i)}$.
When $h_{m}<z_{m}^{(i)}$, we obtain $\mathbb{I}_{k}^{(i)}(\mathbf{h})\neq\mathbb{I}_{k}^{(i)}(\mathbf{h}^{*})$
and $[(a)+(b)]^{2}$ taking value $B^{(i)}$.
\end{itemize}
Therefore, the smaller the distance between $h_{m}$ and $h_{m}^{*}$
is, the larger the set $\mathcal{I}_{A}(h_{m};\mathbf{h}_{-m})$ will
be, \emph{i.e}., the more $\mathbb{I}_{k}^{(i)}(h_{m};\mathbf{h}_{-m})=\mathbb{I}_{k}^{(i)}(h_{m}^{*};\mathbf{h}_{-m}^{*})$.
$\ensuremath{\forall}\ensuremath{\epsilon}\ensuremath{>0}$, when
$h_{m}<h_{m}^{*}$, $\mathcal{I}_{A}(h_{m}-\epsilon;\mathbf{h}_{-m})\subseteq\mathcal{I}_{A}(h_{m};\mathbf{h}_{-m})$,
and when $h_{m}>h_{m}^{*}$, $\mathcal{I}_{A}(h_{m}+\epsilon;\mathbf{h}_{-m})\subseteq\mathcal{I}_{A}(h_{m};\mathbf{h}_{-m})$.
Intuitively, for index $i\in\mathcal{L}_{m}(\mathbf{h})$, the value
of $[(a)+(b)]^{2}$ in (\ref{eq:formulsimab}) is shown in Fig. \ref{fig:indobj},
and for index $i\in\{1,2,\ldots,N\}\setminus\mathcal{L}_{m}(\mathbf{h})$,
the value is constant in $h_{m}$. We can see the summation of all
\ac{los} is increasing in $h_{m}$ and is constant from zero to a
height which is greater than $h_{m}^{*}$. Similarly, the summation
of all \ac{nlos} is decreasing in $h_{m}$ and is constant from zero
to a height which is less than $h_{m}^{*}$, so the summation of all
\ac{los} and \ac{nlos} is quasiconvex. Then, we will show the relation
between the sets and the function value mathematically.
\[
\bar{f}_{m}(h_{m};\bm{\theta}^{*},\mathbf{h}_{-m})=\underset{i\in\mathcal{I}_{A}(\mathbf{h})}{\sum}A^{(i)}+\underset{i\in\mathcal{I}_{B}(\mathbf{h})}{\sum}B^{(i)}
\]
where $0<A^{(i)}\le B^{(i)}$. $\text{\ensuremath{\forall}\ensuremath{\epsilon}\ensuremath{>0}}$,
when $\forall h_{m}<h_{m}^{*}$, by using $\mathcal{I}_{A}(h_{m}-\epsilon;\mathbf{h}_{-m})\subseteq\mathcal{I}_{A}(h_{m};\mathbf{h}_{-m})$
and $\mathcal{I}_{A}(h_{m};\mathbf{h}_{-m})=\{1,2,\ldots,N\}\setminus\mathcal{I}_{A}(h_{m};\mathbf{h}_{-m})$,
we finally have
\[
\bar{f}_{m}(h_{m}-\ensuremath{\epsilon};\bm{\theta}^{*},\mathbf{h}_{-m})\ge\bar{f}_{m}(h_{m};\bm{\theta}^{*},\mathbf{h}_{-m}).
\]
Similarly, $\text{\ensuremath{\forall}\ensuremath{\epsilon}\ensuremath{>0}}$,
when $\forall h_{m}>h_{m}^{*}$,
\[
\bar{f}_{m}(h_{m}+\ensuremath{\epsilon};\bm{\theta}^{*},\mathbf{h}_{-m})\ge\bar{f}_{m}(h_{m};\bm{\theta}^{*},\mathbf{h}_{-m}).
\]
To sum up, $\bar{f}_{m}(h_{m};\bm{\theta}^{*};\mathbf{h}_{-m})$ is
quasiconvex in $h_{m}$.

\section{Proof of Theorem 2}\label{app:proofthequa}

By Theorem \ref{thm:quatwosof}, we already have $\bar{f}_{m,1}(h_{m,1};\bm{\theta}^{*},\mathbf{H}_{m,1}^{-})$
is quasiconvex under the given condition, where $\mathbf{H}_{m,k}^{-}$
is a matrix representing $\mathbf{H}$ removing $h_{m,k}$. Then,
we generalize the proof to $K>1$, when $S_{k}(\mathbf{p}^{(i)};\mathbf{H})$
is chosen as the indicator function, \emph{i.e}., $\omega_{0}=1$.
As mentioned previously, $\forall m$, given $\mathbf{H}_{m,1}^{-}\succeq\mathbf{H}_{m,1}^{-}$
and obstacle indicator function (\ref{eq:esthidvircon}), values of
$\mathbb{I}_{k}^{(i)}(\mathbf{H})$ have two types: (i) $\mathbb{I}_{0}^{(i)}(\mathbf{H})=\mathbb{I}\{h_{m,1}<z_{m}^{(i)}\},\mathbb{I}_{1}^{(i)}(\mathbf{H})=\mathbb{I}\{h_{m,1}\ge z_{m}^{(i)}\}$;
(ii) independent of $h_{m,1}$. The case (i) is also known as measurement
samples which can only be blocked by $m$th grid and denote it as
\begin{equation}
\mathcal{L}_{m,K}(\mathbf{H})=\left\{ i\mid h_{j,K}<z_{j,K}^{(i)},\forall j\in\mathcal{B}^{(i)}\setminus\{m\}\right\} \label{eq:lmkforK}
\end{equation}
where $K=1$. In a general $K>1$ case, we denote a set $\mathcal{L}_{m,k}(\mathbf{H})$
for samples only blocked by $m$th grid and no more obstructed than
$k$th class obstacle
\[
\mathcal{L}_{m,k}(\mathbf{H})=\left\{ i\mid h_{j,l}<z_{j}^{(i)},\forall l\ge k,\forall j\in\mathcal{B}^{(i)}\setminus\{m\}\right\} .
\]
For quasiconvex of \ac{nlos} obstacle, \emph{i.e}., obstacle of the
most obscured propagation, we can similarly follow the pipeline of
previous $K=1$ situation, function (\ref{eq:formulnonoi}) becomes
\begin{equation}
\bar{f}_{m,K}(h_{m,K};\bm{\theta}^{*},\mathbf{H}_{m,k}^{-})=\frac{1}{N}\sum\limits _{i=1}^{N}[\bar{g}^{(i)}-\sum_{k=0}^{K}\gamma_{k}^{(i)}\mathbb{I}_{k}^{(i)}(\mathbf{H})]^{2}.\label{eq:formulsimK}
\end{equation}
Recall that $\bar{g}^{(i)}=\sum_{k=0}^{K}\gamma_{k}^{(i)}\mathbb{I}_{k}^{(i)}(\mathbf{H}^{*})$
is from (\ref{eq:channel-model-nonoi}), and we denote $\gamma_{k}^{(i)}=\beta_{k}^{*}+\alpha_{k}^{*}d(\mathbf{p}^{(i)})$
as the path loss for $d(\mathbf{p}^{(i)})$ and the $k$th degree
of signal obstruction, $\mathbb{I}_{k}^{(i)}(\mathbf{H})=\mathbb{I}\{\mathbf{p}^{(i)}\in\mathcal{D}_{k}(\mathbf{H})\}$.
Substitute $\bar{g}^{(i)}$ into (\ref{eq:formulsimK}). We obtain
\begin{align*}
\bar{f}(h_{m,K};\bm{\theta}^{*},\mathbf{H}_{m,k}^{-}) & =\frac{1}{N}\sum\limits _{i=1}^{N}[\sum_{k=0}^{K}\gamma_{k}^{(i)}\mathbb{I}_{k}^{(i)}(\mathbf{H}^{*})-\sum_{k=0}^{K}\gamma_{k}^{(i)}\mathbb{I}_{k}^{(i)}(\mathbf{H})]^{2}\\
 & =\frac{1}{N}\sum\limits _{i=1}^{N}[\sum_{k=0}^{K}\gamma_{k}^{(i)}(\mathbb{I}_{k}^{(i)}(\mathbf{H}^{*})-\mathbb{I}_{k}^{(i)}(\mathbf{H}))]^{2}.
\end{align*}
For term $\sum_{k=0}^{K}\gamma_{k}^{(i)}(\mathbb{I}_{k}^{(i)}(\mathbf{H}^{*})-\mathbb{I}_{k}^{(i)}(\mathbf{H}))$
in it: We have
\[
\sum_{k=0}^{K}\gamma_{k}^{(i)}(\mathbb{I}_{k}^{(i)}(\mathbf{H}^{*})-\mathbb{I}_{k}^{(i)}(\mathbf{H}))=\begin{cases}
A^{(i)}, & \textrm{if }\mathbb{I}_{k}^{(i)}(\mathbf{H}^{*})=\mathbb{I}_{k}^{(i)}(\mathbf{H}),\\
B^{(i)}, & \textrm{if }\mathbb{I}_{k}^{(i)}(\mathbf{H}^{*})\neq\mathbb{I}_{k}^{(i)}(\mathbf{H}).
\end{cases}
\]
where $A^{(i)}=0$ and $B^{(i)}\neq0$. Denote two sets $\mathcal{I}_{A}(\mathbf{H})=\{i\mid\mathbb{I}_{k}^{(i)}(\mathbf{H})=\mathbb{I}_{k}^{(i)}(\mathbf{H}^{*})\}$,
$\mathcal{I}_{B}(\mathbf{H})=\{i\mid\mathbb{I}_{k}^{(i)}(\mathbf{H})\neq\mathbb{I}_{k}^{(i)}(\mathbf{H}^{*})\}=\{1,2,\ldots,N\}\setminus\mathcal{I}_{A}(\mathbf{H})$
and similar to the proof of Theorem \ref{thm:quatwosof}, the smaller
the distance between $h_{m,K}$ and $h_{m,K}^{*}$ is, the larger
the set $\mathcal{I}_{A}(h_{m,K};\mathbf{H}_{m,k}^{-})$ will be,
\emph{i.e}., the more $\mathbb{I}_{k}^{(i)}(h_{m,K};\mathbf{H}_{m,k}^{-})=\mathbb{I}_{k}^{(i)}(h_{m,K}^{*};\mathbf{H}_{m,k}^{-})$.
Therefore, $\bar{f}_{m,K}(h_{m,K};\bm{\theta}^{*},\mathbf{H}_{m,k}^{-})$
is quasiconvex in $h_{m,K}$. We have the sets $\mathcal{L}_{m,K}(\mathbf{H})$
and class-$K$ obstacle heights $h_{m,K}$, where $m=1,2,\ldots,M$.
Followed by \ac{nlos} obstacle, the obstacle of less obscured propagation
is class-$K-1$ and the set is
\[
\mathcal{L}_{m,K-1}(\mathbf{H})=\left\{ i\mid h_{j,l}<z_{j}^{(i)},\forall l\ge K-1,\forall j\in\mathcal{B}^{(i)}\setminus\{m\}\right\} 
\]
in which $l=K-1,K$. On the condition of the sets $\mathcal{L}_{m,K}(\mathbf{H})$
and class-$K$ obstacle heights $h_{m,K}$, $\mathcal{L}_{m,K-1}(\mathbf{H})$
can be also written as
\[
\mathcal{L}_{m,K-1}(\mathbf{H})=\left\{ i\mid h_{j,K-1}<z_{j}^{(i)},\forall j\in\mathcal{B}^{(i)}\setminus\{m\}\right\} \cap\mathcal{L}_{m,K}(\mathbf{H})
\]
which is now also similar to the previous situation (\ref{eq:lmkforK}).
In other words, when we have the most obscured propagation situation,
the less obscured propagation situation can be recursively get. As
a result, for all $m,k$, quasiconvexity of $\bar{f}_{m,k}(h_{m,k};\bm{\theta}^{*},\mathbf{H}_{m,k}^{-})$
is proved and then $\bar{f}(\bm{\theta}^{*},\mathbf{H})$ is element-wise
quasiconvex for each individual element $h_{m,k}$.

\section{Proof of Theorem 3}\label{app:proofthedom}

First of all, the optimal solution $\mathcal{H}^{*}$ to minimize
function (\ref{eq:formulnonoi}) are intervals, because in the objective
function, variable $\mathbf{H}$ is only in an indicator function
taking $1$ by comparing the range of those values. Denote $\mathcal{M}_{m}$
as the index set of measurements $(\mathbf{p}^{(i)},y^{(i)})$ that
the link $(\mathbf{p}_{\text{u}}^{(i)},\mathbf{p}_{\text{d}}^{(i)})$
passes over the $m$th grid cell. Then, by the proof of Theorem \ref{thm:quaharbou},
similar to $\mathcal{L}_{m,k}(\mathbf{H})$, denote a set $\mathcal{Q}_{m,k}(\mathbf{H})$
for samples passing $m$th grid and no more obstructed than $k$th
class obstacle, \emph{i.e}.,
\begin{equation}
\mathcal{Q}_{m,k}(\mathbf{H})=\left\{ i\mid h_{j,l}<z_{j}^{(i)},\forall l\ge k,\forall j\in\mathcal{B}^{(i)},\forall i\in\mathcal{M}_{m}\right\} \label{eq:qmkdefforh}
\end{equation}
and the optimal set $\mathcal{Q}_{m,k}(\mathbf{H}^{*})$
\[
\mathcal{Q}_{m,k}(\mathbf{H}^{*})=\left\{ i\mid h_{j,l}^{*}<z_{j}^{(i)},\forall l\ge k,\forall j\in\mathcal{B}^{(i)},\forall i\in\mathcal{M}_{m}\right\} 
\]
So it follows that $\mathcal{Q}_{m,1}(\mathbf{H})\subseteq\mathcal{Q}_{m,2}(\mathbf{H})\subseteq\cdots\subseteq\mathcal{Q}_{m,K}(\mathbf{H}),\forall m,K>1.$
Recall the obstacle indicator function in (\ref{eq:esthidvircon}),
then it is obvious that $h_{m,1}\ge h_{m,2}\ge\cdots\ge h_{m,K},\forall m,K>1.$
Given the condition that $\mathbf{H}'\succeq\mathbf{H}^{*}$, we thus
have
\begin{equation}
\mathcal{Q}_{m,k}(\mathbf{H}')\subseteq\mathcal{Q}_{m,k}(\mathbf{H}^{*}),\forall m,k.\label{eq:lmkinlmks}
\end{equation}
From the proof of Theorem \ref{thm:quaharbou} and Fig. \ref{fig:indobj},
denote the $k$th class obstacle height estimation of $m$th grid
\begin{eqnarray}
\hat{h}_{m,k} & = & \sup\arg\min_{h_{m,k}}\bar{f}_{m,k}(h_{m,k};\bm{\theta}^{*},\mathbf{H}_{m,k}^{-})\nonumber \\
 & = & \min(z_{m}^{(i)}),i\in\mathcal{Q}_{m,k}(\mathbf{H}').\label{eq:hhatminz}
\end{eqnarray}
Therefore, by (\ref{eq:lmkinlmks})\textendash (\ref{eq:hhatminz}),
\begin{eqnarray*}
\hat{h}_{m,k} & = & \min(z_{m}^{(i)}),i\in\mathcal{Q}_{m,k}(\mathbf{H}')\\
 & \ge & \min(z_{m}^{(i)}),i\in\mathcal{Q}_{m,k}(\mathbf{H}^{*})=h_{m,k}^{*}.
\end{eqnarray*}
Similarly, for $k=K,K-1,\ldots,1$ and for all $m$, we have the same
results, so we can have $\hat{\mathbf{H}}\succeq\mathbf{H}^{*}$.

As for the $\hat{h}_{m,k}(\bm{\theta}^{*},\mathbf{H}'')\geq\hat{h}_{m,k}(\bm{\theta}^{*},\mathbf{H}')$
under $\mathbf{H}''\succeq\mathbf{H}'\succeq\mathbf{H}^{*}$: When
$\mathbf{H}'$ taking value of $\mathbf{1}H_{\text{max}}$, $\mathbf{H}''$
can only equal to $\mathbf{1}H_{\text{max}}$ and
\begin{eqnarray*}
\hat{h}_{m,k}(\bm{\theta}^{*},\mathbf{H}') & = & \min(z_{m}^{(i)}),i\in\mathcal{Q}_{m,k}(\mathbf{1}H_{\text{max}})\\
 & = & \hat{h}_{m,k}(\bm{\theta}^{*},\mathbf{H}'')
\end{eqnarray*}
so $\hat{h}_{m,k}(\bm{\theta}^{*},\mathbf{H}'')\geq\hat{h}_{m,k}(\bm{\theta}^{*},\mathbf{H}')$
and $H_{\text{max}}\geq\hat{h}_{m,k}(\bm{\theta}^{*},\mathbf{H}')$.
For arbitrary $\mathbf{H}'$, when $\mathbf{H}''\succeq\mathbf{H}'\succeq\mathbf{H}^{*}$
is satisfied, by (\ref{eq:qmkdefforh}) $\mathcal{Q}_{m,k}(\mathbf{H}'')\subseteq\mathcal{Q}_{m,k}(\mathbf{H}')$
and similar to (\ref{eq:hhatminz})
\begin{eqnarray*}
\hat{h}_{m,k}(\bm{\theta}^{*},\mathbf{H}'') & = & \min(z_{m}^{(i)}),i\in\mathcal{Q}_{m,k}(\mathbf{H}'')\\
 & \ge & \min(z_{m}^{(i)}),i\in\mathcal{Q}_{m,k}(\mathbf{H}')=\hat{h}_{m,k}(\bm{\theta}^{*},\mathbf{H}')
\end{eqnarray*}
thus it always holds.

\end{appendices}

\bibliographystyle{IEEEtran}
\bibliography{IEEEabrv,StringDefinitions,JCgroup,ChenBibCV}

\begin{thebibliography}{10}
\providecommand{\url}[1]{#1}
\csname url@samestyle\endcsname
\providecommand{\newblock}{\relax}
\providecommand{\bibinfo}[2]{#2}
\providecommand{\BIBentrySTDinterwordspacing}{\spaceskip=0pt\relax}
\providecommand{\BIBentryALTinterwordstretchfactor}{4}
\providecommand{\BIBentryALTinterwordspacing}{\spaceskip=\fontdimen2\font plus
\BIBentryALTinterwordstretchfactor\fontdimen3\font minus
  \fontdimen4\font\relax}
\providecommand{\BIBforeignlanguage}[2]{{%
\expandafter\ifx\csname l@#1\endcsname\relax
\typeout{** WARNING: IEEEtran.bst: No hyphenation pattern has been}%
\typeout{** loaded for the language `#1'. Using the pattern for}%
\typeout{** the default language instead.}%
\else
\language=\csname l@#1\endcsname
\fi
#2}}
\providecommand{\BIBdecl}{\relax}
\BIBdecl

\bibitem{LimChoSimKim:J20}
Y.-G. Lim, Y.~J. Cho, M.~S. Sim, Y.~Kim, C.-B. Chae, and R.~A. Valenzuela,
  ``Map-based millimeter-wave channel models: An overview, data for {B5G}
  evaluation and machine learning,'' \emph{IEEE Wireless Commun.}, vol.~27,
  no.~4, pp. 54--62, 2020.

\bibitem{MoHuaXu:J19}
X.~Mo, Y.~Huang, and J.~Xu, ``Radio-map-based robust positioning optimization
  for {UAV}-enabled wireless power transfer,'' \emph{IEEE Wireless Commun.
  Lett.}, vol.~9, no.~2, pp. 179--183, 2019.

\bibitem{ZhaZha:J20}
S.~Zhang and R.~Zhang, ``Radio map based {3D} path planning for
  cellular-connected {UAV},'' \emph{IEEE Trans. on Wireless Commun.}, 2020.

\bibitem{EsrGanGes:J20}
O.~Esrafilian, R.~Gangula, and D.~Gesbert, ``{3D} map-based trajectory design
  in {UAV}-aided wireless localization systems,'' \emph{IEEE Internet of Things
  Journal}, 2020.

\bibitem{ZenXu:M21}
Y.~Zeng and X.~Xu, ``Toward environment-aware {6G} communications via channel
  knowledge map,'' \emph{IEEE Wireless Commun.}, vol.~28, no.~3, pp. 84--91,
  2021.

\bibitem{XiaWanXuXu:J22}
X.~Xia, Y.~Wang, K.~Xu, Y.~Xu, and W.~Xie, ``Toward digitalizing the wireless
  environment: A unified {A2G} information and energy delivery framework based
  on binary channel feature map,'' \emph{{IEEE} Trans. Wireless Commun.},
  vol.~21, no.~8, pp. 6448--6463, 2022.

\bibitem{ZenXuJinZha:J21}
Y.~Zeng, X.~Xu, S.~Jin, and R.~Zhang, ``Simultaneous navigation and radio
  mapping for cellular-connected {UAV} with deep reinforcement learning,''
  \emph{{IEEE} Trans. Wireless Commun.}, vol.~20, no.~7, pp. 4205--4220, 2021.

\bibitem{HuCaiLiuYu:J20}
Q.~Hu, Y.~Cai, A.~Liu, G.~Yu, and G.~Y. Li, ``Low-complexity joint resource
  allocation and trajectory design for {UAV}-aided relay networks with the
  segmented ray-tracing channel model,'' \emph{IEEE Trans. on Wireless
  Commun.}, vol.~19, no.~9, pp. 6179--6195, 2020.

\bibitem{AlhKanJam:C14}
A.~Al-Hourani, S.~Kandeepan, and A.~Jamalipour, ``Modeling air-to-ground path
  loss for low altitude platforms in urban environments,'' in \emph{Proc. IEEE
  Global Telecomm. Conf.}, Austin, TX, USA, Dec. 2014.

\bibitem{YouZha:J20}
C.~You and R.~Zhang, ``Hybrid offline-online design for {UAV}-enabled data
  harvesting in probabilistic {L}o{S} channel,'' \emph{IEEE Trans. on Wireless
  Commun.}, vol.~19, no.~6, pp. 3753--3768, 2020.

\bibitem{JiaMaLiuDou:J16}
Q.~Jiang, Y.~Ma, K.~Liu, and Z.~Dou, ``A probabilistic radio map construction
  scheme for crowdsourcing-based fingerprinting localization,'' \emph{IEEE
  Sensors Journal}, vol.~16, no.~10, pp. 3764--3774, 2016.

\bibitem{ZhaMa:J21}
Y.~Zhang and L.~Ma, ``Radio map crowdsourcing update method using sparse
  representation and low rank matrix recovery for {WLAN} indoor positioning
  system,'' \emph{IEEE Wireless Commun. Lett.}, 2021.

\bibitem{NiNgu:C08}
K.~S. Ni and T.~Q. Nguyen, ``Adaptable {K}-nearest neighbor for image
  interpolation,'' in \emph{Proc. IEEE Int. Conf. Acoustics, Speech, and Signal
  Processing}, Las Vegas, NV, USA, Mar. 2008.

\bibitem{DenJiaZhoCui:C18}
R.~Deng, Z.~Jiang, S.~Zhou, S.~Cui, and Z.~Niu, ``A two-step learning and
  interpolation method for location-based channel database construction,'' in
  \emph{Proc. IEEE Global Telecommu. Conf.}, Abu Dhabi, United Arab, Dec. 2018.

\bibitem{BraJemForMou:J16}
H.~Braham, S.~B. Jemaa, G.~Fort, E.~Moulines, and B.~Sayrac, ``Fixed rank
  {K}riging for cellular coverage analysis,'' \emph{IEEE Trans. on Veh.
  Technol.}, vol.~66, no.~5, pp. 4212--4222, 2016.

\bibitem{SatFuj:J17}
K.~Sato and T.~Fujii, ``Kriging-based interference power constraint:
  {I}ntegrated design of the radio environment map and transmission power,''
  \emph{IEEE Trans. on Cognitive Commun. and Networking}, vol.~3, no.~1, pp.
  13--25, 2017.

\bibitem{MasFarImr:C19}
U.~Masood, H.~Farooq, and A.~Imran, ``A machine learning based 3{D} propagation
  model for intelligent future cellular networks,'' in \emph{Proc. IEEE Global
  Commun. Conf.}, Waikoloa, HI, USA, USA, Dec. 2019.

\bibitem{MasMarCheLi:J19}
A.~Massa, D.~Marcantonio, X.~Chen, M.~Li, and M.~Salucci, ``{DNNs} as applied
  to electromagnetics, antennas, and propagation--{A} review,'' \emph{IEEE
  Antennas Wireless Propag. Lett.}, vol.~18, no.~11, pp. 2225--2229, 2019.

\bibitem{TegRom:J20}
Y.~Teganya and D.~Romero, ``Deep completion autoencoders for radio map
  estimation,'' vol.~21, no.~3, pp. 1710--1724, 2021.

\bibitem{LevYapKutCai:J21}
R.~Levie, {\c{C}}.~Yapar, G.~Kutyniok, and G.~Caire, ``{RadioUNet}: Fast radio
  map estimation with convolutional neural networks,'' \emph{{IEEE} Trans.
  Wireless Commun.}, vol.~20, pp. 4001--4015, 2021.

\bibitem{ShrFuHon:J22}
S.~Shrestha, X.~Fu, and M.~Hong, ``Deep spectrum cartography: Completing radio
  map tensors using learned neural models,'' vol.~70, pp. 1170--1184, 2022.

\bibitem{FanZhoWei:J19}
C.~Fan, X.~Zhong, and J.~Wei, ``{BS}-to-ground channel reconstruction with {3D}
  obstacle map based on {RSS} measurements,'' \emph{IEEE Access}, vol.~7, pp.
  99\,633--99\,641, 2019.

\bibitem{EsrGanGes:C21}
O.~Esrafilian, R.~Gangula, and D.~Gesbert, ``Map reconstruction in {UAV}
  networks via fusion of radio and depth measurements,'' in \emph{Proc. IEEE
  Int. Conf. Commun.}, Montreal, QC, Canada, Jun. 2021.

\bibitem{EleElsYou:C11}
A.~Eleryan, M.~Elsabagh, and M.~Youssef, ``Synthetic generation of radio maps
  for device-free passive localization,'' in \emph{Proc. IEEE Global Telecommu.
  Conf.}, Houston, TX, USA, USA, Dec. 2011.

\bibitem{SugSasOsaFur:J20}
N.~Suga, R.~Sasaki, M.~Osawa, and T.~Furukawa, ``Ray tracing acceleration using
  total variation norm minimization for radio map simulation,'' \emph{IEEE
  Wireless Commun. Lett.}, 2020.

\bibitem{CheYatGes:C17}
J.~Chen, U.~Yatnalli, and D.~Gesbert, ``Learning radio maps for {UAV}-aided
  wireless networks: A segmented regression approach,'' in \emph{Proc. IEEE
  Int. Conf. Commun.}, Paris, France, May 2017.

\bibitem{CheEsrGesMit:C17}
J.~Chen, O.~Esrafilian, D.~Gesbert, and U.~Mitra, ``Efficient algorithms for
  air-to-ground channel reconstruction in {UAV}-aided communications,'' in
  \emph{Proc. IEEE Global Telecomm. Conf.}, Singapore, Dec. 2017, {Wi}-{UAV}
  workshop.

\bibitem{ZhaChe:C20}
B.~Zhang and J.~Chen, ``Constructing radio maps for {UAV} communications via
  dynamic resolution virtual obstacle maps,'' in \emph{Proc. Int. Workshop on
  Signal Process. Adv. in Wireless Commun.}, Atlanta, GA, USA, May 2020.

\bibitem{kay1993fundamentals}
S.~M. Kay, \emph{Fundamentals of statistical signal processing}.\hskip 1em plus
  0.5em minus 0.4em\relax Prentice Hall PTR, 1993.

\bibitem{Fan:b96}
J.~Fan, \emph{Local polynomial modelling and its applications: {M}onographs on
  statistics and applied probability 66}.\hskip 1em plus 0.5em minus
  0.4em\relax Routledge, 1996.

\end{thebibliography}

\end{document}